\newif\ifcomments   %
\newif\ifanon       %
\newif\ifcrypto     %
\newif\ifllncs      %
  \theoremstyle{plain}
  \newtheorem{theorem}{Theorem}[section]
  \newtheorem{lemma}[theorem]{Lemma}
  \newtheorem{corollary}[theorem]{Corollary}
  \newtheorem{definition}[theorem]{Definition}
  \newtheorem{remark}[theorem]{Remark}
  \newcommand{\email}[1]{\href{mailto:#1}{\texttt{#1}}}
\newtheorem{fact}[theorem]{Fact}
  \newcommand{\todo}[1]{\textcolor{red}{TODO: #1}}
    \renewcommand{\note}[1]{\textcolor{blue}{NOTE: #1}}
    \newcommand{\note}[1]{\textcolor{blue}{NOTE: #1}}
   \newcommand{\pnote}[1]{}
  \newcommand{\luowen}[1]{}
  \newcommand{\hnote}[1]{}  
  \newcommand{\todo}[1]{}
  \ifllncs\renewcommand{\note}[1]{}\else\newcommand{\note}[1]{}\fi
  \newcommand{\pnote}[1]{}
  \newcommand{\luowen}[1]{}
  \newcommand{\hnote}[1]{}  
\newcommand{\equad}{\mathrel{\phantom{=}}}
\newcommand{\hybrid}{\mathsf{H}}
\newcommand{\ketbra}[2]{\ket{#1}\!\bra{#2}}
\renewcommand{\cal}[1]{\mathcal{#1}}
\newcommand{\R}{\mathbb{R}}
\newcommand{\C}{\mathbb{C}}
\newcommand{\N}{\mathbb{N}}
\newcommand{\E}{\mathop{\mathbb{E}}}
\newcommand{\Tr}{\mathrm{Tr}}
\newcommand{\reg}[1]{\mathsf{#1}}
\newcommand{\eps}{\varepsilon}
\newcommand{\Haar}{\mathscr{H}}
\newcommand{\Test}{\mathsf{Test}}
\newcommand{\Enc}{\mathsf{Enc}}
\newcommand{\Dec}{\mathsf{Dec}}
\DeclareMathOperator{\TD}{TD}
\newcommand{\TraceDist}[2]{{\TD\left(#1, #2\right)}}
\newcommand{\norm}[1]{{\left\lVert#1\right\rVert}}
\newcommand{\operatornorm}[1]{{\norm{#1}}}
\newcommand{\bit}{{\{0, 1\}}}
\newcommand{\dmx}{\mathcal{D}}
\newcommand{\ip}[2]{\langle #1 | #2 \rangle}
\newcommand{\secparam}{\lambda}
\newcommand{\realexpt}{\mathsf{RealExpt}}
\newcommand{\idealexpt}{\mathsf{IdealExpt}}
\newcommand{\commit}{\mathsf{Commit}}
\newcommand{\reveal}{\mathsf{Reveal}}
\newcommand{\ciphertext}{{\bf c}}
\newcommand{\comlen}{{7\lambda}}
\DeclareFontFamily{U}{skulls}{}
\DeclareFontShape{U}{skulls}{m}{n}{ <-> skull }{}
\date{}
\title{Cryptography from Pseudorandom Quantum States}
  \newcommand{\citemanuscript}{\cite{Anonymous21}}
  \author{}
    \institute{}
  \newcommand{\citemanuscript}{\cite{AQY21}}
  \author{
  Prabhanjan Ananth\thanks{\email{prabhanjan@cs.ucsb.edu}}\\
  \small{\sl UCSB}
  \and Luowen Qian\thanks{\email{luowenq@bu.edu}}\\
   \small{\sl Boston University}
   \and Henry Yuen\thanks{\email{hyuen@cs.columbia.edu}}\\
   \small{\sl Columbia University}\vspace*{10pt}
  }
\begin{document}
\maketitle

\begin{abstract}
Pseudorandom states, introduced by Ji, Liu and Song (Crypto'18), are efficiently-computable quantum states that are computationally indistinguishable from Haar-random states. 
One-way functions imply the existence of pseudorandom states, but Kretschmer (TQC'20) recently constructed an oracle relative to which there are no one-way functions but pseudorandom states still exist. 
Motivated by this, we study the intriguing possibility of basing interesting cryptographic tasks on pseudorandom states.

We construct, assuming the existence of pseudorandom state generators that map a $\lambda$-bit seed to a $\omega(\log\secparam)$-qubit state, (a) statistically binding and computationally hiding commitments and (b) pseudo one-time encryption schemes. A consequence of (a) is that pseudorandom states are sufficient to construct maliciously secure multiparty computation protocols in the dishonest majority setting.

Our constructions are derived via a new notion called {\em pseudorandom function-like states} (PRFS), a generalization of pseudorandom states that parallels the classical notion of pseudorandom functions. Beyond the above two applications, we believe our notion can effectively replace pseudorandom functions in many other cryptographic applications.

\end{abstract}

\newpage

\tableofcontents

\newpage

\section{Introduction}
\label{sec:intro}

\noindent Assumptions are the bedrock of designing provably secure cryptographic constructions. Over the years, theoretical cryptographers have pondered over the precise assumptions needed to achieve cryptographic tasks, often losing sleep over this~\cite{Kil98}. The celebrated work of Goldreich~\cite{Gol90} shows that most interesting cryptographic tasks (encryption, commitments, pseudorandom generators, etc.) imply the existence of one-way functions, i.e., functions that can be efficiently computed in the forward direction but cannot be efficiently inverted. Thus it appears that the existence of one-way functions is a \emph{minimal} and \emph{necessary} assumption in cryptography.

Quantum information processing presents new opportunities for cryptography. Specifically, in many contexts the assumptions necessary for cryptographic tasks can be weakened with the help of quantum resources. To illustrate, the seminal work of Bennett and Brassard~\cite{BB84} showed that key exchange can be achieved unconditionally (i.e. without any computational assumptions) using quantum communication. In contrast, key exchange is known to require computational assumptions if the parties are restricted to classical communication. More recently, the work of Bartusek, Coladangelo, Khurana, and Ma~\cite{BartusekCKM21a} and that of Grilo, Lin, Song and Vaikuntanathan~\cite{GLSV21} demonstrate that quantum protocols for secure multiparty computation can be constructed from post-quantum one-way functions. On the other hand classical protocols for secure computation cannot be based (in a black-box fashion) on one-way functions alone~\cite{impagliazzo1989limits}.

These examples suggest that we revisit our belief about the necessity of certain cryptographic assumptions for quantum cryptographic tasks (tasks that make use of quantum computation and/or quantum communication). Specifically, it is not even clear whether one-way functions are even necessary in the quantum setting --- Goldreich's result~\cite{Gol90} only applies to classical cryptographic primitives and protocols.

Our work continues the research agenda carried out by our predecessors~\cite{Wei83,BB84,BBCS91,GLSV21,BartusekCKM21a}: {\em can we achieve cryptographic tasks using quantum communication in a world without one-way functions\footnote{Both the works~\cite{GLSV21,BartusekCKM21a} explicitly raised the question of basing secure computation on assumptions weaker than one-way functions.}?} %

\paragraph{Pseudorandom Quantum States.} 
Motivated by the question above, we turn to the notion of pseudorandom quantum states (abbreviated PRS) introduced by Ji, Liu and Song~\cite{ji2018pseudorandom}. A \emph{PRS generator} $G$ is a quantum polynomial-time (QPT) algorithm that, given input a $\secparam$-bit key, outputs an $n$-qubit quantum state with the guarantee that it is computationally indistinguishable from an $n$-qubit Haar random state (i.e. the uniform distribution over $n$-qubit pure states), even with many copies.
Ji, Liu and Song (and subsequently improved by Brakerski and Shmueli~\cite{BS19,BrakerskiS20}) show the existence of PRS assuming post-quantum one-way functions.

This notion is analogous to pseudorandom generators (PRGs) from classical cryptography which take as input a random seed of length $\secparam$, and deterministically outputs a larger string of length $n > \secparam$ that is computationally indistinguishable from a string sampled from the uniform distribution. Despite the analogy, it has not been obvious whether pseudorandom quantum states have much cryptographic utility outside of quantum money \cite{ji2018pseudorandom} (unlike PRGs, which are ubiquitous in cryptography).
Understanding the consequences of pseudorandom quantum states is particularly important in light of a recent result by Kretschmer~\cite{Kretschmer21}, who showed that there is a relativized world where $BQP = QMA$ (and thus post-quantum one-way functions do not exist) while pseudorandom states exist.
Kretschmer's result motivates us to focus the aforementioned research agenda on the following question: {\em what cryptographic tasks can be based solely on pseudorandom quantum states?}

\subsection{Our Results}
Our contributions in a nutshell are as follows:
\begin{itemize}
    \item We propose a new notion called \emph{pseudorandom function-like quantum states (PRFS)}.
    \item Using PRFS, we show how to build (a) statistically binding commitments and (b) pseudo one-time encryption schemes. As a consequence of (a), we obtain maliciously secure computation in the dishonest majority setting. 
    \item Finally, we show that for a certain range of parameters -- the same as what is needed for the above applications -- we can construct PRFS from a PRS. 
\end{itemize}
\noindent Before we present the definition of PRFS, we first highlight the need for defining a new notion by describing the challenges for constructing primitives directly from PRS.

\subsubsection{Challenges for Basing Primitives On PRS}
Although the closest classical analogue of a PRS generator is a PRG, the analogy breaks down in several critical ways. This makes it challenging to use PRS generators in the same way that PRGs are used throughout cryptography.
\par Specifically, PRS generators appear very \emph{rigid}, meaning that it seems challenging to take an existing PRS generator and generically increase or decrease its output length. Moreover, it is difficult to use output qubits of a PRS generator independently.

\paragraph{Inability to Stretch the Output.} A fundamental result about PRGs is that their \emph{stretch} (the output length as a function of the key length) can be amplified arbitrarily. In other words, given a PRG $G$ that maps $\secparam$ random bits to at least $(\secparam + 1)$ pseudorandom bits, one can construct a PRG with any polynomial output length.
This fact is implicitly used everywhere in cryptography; specifically, it gives us the flexibility to choose the appropriate stretch of PRG relevant for the application without having to worry about the underlying hardness assumptions.

If PRS generators are analogous to PRGs, then one would expect that a similar amplification result to hold: the existence of PRS with nontrivial output length would (hopefully) imply the existence of PRS with arbitrarily large output length. The natural approach to amplify the stretch of a PRG by iteratively composing it with itself does not immediately work with PRS for a number of reasons; for one, a PRS generator takes as input a classical key while its output is a quantum state! 

\paragraph{Inability to Shrink the Output.} To add insult to injury, it is not even obvious how to \emph{shrink} the output length of a PRS generator; this was also observed by Brakerski and Shmueli~\cite{BB21}. Classically, one can always discard bits from the output of a PRG, and the result is still obviously a PRG. However, discarding a single qubit of an $n$-qubit pseudorandom state will leave a mixed state that is easily distinguishable from an $(n-1)$-qubit Haar-random state.

\paragraph{Inability to Separate the Output.} Since the PRS output is highly entangled, it seems difficult to use the individual output qubits. As an example, suppose we want to encrypt a message of length $\ell$. In the classical setting, an $\ell$-bit output PRG can be used to encrypt a message of length $\ell$ by xor-ing the $i^{th}$ PRG output bit with the $i^{th}$ bit of the message. Implicitly, we are using the fact that the output of a PRG can be viewed a tensor product of bits and this feature of classical PRGs is mirrored by our notion of PRFS (explained next). On the other hand, if we have a single (entangled) PRS state (irrespective of the number of qubits it represents), it is unclear how to use each qubit to encode a bit; any operations performed on a single qubit could affect the other qubits that are entangled with this qubit.

\subsubsection{New Notion: Pseudorandom Function-Like States} Pseudorandom function-like states (abbreviated PRFS) is a generalization of PRS, where the same key $k$ can be used to generate many pseudorandom states. In more details, a \emph{$(d,n)$-PRFS generator} $G$ is a QPT algorithm that, given as input a key $k \in \bit^\secparam$ and an input $x \in \bit^d$, outputs a $n$-qubit quantum state $\ket{\psi_{k,x}}$,
satisfying the following pseudorandomness property: no efficient adversaries can distinguish between multiple copies of the output states $\left(\ket{\psi_{k,x_1}},\ldots,\ket{\psi_{k,x_s}} \right)$ from a collection of states $\left(\ket{\vartheta_1},\ldots,\ket{\vartheta_s}\right)$ where each $\ket{\vartheta_i}$ is sampled independently from the Haar distribution; furthermore, the indistinguishability holds even if the inputs $x_1,\ldots,x_s$ are chosen by the adversary.
This is formalized in \Cref{def:prfs}.

\paragraph{An Alternate Perspective: Tensor Product PRS generators.}
If PRS generators are analogous to classical pseudorandom generators, then PRFS generators are analogous to classical pseudorandom \emph{functions} (hence the name pseudorandom \emph{function-like}). A PRS generator outputs a single state per key $k$. On the other hand, we can think of PRFS as a {\em relaxed} notion of PRS generator that on input $k$ outputs a \emph{tensor product} of states $\ket{\psi_0} \otimes \ket{\psi_1} \otimes \cdots \otimes \ket{\psi_{2^{d}-1}}$ where each $\ket{\psi_i}$, is indistinguishable from a Haar-random state. 
\par The tensor product feature is quite useful in applications, as we will see shortly. 

\paragraph{Additional Observations.} Some additional observations of PRFS are in order: 
\begin{itemize} 

\item Assuming one-way functions, we can generically construct $(d,n)$-PRFS from any $n$-qubit PRS for any polynomial $d, n$. To compute PRFS on key $k$ and input $x$, first compute a classical PRF on $(k,x)$ and use the resulting output as a key for the $n$-qubit PRS. Since $n$-qubit PRS can be based on (post-quantum) one-way functions~\cite{ji2018pseudorandom,BrakerskiS20}, this shows that even PRFS can be based on (post-quantum) one-way functions. 
\item In the other direction, we can construct $n$-qubit PRS from any $(d,n)$-qubit PRFS. On input $k$, the PRS simply outputs the result of PRFS on input $(k,0)$. 

\item Another interesting aspect about PRFS is that it too, like PRS, is separated from (post-quantum) one-way functions. This can be obtained by a generalization of Kretschmer's result~\citemanuscript{}.

\end{itemize} 

\subsubsection{Implications}
We show that PRFS can effectively replace the usage of pseudorandom generators and pseudorandom functions in many primitives one learns about in ``Cryptography 101''. 
Specifically, we focus on two applications of PRFS generators. Later we will show that in fact that we can achieve these two applications from PRS generators only. 

\paragraph{Implication 1. One-time Encryption with Short Keys and Long Messages.}
As a starter illustration of the usefulness of PRFS, we construct from a PRFS generator $G$ a one-time encryption scheme for classical messages. The important feature of this construction is the fact that the message length is much larger than the key length. This is impossible to achieve information-theoretically, even in the quantum setting. This type of one-time encryption schemes, also referred to as {\em pseudo one-time pad}, is already quite useful, as it implies garbling schemes for P/poly~\cite{BMR90} and even garbling for quantum circuits~\cite{brakerski2020quantum}. %

\begin{theorem}[Informal; Pseudo One-time Pad]
\label{thm:intro:pqotp}
Assuming the existence of $(d,n)$-PRFS with\footnote{Recall that $\secparam$ is the key length.} $d=O(\log\secparam)$ and $n=\omega(\log\secparam)$,  there exists a one-time encryption scheme for messages of length $\ell=2^d$.
\end{theorem}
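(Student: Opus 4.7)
The plan is to construct a pseudo one-time pad whose ciphertext is a tensor of PRFS outputs, with one bit of the message encoded per block by the application of a fixed, ``pseudorandom-killing'' unitary. Fix any efficiently implementable $n$-qubit unitary $P$ with $\Tr(P) = 0$ (e.g.\ $P = X^{\otimes n}$). Given a $(d,n)$-PRFS generator $G$, key $k \in \bit^\secparam$, and message $m \in \bit^{2^d}$ indexed by $i \in \bit^d$, set
\[
  \Enc(k,m) \;=\; \bigotimes_{i \in \bit^d} P^{m_i}\, G(k,i).
\]
Decryption recomputes $\ket{\psi_{k,i}} := G(k,i)$ coherently from $k$ and, on each block $c_i$, performs the binary projective measurement $\{\ketbra{\psi_{k,i}}{\psi_{k,i}},\,I-\ketbra{\psi_{k,i}}{\psi_{k,i}}\}$, outputting $0$ on accept and $1$ otherwise. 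Operationally this is implementable by running the coherent purification of $G$ in reverse on $c_i$ and testing for the all-zero string, or equivalently by SWAP tests against a fresh copy of $\ket{\psi_{k,i}}$.

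For correctness, the case $m_i = 0$ always decrypts correctly, while the $m_i = 1$ case fails with probability exactly $|\bra{\psi_{k,i}} P \ket{\psi_{k,i}}|^2$. The key step is to bound $\E_k |\bra{\psi_{k,i}} P \ket{\psi_{k,i}}|^2$ by a negligible function. I would do this by reduction to PRFS security: a QPT distinguisher that draws polynomially many copies of $\ket{\psi_{k,i}}$ and estimates this overlap via SWAP tests between $\ket{\psi_{k,i}}$ and $P\ket{\psi_{k,i}}$ would separate PRFS outputs from Haar-random states if the expected overlap differed non-negligibly between the two cases. For Haar-random $\ket{\vartheta}$, the standard second-moment computation gives
\[
  \E_{\vartheta \sim \Haar}|\bra{\vartheta} P \ket{\vartheta}|^2 \;=\; \frac{|\Tr(P)|^2 + \Tr(P P^\dagger)}{2^n(2^n+1)} \;=\; O(2^{-n}),
\]
which is negligible since $n = \omega(\log\secparam)$. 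A union bound over the $2^d = \mathrm{poly}(\secparam)$ blocks together with Markov's inequality then yields correctness with overwhelming probability over $k$.

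For security, the plan is a single hybrid replacement. The set of PRFS queries is the fixed, polynomial-size set $\{0,1\}^d$, independent of $m$, so PRFS security lets us replace $\bigotimes_i G(k,i)$ with $\bigotimes_i \ket{\vartheta_i}$ for independent Haar-random $\ket{\vartheta_i}$, incurring only negligible distinguishing advantage. In this hybrid, each ciphertext block becomes $P^{m_i}\ket{\vartheta_i}$, which by unitary invariance of the Haar measure is distributed identically to $\ket{\vartheta_i}$. Hence the hybrid ciphertext distribution is independent of $m$, and encryptions of any two messages are computationally indistinguishable, giving one-time security.

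The main obstacle is the correctness argument, which has to lift PRFS computational indistinguishability to a quantitative bound on $\E_k|\bra{\psi_{k,i}}P\ket{\psi_{k,i}}|^2$ via a SWAP-test-based estimator and then aggregate over all $2^d$ blocks. The security reduction, by contrast, is an almost immediate consequence of PRFS security together with Haar invariance, so I expect the real work to lie entirely on the correctness side.
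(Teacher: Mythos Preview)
Your construction is correct but takes a different route from the paper's. The paper encodes each message bit in the PRFS \emph{input}: it uses a $(d,n)$-PRFS with $d \geq \lceil \log \ell \rceil + 1$, interprets inputs as pairs $(i,b) \in [\ell] \times \{0,1\}$, and sets $\sigma_i = G(k,(i,x_i))$; decryption runs the generic $\Test$ procedure on $(k,(i,0),\sigma_i)$ and outputs $0$ on accept. You instead encode the bit on the PRFS \emph{output} via a fixed traceless unitary, which is closer in spirit to the paper's commitment construction than to its one-time pad. The paper's correctness falls out directly from its self-testing lemma, which bounds $\Pr_k[\Test(k,x,G(k,y))=1] \le 2^{-n} + \mathrm{negl}$ for $x \ne y$ via the near-orthogonality of PRFS outputs on distinct inputs; your correctness goes through the SWAP-test reduction and the Haar second-moment computation for $\E_\vartheta|\bra{\vartheta}P\ket{\vartheta}|^2$, which is the same mechanism applied to the pair $(\ket{\psi_{k,i}}, P\ket{\psi_{k,i}})$ rather than $(G(k,x),G(k,y))$. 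Your approach buys a factor of two in message length ($\ell = 2^d$ versus the paper's $\ell = 2^{d-1}$) at the cost of a slightly less modular correctness proof. Two small caveats: your claim that $m_i=0$ ``always'' decrypts correctly implicitly assumes perfect state generation (the paper handles the general case via its $\Test$ lemma), and a single SWAP test against a fresh copy is not literally equivalent to the projective measurement you describe, though either suffices once you amplify.
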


We emphasize that in the implication to one-time encryption, we only require PRFS with logarithmic-length inputs.

The construction is simple and a direct adaptation of the construction of one-time encryption from pseudorandom generators. To encrypt a message $x$ of length $\ell \gg \secparam$, output the state $G(k,(1,x_i)) \otimes \cdots \otimes G(k,(\ell,x_{\ell}))$, where $k \in \{0,1\}^{\secparam}$ is the symmetric key shared by the encryptor and the decryptor. The decryptor using the secret key $k$ can decode\footnote{In the technical sections, we define a QPT algorithm $\Test$ that given a state $\rho$ along with $k$, $x$, determines if $\rho$ is equal to the output $G(k,x)$. We show the existence of such a test algorithm for any PRFS.} the message $x$. The security of the encryption scheme follows from the pseudorandomness of PRFS.

\paragraph{Implication 2. Statistically binding commitment schemes.} We focus on designing commitment schemes with statistical binding and computational hiding properties. In the classical setting, this notion of commitment schemes can be constructed from any length-tripling PRG~\cite{Naor91}.
Recently, two independent works~\cite{GLSV21,BartusekCKM21a} showed that commitment schemes with aforementioned properties imply maliciously secure multiparty computation protocols with quantum communication in the dishonest majority setting.
Of particular interest is the work of~\cite{BartusekCKM21a} who construct the multiparty computation protocol using the commitment scheme as a \emph{black box}. In particular, their construction works even when the commitment scheme uses quantum communication. They then instantiate the underlying commitment scheme from post-quantum one-way functions.

We design commitment schemes based on PRFS instead of one-way functions. First, we present a new definition of the statistical binding property for commitment schemes that utilize quantum communication. The notion of binding for quantum commitment schemes is more subtle than that for classical commitment schemes and has been extensively studied in prior works~\cite{YWLQ15,Unruh16,FUYZ20,BartusekCKM21a,BB21}. Our definition generalizes all previously known definitions of statistical binding for quantum commitments, and suffice for applications such as secure multiparty computation. (Our definition is formally presented in \Cref{def:stat:binding}).

Then we show, assuming the existence of PRFS with certain parameters, the existence of quantum commitment schemes satisfying our definition.
\begin{theorem}[Informal]
\label{thm:intro:comm}
Assuming the existence of $(d,n)$-PRFS\footnote{To simplify the analysis, there is an additional technical property of the PRFS not mentioned here that is required by our construction, called \emph{recognizable abort} (\Cref{def:classicalgen}). All known constructions of PRFS and PRS (including ours) have the recognizable abort property.} where $2^{d} \cdot n \geq \comlen$, 
there exists a statistically binding and computationally hiding commitment scheme.
\end{theorem}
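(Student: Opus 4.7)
The plan is to adapt Naor's classical commitment scheme to the quantum setting, replacing the length-tripling pseudorandom generator with a PRFS generator and XOR-masking with a quantum one-time pad. Set $m := 2^d \cdot n \geq \comlen$. In the commit phase, the receiver samples $r = (a, b) \in \bit^{2m}$ uniformly and sends it. The committer samples a uniform key $k \in \bit^\secparam$, prepares the $m$-qubit product state $\ket{\Psi_k} := \bigotimes_{x \in \bit^d} G(k, x)$, and to commit to $c \in \bit$ sends the register holding $P_r^c \ket{\Psi_k}$, where $P_r := X^a Z^b$. In the reveal phase, the committer sends $(k, c)$; the receiver applies $(P_r^c)^{\dagger}$ and then runs the PRFS $\Test$ algorithm on each of the $2^d$ blocks of $n$ qubits, accepting iff every block passes.

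Computational hiding will follow by combining PRFS security with the quantum one-time pad. Applying PRFS pseudorandomness with an adversary that queries on all of $\bit^d$, the mixture $\E_k[\ketbra{\Psi_k}{\Psi_k}]$ is computationally indistinguishable from $\E\bigl[\bigotimes_x \ketbra{\vartheta_x}{\vartheta_x}\bigr]$, where the $\ket{\vartheta_x}$ are independent Haar-random $n$-qubit states; the latter mixture equals the maximally mixed state $I/2^m$. Because $P_r (I/2^m) P_r = I/2^m$ for every Pauli $P_r$, the commit messages for $c = 0$ and for $c = 1$ are both computationally close to $I/2^m$ for every fixed $r$, so any efficient receiver's distinguishing advantage is negligible.

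For statistical binding I run the Pauli-twirl analogue of Naor's counting argument. For any fixed $m$-qubit pure states $\ket{\alpha}, \ket{\beta}$, the quantum one-time pad identity $\E_{a, b}[X^a Z^b \ketbra{\beta}{\beta} Z^b X^a] = I/2^m$ yields $\E_r[|\langle \alpha | P_r | \beta \rangle|^2] = 2^{-m} \leq 2^{-\comlen}$. By Markov's inequality and a union bound over the $2^{2\secparam}$ pairs $(k_0, k_1) \in \bit^{\secparam} \times \bit^{\secparam}$, with probability $1 - 2^{-\Omega(\secparam)}$ over $r$ one has $|\langle \Psi_{k_0} | P_r | \Psi_{k_1} \rangle| \leq 2^{-\Omega(\secparam)}$ uniformly in $(k_0, k_1)$. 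For any cheating committer whose commit register is $\rho_A$, the optimal success probability of a $\Test$-based opening to $(k, c)$ is (idealizing $\Test$ as a projective measurement onto $\ketbra{\Psi_k}{\Psi_k}$) bounded by $\langle \Psi_k | P_r^c \rho_A P_r^c | \Psi_k \rangle$, so the sum of the best opening probabilities to $c = 0$ and $c = 1$ is at most the operator norm $\operatornorm{\ketbra{\Psi_{k_0}}{\Psi_{k_0}} + P_r \ketbra{\Psi_{k_1}}{\Psi_{k_1}} P_r} = 1 + |\langle \Psi_{k_0} | P_r | \Psi_{k_1} \rangle|$ at the maximizing pair, which is $1 + \mathsf{negl}(\secparam)$.

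The main technical obstacle will be to bridge this idealized projector-based bound with the actual (only approximately projective) behavior of the $\Test$ procedure, and to translate the argument into the language of the general quantum binding notion of \Cref{def:stat:binding} rather than the simpler sum-binding statement; the recognizable abort property of PRFS is what allows the receiver to detect and handle aborts during the commit phase without weakening the binding guarantee.
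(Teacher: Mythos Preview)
Your construction and hiding argument are essentially the paper's: a Naor-style scheme with the PRG replaced by the PRFS product state and XOR replaced by a random Pauli, with hiding following from PRFS security plus Pauli invariance of the (Haar-averaged) state. The paper phrases the middle hybrid via unitary invariance of the Haar measure rather than averaging to the maximally mixed state, but the two are equivalent here.

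Your binding sketch has a genuine gap. Bounding the sum of opening probabilities by $\operatornorm{\ketbra{\Psi_{k_0}}{\Psi_{k_0}} + P_r\ketbra{\Psi_{k_1}}{\Psi_{k_1}}P_r}$ at a single maximizing pair $(k_0,k_1)$ is not valid: a malicious committer holds a register entangled with the commitment and can measure it to choose $k$ adaptively, so the correct upper bounds are $p_b \le \Tr(\Pi_b\,\rho_R)$ where $\Pi_b$ projects onto the \emph{span} of $\{P^b\ket{\Psi_k}:k\in\bit^\lambda\}$, and hence $p_0+p_1 \le \operatornorm{\Pi_0+\Pi_1}$. The paper takes exactly this route (\Cref{clm:povm:closeness}): it bounds $\Tr(\Pi_0\Pi_1) \le 2^{2\lambda}\max_{i,j}|\bra{u_i}P\ket{u_j}|^2$ for an orthonormal basis $\{\ket{u_i}\}$ of the span and then uses your Pauli-twirl identity plus Markov and union bound. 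Your pairwise overlap bound is the right raw ingredient, but it must be fed through the span projectors, not rank-one projectors.

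Two further points. First, the paper does not stop at sum-binding: to get \Cref{def:stat:binding} it explicitly builds the extractor as the general measurement $\{\sqrt{\Lambda_0},\sqrt{\Lambda_1},\sqrt{\Lambda_\bot}\}$ with $\Lambda_b = \Pi_b/\operatornorm{\Pi_0+\Pi_1}$, and then shows via an operator calculation that the real and ideal experiments are $O(2^{-\lambda})$-close in trace distance. This is where most of the technical work lies, and it is not a routine upgrade from sum-binding. Second, your description of the role of recognizable abort is off: it is not used so the receiver can ``detect and handle aborts during the commit phase.'' Its purpose is to guarantee that the $\Test$ procedure implements a \emph{scaled rank-one projector} $\eta_{k,x}^2\ketbra{\psi_{k,x}}{\psi_{k,x}}$ (\Cref{lem:test,lem:test-product}), which is exactly what lets the binding analysis treat the receiver's verification as (a scaling of) a projector onto the correct pure state.
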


\noindent By plugging our commitment scheme into the framework of~\cite{BartusekCKM21a}, we obtain the following corollary.

\begin{corollary}[Informal]
\label{thm:intro:mpc}
Assumuing the existence of $(d,n)$-PRFS with $2^d \cdot n \ge \comlen$, there exists a maliciously secure multiparty computation protocol in the dishonest majority setting. 
\end{corollary}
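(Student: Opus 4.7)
The plan is to combine \Cref{thm:intro:comm} with the black-box compiler of Bartusek, Coladangelo, Khurana, and Ma~\cite{BartusekCKM21a}. Since \cite{BartusekCKM21a} establishes that maliciously secure multiparty computation in the dishonest majority setting can be built in a black-box manner from any statistically binding, computationally hiding (possibly quantum-communication) commitment scheme, it suffices to supply a commitment scheme meeting their requirements from the assumed PRFS.

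Concretely, I would proceed in two steps. First, invoke \Cref{thm:intro:comm} on the hypothesized $(d,n)$-PRFS satisfying $2^d \cdot n \ge \comlen$ to obtain a quantum commitment scheme $(\commit, \reveal)$ that is statistically binding (in the sense of \Cref{def:stat:binding}) and computationally hiding. Second, plug this scheme into the \cite{BartusekCKM21a} compiler, which uses the commitment only as a black box, to obtain the desired maliciously secure MPC protocol in the dishonest majority setting. Since the PRFS is assumed to exist, each of these steps is a direct invocation of an existing construction.

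The main point needing attention is verifying that the statistical binding notion of \Cref{def:stat:binding} is compatible with the flavor of binding required by the \cite{BartusekCKM21a} compiler. The excerpt emphasizes that \Cref{def:stat:binding} is a generalization of prior statistical binding notions that is explicitly tailored to suffice for applications such as secure multiparty computation, so the expectation is that it directly matches or strictly implies their assumed notion. If the definitions do not literally coincide, the remaining work is to give a short reduction showing that our notion implies theirs, after which the black-box compilation of \cite{BartusekCKM21a} goes through unchanged. This definitional alignment is the only step where a nontrivial subtlety could arise; the rest is assembly of prior results.
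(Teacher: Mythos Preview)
Your high-level plan is exactly what the paper does: obtain the commitment from \Cref{thm:intro:comm} and feed it into the \cite{BartusekCKM21a} compiler. You have also correctly localized the only nontrivial step, namely the compatibility of the two binding notions.

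However, your expectation about how that step resolves is off in an important way. The paper's binding notion (\Cref{def:stat:binding}) does \emph{not} imply the binding notion used in \cite{BartusekCKM21a}; the implication runs the other direction (\Cref{lem:binding-generalized}). In fact the paper explicitly notes that its commitment scheme cannot satisfy the \cite{BartusekCKM21a} definition, because the honest receiver performs no measurement in the commit phase and a malicious committer can send a superposition of commitments to $0$ and $1$. So the ``short reduction showing that our notion implies theirs'' you anticipate is simply not available.

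What the paper actually does (\Cref{lem:binding-partial-converse} and \Cref{sec:ot-details}) is a partial converse: from a scheme satisfying \Cref{def:stat:binding} one constructs a statistically indistinguishable scheme with an \emph{inefficient} receiver (who applies the extractor at the end of the commit phase) that does satisfy the \cite{BartusekCKM21a} definition. One then opens up the \cite{BartusekCKM21a} proof and checks, at each of the two places where statistical binding is invoked, that one can hybrid to this inefficient-receiver variant, apply their argument, and hybrid back. This is not a black-box invocation of \cite{BartusekCKM21a}; it requires tracing through their two compilers and verifying the substitution goes through. Your proposal should be amended to reflect that this is the actual work, rather than a direct implication between definitions.
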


\par Our construction is an adaptation of Naor's commitment scheme~\cite{Naor91}. We replace the use of the PRG in Naor's construction with a PRFS generator and the first message, which is a random string in Naor's construction, instead specifies a random Pauli operator. %

\paragraph{Other Implications.} Besides the above applications, we show that PRFS (with polynomially-long input length) can also be used to construct other fundamental primitives such as symmetric-key CPA-secure encryption (see \Cref{sec:cpa}) and message authentication codes (see \Cref{sec:macs}). Both primitives guarantee security in the setting when the secret key can be reused multiple times.

\par Unlike the previous applications (pseudo QOTP and commitments), the straightforward constructions of reusable encryption and MACs require PRFS generators with input lengths $\omega(\log \secparam)$ and $\ell$ respectively, where $\ell$ is the length of the message being authenticated. We do not know if such PRFS generators can be constructed from PRS generators in a black box way. Nonetheless, we believe these applications illustrate the usefulness of the concept of PRFS generators.

\subsubsection{Construction of PRFS}

Given the interesting implications of PRFS, the next natural step is to focus on constructing PRFS generators. We show that for some interesting range of parameters, we can achieve PRFS from any PRS. 
In particular, we show the following. 

\begin{theorem}[Informal]
\label{thm:intro:prfsfromprs}
For $d = O(\log\secparam)$ and $n = d + \omega(\log\log\secparam)$, assuming the existence of a $(d+n)$-qubit PRS generator, there exists a $(d,n)$-PRFS generator.
\end{theorem}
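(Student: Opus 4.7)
The plan is to build the PRFS generator by post-selecting a computational-basis measurement on part of the PRS output. On input $(k,x)$, the PRFS generator $G$ will invoke the $(d+n)$-qubit PRS generator $G_{\mathsf{PRS}}(k)$ to obtain $\ket{\phi_k}$ on registers $A$ (of $d$ qubits) and $B$ (of $n$ qubits), measure $A$ in the computational basis, and return register $B$ if the outcome equals $x$; otherwise it retries, up to $\mathrm{poly}(\secparam)$ times. Since $d = O(\log\secparam)$ gives $2^d = \mathrm{poly}(\secparam)$ possible outcomes, this procedure runs in polynomial time.

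For security, I would first invoke PRS pseudorandomness (via a hybrid argument over the adversary's polynomially-many requested PRFS copies) to replace each invocation of $G_{\mathsf{PRS}}(k)$ with a fresh Haar-random $(d+n)$-qubit state $\ket{\vartheta}$. The heart of the argument is then a structural fact about Haar-random states: writing $\ket{\vartheta} = \sum_x \sqrt{p_x}\,\ket{x}\ket{\vartheta_x}$, the normalized conditional states $\{\ket{\vartheta_x}\}_{x\in\bit^d}$ are i.i.d.\ Haar-random on $n$ qubits and are independent of the probabilities $\{p_x\}$; this follows from the right-unitary invariance of the Haar distribution applied to register $B$. Conditioned on successful post-selection, the PRFS output is therefore distributed exactly as an independent Haar-random $n$-qubit state per queried $x$, which is precisely the PRFS security definition.

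The remaining ingredient is to show that post-selection succeeds with overwhelming probability, since frequent aborts would themselves yield a distinguisher. By concentration of measure for Haar-random states (equivalently, a direct second-moment calculation giving $\mathrm{Var}(p_x) = \Theta(2^{-2d-n})$), each $p_x$ lies in the window $[2^{-d-1},\,2^{-d+1}]$ except with probability $\exp(-\Omega(2^{n-d}))$; union-bounding over the $2^d = \mathrm{poly}(\secparam)$ values of $x$ then demands $2^{n-d} = \omega(\log\secparam)$, which matches exactly the hypothesis $n-d = \omega(\log\log\secparam)$. With $\mathrm{poly}(\secparam)$ retries, post-selection succeeds with probability $1 - \mathrm{negl}(\secparam)$; this bound transfers from the Haar case to the PRS case via another application of PRS security, since an abort-detection check would otherwise itself be a distinguisher.

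I expect the concentration argument above to be the main obstacle: the precise $\omega(\log\log\secparam)$ threshold hinges on a tight interplay between the per-$x$ Haar tail bound and the $2^d$-term union bound, and a sloppier analysis would force the much stronger condition $n-d = \omega(\log\secparam)$. A secondary subtlety is in the PRS-to-Haar reduction itself, where the simulator must request enough PRS copies to satisfy all of the PRFS adversary's post-selected queries; it is important here that $d = O(\log\secparam)$ keeps the total number of requested copies polynomial in $\secparam$.
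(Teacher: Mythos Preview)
Your proposal is correct and matches the paper's proof essentially step for step: the same post-selection construction with $2^d\cdot\mathrm{poly}(\secparam)$ retries, the same unitary-invariance argument for the conditional states (the paper writes it via the controlled unitary $R=\sum_x\ketbra{x}{x}\otimes R_x$, which is exactly your ``right-unitary invariance on register $B$''), and L\'evy's lemma for the concentration of the $p_x$. One small caveat on the point you yourself flag as the main obstacle: your parenthetical ``equivalently, a direct second-moment calculation'' is not actually equivalent---Chebyshev from $\mathrm{Var}(p_x)=\Theta(2^{-2d-n})$ gives only an $O(2^{d-n})$ failure probability after the union bound, which is precisely the ``sloppier analysis'' forcing $n-d=\omega(\log\secparam)$ that you later warn against; the exponential tail $\exp(-\Omega(2^{n-d}))$ that pins down the sharp threshold $n-d=\omega(\log\log\secparam)$ genuinely requires L\'evy (or some sub-Gaussian bound), as the paper does.
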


\noindent A surprising aspect about the above result is that the starting PRS's output length $d + n = \omega(\log\log\secparam)$ could even be much smaller than the key length $\secparam$. In contrast, classical pseudorandom generators with output length less than the input length are trivial.

We remark that if $d \ll \log\secparam$ then it is easy to build PRFS from PRS; chop up the key $k$ into $2^d$ blocks; to compute the PRFS generator with key $k$ and input $x$, compute the PRS generator on the $x^{th}$ block of the key. Unfortunately, PRFS with this range of parameters does not appear useful for applications because the seed length is too large. On the other hand, the construction of PRFS generators from PRS generators in \Cref{thm:intro:prfsfromprs} allows for $2^d$ to be an arbitrarily large polynomial in the key length. Note that this is sufficient for~\Cref{thm:intro:pqotp} and~\Cref{thm:intro:mpc}. We thus obtain the following corollary. 

\begin{corollary}
\label{cor:applications}
  Assuming $(2\log\lambda + \omega(\log\log\lambda))$-qubit PRS, there exist statistically binding commitment schemes and therefore secure computations.
  Assuming $\omega(\log\lambda)$-qubit PRS, there exist pseudo one-time pad schemes for messages of any polynomial length.
\end{corollary}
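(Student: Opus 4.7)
This corollary is essentially parameter-chasing: we combine the PRFS-from-PRS construction of \Cref{thm:intro:prfsfromprs} with the PRFS-based constructions of commitments (\Cref{thm:intro:comm}), secure computation (\Cref{thm:intro:mpc}), and pseudo one-time pads (\Cref{thm:intro:pqotp}). The plan is to choose the PRFS parameters $(d,n)$ separately for each application so as to minimize the required PRS output length $d+n$, then invoke \Cref{thm:intro:prfsfromprs} to instantiate the PRFS from a PRS of that output length.

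For the commitment/secure-computation part, I need a $(d,n)$-PRFS satisfying $2^d \cdot n \ge \comlen$. I would set $d = \lceil \log\secparam \rceil$, so that $2^d \ge \secparam$, and then pick $n = d + \omega(\log\log\secparam) = \log\secparam + \omega(\log\log\secparam)$; since $n \ge 7$ for large $\secparam$, we get $2^d \cdot n \ge 7\secparam = \comlen$. These parameters satisfy the hypotheses of \Cref{thm:intro:prfsfromprs} ($d = O(\log\secparam)$ and $n = d + \omega(\log\log\secparam)$), so a $(d+n)$-qubit PRS yields the desired PRFS. The total output length is $d + n = 2\log\secparam + \omega(\log\log\secparam)$, matching the hypothesis of the corollary. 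Applying \Cref{thm:intro:comm} and then \Cref{thm:intro:mpc} to this PRFS gives the statistically binding commitment and the maliciously secure dishonest-majority MPC protocol.

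For the pseudo one-time pad part, I need, for every polynomial message length $\ell = \ell(\secparam)$, a $(d,n)$-PRFS with $d = O(\log\secparam)$, $n = \omega(\log\secparam)$, and $2^d \ge \ell$. Given such $\ell$, I would set $d = \lceil \log \ell\rceil = O(\log\secparam)$ and $n = \omega(\log\secparam)$, which automatically gives $n = d + \omega(\log\log\secparam)$ as required by \Cref{thm:intro:prfsfromprs}. The resulting PRS output length is $d + n = \omega(\log\secparam)$, matching the corollary's hypothesis. Invoking \Cref{thm:intro:pqotp} on this PRFS yields a pseudo one-time pad scheme for messages of length $2^d \ge \ell$.

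There is no real obstacle here beyond making sure the inequalities match up; the mild subtlety is that the construction of \Cref{thm:intro:prfsfromprs} requires $n$ to exceed $d$ by $\omega(\log\log\secparam)$, so in the commitment case one cannot simply take $n$ to be a constant even though $\comlen/2^d$ is constant --- the $\omega(\log\log\secparam)$ slack is what forces the extra additive term in the final PRS output length of $2\log\secparam + \omega(\log\log\secparam)$. Once these bookkeeping details are checked, the corollary follows immediately by composing the cited theorems.
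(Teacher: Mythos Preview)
Your proposal is correct and matches the paper's (implicit) argument: the paper states \Cref{cor:applications} immediately after \Cref{thm:intro:prfsfromprs} without writing out a proof, treating it as the obvious composition of \Cref{thm:intro:prfsfromprs} with \Cref{thm:intro:comm}/\Cref{thm:intro:mpc} and \Cref{thm:intro:pqotp}, and your parameter choices are exactly what is needed to make that composition go through. One small point you could make explicit is that the commitment theorem requires the PRFS to have the recognizable abort property, which is indeed guaranteed by the construction in \Cref{thm:intro:prfsfromprs}; other than that, the bookkeeping is sound.
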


We remark that the assumptions of \Cref{cor:applications} on the PRS generators are essentially \emph{optimal}, in the sense that it is not possible to significantly weaken them. This is because commitment and pseudo one-time pad schemes require computational assumptions on the adversary; on other hand Brakerski and Shmueli~\cite{BrakerskiS20} demonstrate the existence a ``pseudo''-random state generator with output length $c \log \secparam$ for some constant $c < 1$ that is \emph{statistically secure}: in other words, the outputs of the generator are indistinguishable from Haar-random states by \emph{any} distinguisher (not just polynomial-time ones). 

Furthermore, it can be shown that PRS generators with $\log \secparam$-qubit outputs require computational assumptions on the adversary and that generators with $(1 + \eps) \log \secparam$-qubit outputs imply BQP $\neq$ PP~\citemanuscript{}.

\subsubsection*{Concurrent Work} 
A concurrent preprint of Morimae and Yamakawa~\cite{morimae2021quantum} also construct statistically binding and computationally hiding commitment schemes from PRS, adapting Naor's commitment scheme in a manner similar to ours. We note several differences between their work and ours, with regards to commitment schemes.
\begin{enumerate}
    \item They show a weaker notion of binding known as \emph{sum-binding}, which roughly says that the \emph{sum} of the probabilities that an adversarial committer can successfully decommit to the bit $0$ and the bit $1$ is at most a quantity negligibly close to $1$. This notion of binding is not known to be sufficient for general quantum commitment protocols to conclude that PRS implies protocols for secure computation\footnote{However, in an updated draft of~\cite{morimae2021quantum}, the authors sketch how, for a special form of quantum commitment schemes, sum-binding does imply our notion of statistical binding.}. However, our notion of statistical binding (\Cref{def:stat:binding}) is sufficient for leveraging the machinery of~\cite{BartusekCKM21a} to obtain quantum protocols for secure computation. Moreover, our definition of statistical binding implies the sum-binding definition\footnote{The sum of probabilities that an adversarial decommitter can decommit to 0 and to 1 in the ideal world of our definition (\Cref{def:stat:binding}) and therefore they sum up to at most negligibly larger than 1 in the real world by our statistical binding guarantee.}.
    \item For the same level of statistical binding security, that is $O(2^{-\lambda})$, they require the existence of a PRS that stretches $\secparam$ random bits to $3\secparam$ qubits of Haar-randomness (i.e., they require the PRS generator to have \emph{stretch}), whereas our result assumes the existence of a PRS that maps $\secparam$ bits to $2\log\secparam + \omega(\log\log \secparam)$ qubits. On the other hand, they require the pseudorandomness/indistinguishability of a single copy of PRS state versus Haar random, while we require the pseudorandomness to hold again multiple copies, especially when the output length is short.
    \item The state generation guarantee required from the underlying PRS is much stricter in their setting. In our work, we require the underlying PRS to only satisfy recognizable abort (\Cref{def:classicalgen}) whereas in their work, the underlying PRS needs to satisfy a guarantee that is even stronger 
    than perfect state generation (\Cref{def:perfstgen}).
    \item Their commitment scheme is non-interactive whereas our commitment scheme is a two-message scheme. Furthermore, our protocol has a classical opening message while theirs is quantum. However, these differences are rather minor since we can easily adapt our construction to satisfy these requirements, and vice versa.
\end{enumerate}
We also note that the notion of PRFS, its implications and its construction from PRS is unique to our work.

\subsection{Discussion: Why Explore a World Without One-Way Functions?}

Before getting into the technical overview we address a common question: \emph{``Sure, it is interesting that one can construct commitment schemes and pseudo one-time pad schemes without one-way functions, but will this still matter if someone proves that (post-quantum) one-way functions exist?''}

Our view is the following: 
it is \emph{not} our goal to avoid one-way functions because we don't believe that they exist\footnote{The majority of the authors of this paper believe one-way functions exist.}. The main motivation is to gain a \emph{deeper understanding} of fundamental cryptographic primitives such as encryption and commitment schemes. As mentioned previously, it has been understood for many decades that these primitives are inseparable from one-way functions (even in a black box way) in the classical setting. We view our results as revising this understanding in the quantum world: one-way functions are not necessary for these primitives. %

Another motivation comes from complexity theory. An oft-repeated storyline is that if P = NP, then one-way functions would not exist and thus most cryptography would be impossible; this scenario has been coined by Impagliazzo as \emph{Algorithmica} as one of his five ``complexity worlds''~\cite{impagliazzo1995personal}. While most people believe that P $\neq$ NP, it is nonetheless scientifically interesting to study the consequences of other complexity-theoretic outcomes. Our work adds a twist to the usual P = NP storyline: perhaps \emph{QAlgorithmica} -- Impagliazzo's \emph{Algorithmica} plus quantum information 
-- can potentially support both an algorithmic \emph{and} cryptographic paradise. 

Finally, we believe that studying the possibilities of basing cryptography solely on quantum assumptions is extremely useful for deepening our understanding of quantum information. By restricting ourselves to \emph{not} use one-way functions, we force ourselves to use the unique properties of quantum mechanics to the hilt. For example, our constructions of PRFS generators, pseudo one-time pad and commitment schemes ultimately required us to make use of properties of pseudorandom states such as concentration of measure over the Haar distribution.

\medskip

Another question that often arises is: \emph{``Is there a candidate construction of PRS generators that do not (obviously) involve one-way functions?''} While Kretschmer~\cite{Kretschmer21} showed an oracle separation between pseudorandom states and one-way functions, this is an artificial setting where the oracle is constructed by sampling a Haar-random unitary. 

We claim that \emph{random quantum circuits} form natural constructions of pseudorandom states: the generator $G$ interprets the key $k$ as a description of a quantum circuit on $n$ qubits, and $G$ outputs the state $k \ket{0^n}$ (i.e. executes the circuit with the all zeroes input). It has been conjectured in a number of settings that random quantum circuits have excellent pseudorandom properties. For example, the quantum supremacy experiments of Google~\cite{arute2019quantum} and UTSC~\cite{zhu2021quantum} are based on the premise that random $n$-qubit circuits of sufficiently large depth should generate states that are essentially Haar-random~\cite{harrow2018approximate}. Random quantum circuits have also been extensively studied as toy models of scrambling in black hole dynamics~\cite{brown2012scrambling,bouland2019computational,brandao2021models}.

It seems beyond the reach of present-day techniques to prove that polynomial-size random quantum circuits yield pseudorandom states; for one, doing so would separate BQP from PP~\cite{Kretschmer21}, which would be an incredible result in complexity theory. However, this is a plausible candidate PRS generator, and arguably this construction does not involve one-way functions at all.

\subsection{Technical Overview}
\noindent We first describe the techniques behind the construction of pseudorandom function-like states from pseudorandom quantum states. Then, we will give an overview of the result of statistical binding commitments from PRFS.

\subsubsection{PRFS from PRS}
To construct a $(d,n)$-PRFS, we start with an $(n+d)$-qubit PRS. For the purposes of the current discussion, we will assume that PRS has {\em perfect state generation}. That is, the output of PRS is a pure state.  

\paragraph{Main Insight: Post-Selection.} The construction proceeds as follows: on input key $k$ and $x \in \{0,1\}^d$, first generate a $(d+n)$-qubit PRS state by treating $k$ as the PRS seed. As the PRS satisfies perfect state generation, the output is a pure state and we can write the state as $\ket{\psi}= \sum_{x \in \{0,1\}^d} \alpha_x \ket{x} \otimes \ket{\psi_x}$, where $\ket{\psi_{x}}$ is a $n$-qubit state. Suppose we post-select (i.e., condition) on the first $d$ qubits being in the state $\ket{x}$, the remaining $n$ qubits will be in the state $\ket{\psi_x}$, which we define to be the output of the PRFS on input $(k,x)$.
\par In general, we do not know how to perform post-selection in polynomial-time \cite{Aar05}. However, if the event on which we are post-selecting has an inverse polynomial (where the polynomial is known ahead of time) probability of occurring, then we can efficiently perform post-selection. That is, we repeat the following process $2^d \secparam$ number of times: generate $\ket{\psi}$ by computing the PRS generator on $k$ and then measure the first $d$ qubits in the computational basis. If the first $d$ qubits is $x$ then output the residual state (which is $\ket{\psi_x}$), otherwise continue. If in none of the $2^d \secparam$ iterations, we obtained the first $d$ qubits to be $x$, we declare failure and output some fixed state $\ket{\bot}$.

We prove that the above PRFS generator satisfies pseudorandomness by making two observations. \\

\noindent {\em Observation 1: Output of PRFS is close to $\ket{\psi_x}$.} We need to argue that the probability that the PRFS generator outputs $\ket{\psi_x}$ is negligibly (in $\secparam$) close to 1. This boils down to showing that with probability negligibly close to 1, in one of the iterations, the measurement outcome will be $x$. Indeed if $|\alpha_x|^2$ is roughly $\frac{1}{2^d}$ then this statement is true. But it is a priori not clear how to argue this. 
\par Towards resolving this issue, let us first pretend that $\ket{\psi}$ was instead drawn from the Haar measure. In this case, we can rely upon L\'{e}vy's Lemma (\Cref{fact:levy}) to argue that $|\alpha_x|^2$ is indeed close to $\frac{1}{2^d}$, with overwhelming probability over the Haar measure. Thus, if $\ket{\psi}$ was drawn from the Haar measure, the probability that the PRFS generator outputs $\ket{\psi_x}$ is negligibly close to 1. 
\par Now, let us go back to the case when $\ket{\psi}$ was a PRS state. Since the PRFS generator is a quantum polynomial-time algorithm, it cannot distinguish whether $\ket{\psi}$ was generated by PRS or whether it was sampled from the Haar measure. This means that the probability that it outputs $\ket{\psi_x}$, when $\ket{\psi}$ was a PRS state, should also be negligibly close to 1.
\par While ideally we would have liked the PRFS to have perfect state generation, the above construction still satisfies a nice property that we call {\em recognizable abort}: the output of the PRFS is either a pure state or it is some known pure state $\ket{\bot}$. 

\par All is left is to show that the post-selected state $\ket{\psi_x}$ is Haar random when $\ket{\psi}$ is Haar random. \\

\noindent {\em Observation 2: Post-selected Haar random state is also Haar random.}
Haar random states satisfy a property called unitary invariance: applying any unitary on a Haar random state yields a Haar random state. Consider the following distribution ${\cal R}$ of unitaries: $R = \sum_{x \in \{0,1\}^n} \ket{x}\bra{x} \otimes R_x$, where $R_x$ is a Haar random unitary. Now, applying $R$, where $R \leftarrow {\cal R}$, on a Haar random state $\ket{\psi}=\sum_{x \in \{0,1\}^d} \ket{x} \otimes \ket{\psi_x}$ yields a Haar random state. \par Thus, the following two processes yield the same distribution: 
\begin{itemize}
    \item Process 1: Sample $\ket{\psi} = \sum_{x \in \{0,1\}^d} \ket{x} \otimes \ket{\psi_x}$ be a Haar random state. Output $\ket{\psi_x}$. 
    \item Process 2: Sample a Haar random state $\ket{\psi}= \sum_{x \in \{0,1\}^d} \ket{x} \otimes \ket{\psi_x}$. Output $R_x \ket{\psi_x}$. 
\end{itemize}
Notice that the output distribution of Process 2 is Haar random since $R_x$ is a Haar random unitary. From this we can conclude that even the output distribution of Process 1 is also Haar random. 

\paragraph{Test Procedure.} Classical pseudorandom generators satisfy a verifiability property that we often take for granted: given a value $y$ and a seed $k$, we can successfully check if $y$ is obtained as an evaluation of a seed $k$. This feature is implicitly used in many applications of pseudorandom generators. We would like to have a similar feature even for pseudorandom function-like states. More specifically, we would like the following to hold: given a state $\rho$, a PRFS key $k$ and an input $x$, check if $\rho$ is close to the output of PRFS on $(k,x)$.

\par Let us start with a simple case when the PRFS satisfies perfect state generation property and moreover, PRFS generator is a unitary $G$. We can express PRFS state generation as follows: on input a key $k$, input $x$ and ancillas $\ket{k} \otimes \ket{x} \otimes \ket{0}$, $G$ outputs $\ket{\psi_{k,x}} \otimes \ket{\phi}$. The state $\ket{\psi_x}$ is deisgnated to be the PRFS state corresponding to input $x$ and the state $\ket{\phi}$ is discarded as the garbage state. 
\par Suppose we need to test if a state $\rho$ is the output of PRFS on $k$ and $x$. The test procedure is defined as follows:
\begin{itemize}
    \item Compute $G(\ket{k} \otimes \ket{x} \otimes \ket{0})$, 
    \item Swap the register containing the PRFS state with $\rho$, 
    \item Apply $G^{\dagger}$ on the resulting state and,
    \item Measure the resulting state and output 1 if the outcome is $(k,x,0)$. Otherwise, output 0. 
\end{itemize}
\par Since unitaries preserve fidelity between the states, we can show that the following holds: the above test procedure outputs 1 with probability proportional to $F(\rho,\ketbra{\psi_{k,x}}{\psi_{k,x}})$. More precisely, the test procedure outputs 1 with probability $\text{Tr}(\ketbra{\psi_{k,x}}{\psi_{k,x}} \rho)$. 
\par The above test procedure can be suitably generalized if the PRFS satisfies the (weaker) state generation with recognizable abort property. If the PRFS generator is a quantum circuit then we designate $G$, in the above test procedure, to be a purification of this quantum circuit.

\subsubsection{Statistical Binding Commitments}
We show how to construct statistical binding quantum commitments from PRFS. 

\paragraph{Definition.} A statistical binding quantum commitment scheme consists of two interactive phases between a sender and a receiver: a commit phase and a reveal phase. In both the phases, the communication between the parties can be quantum. In the commit phase, the sender commits to a bit $b$. In the reveal phase, the committer reveals $b$ and the receiver either accepts or rejects. 
\par We require that any (even unbounded) sender cannot commit to bit $b$ in the commit phase and then successfully open to $1-b$ in the reveal phase. Formalizing this can be tricky in the setting where the communication channel is quantum. For example, consider the following attack: an adversarial sender can send a uniform superposition of commitments of 0 and 1 and then open to one of them in the reveal phase. Any definition we come up should handle this attack. 

\par We propose an extractor-based definition. Consider an adversarial sender $S^*$. Let us define the ideal experiment as follows: execute the commit phase with $S^*$. After the commit phase, apply an extractor on the receiver's state. The output of the extractor is a bit $b^*$ along with the collapse state $\sigma_R$. Execute the reveal phase; let $b$ be the bit opened to by $S^*$. Output ${\sf Fail}$ if $b \neq b^*$ and $R$ accepts. Otherwise, output $S^*$'s final state (after the execution of the Reveal phase) along with $R$'s decision, which is either the decommitted bit of the sender or it is $\bot$. Similarly, we can define real experiment as follows: We execute the commit phase and the reveal phase between $S^*$ and $R$ and then output the final state of $S^*$ along with $R$'s decision. 
\par Going back to the earlier superposition attack, the extractor would, with equal probability, collapse to either commitment of 0 or collapse to commitment of 1. 
\par We say that the quantum commitment scheme satisfies statistical binding if the output distributions of the real and ideal experiments are statistically close. Our definition of statistical binding generalizes all the previous definitions of statistical binding in the context of quantum commitments~\cite{YWLQ15,Unruh16,FUYZ20,BartusekCKM21a,BB21}. Refer to~\Cref{sec:comm:def} for a detailed comparison with prior definitions.   
\par We also require the quantum commitment scheme to satisfy computational hiding: in the commit phase, any quantum poly-time receiver cannot tell apart whether the sender committed to 0 or 1. 

\paragraph{Construction.} Our construction is a direct adaptation of Naor's commitment scheme~\cite{Naor91}, i.e. the same protocol but simply replacing PRG with PRFS. We start with a $(d,n)$-PRFS, where $d=O(\log(\secparam))$ and $n \geq 1$. %
 
\begin{itemize}
    \item In the commit phase, the receiver sends a random $2^d n$-qubit Pauli $P=P_0 \otimes \cdots P_{2^d - 1}$ to the sender, where each $P_i$ is a $n$-qubit Pauli. The sender on input bit $b$, samples a key $k$ uniformly at random from $\{0,1\}^{\secparam}$. The sender then sends the state ${\bf c} = \bigotimes_{x \in [2^d]} P_i^b \sigma_{k,x} P_i^b$, where $\sigma_{k,x} = PRFS(k,i)$ to the receiver. 
    \item In the reveal phase, the sender sends $(k,b)$ to the receiver. The receiver accepts if $P^b {\bf c} P^b$ is a tensor product of the PRFS evaluations of $(k,x)$, for all $x=0,\ldots,2^d - 1$. 
\end{itemize}

\noindent From the pseudorandomness property of PRFS, hiding  follows. To prove that the above scheme satisfies binding, we describe the  extractor first. It again helps to think of PRFS as satisfying the perfect state generation property. The extractor applies the following projection $\{ \Pi_0,\Pi_1,I - (\Pi_0 + \Pi_1) \}$, where $\Pi_b$ projects onto the subspace spanned by $T_b=\left\{ \bigotimes_{x \in \{0,1\}^{2^{d}}} P^b \ketbra{\psi_{k,x}}{\psi_{k,x}} P^b\ :\ \forall k \in \{0,1\}^{\secparam} \right\}$, where $\ket{\psi_{k, x}}$ is the output of $PRFS(k,x)$. If $\Pi_b$ succeeds then $b$ is designated to be the extracted bit. At the core of proving the indistinguishability of the real and the ideal world is the following fact: applying a projector that projects onto $T_b$ (as done by the extractor), followed by the projector $\bigotimes_{x \in \{0,1\}^{2^{d}}} P^b \ketbra{\psi_{k,x}}{\psi_{k,x}} P^b$ (as done by the receiver) is the equivalent to only applying the projector $\bigotimes_{x \in \{0,1\}^{2^{d}}} P^b \ketbra{\psi_{k,x}}{\psi_{k,x}} P^b$. 
\par While our actual proof is  conceptually similar to the proof sketched above, there are some crucial details that we shoved under the rug. Firstly, $I - (\Pi_0 + \Pi_1)$ is not necessarily a projection since the projections $\Pi_0$ and $\Pi_1$ need not be orthogonal. Secondly, the PRFS generation is not perfect and we need to work with recognizable abort property. Nonetheless we circumvent these issues and show that the above construction still works. We refer the reader to Section~\ref{sec:comm:cons} for more details.

\subsection{Future Directions}

We end this section with some future directions and open questions.

\paragraph{Properties of pseudorandom states.} Given a PRS generator $G$ mapping $\lambda$-bit keys to $n$-qubit states, is it possible to construct in as black-box fashion as possible, a PRS generator $G'$ with longer output length (but same length key)? In other words, it is possible to arbitrarily \emph{stretch} the output of a PRS? 

Is it possible to construct PRFS generators (with polynomial-length inputs) from PRS generators in a black-box fashion? Are there separations?

\paragraph{More applications of pseudorandom states.} One of Impagliazzo's ``five worlds'' is called \emph{MiniCrypt}, which represents a world where one-way functions exist but we do not have public-key cryptography. In this world, applications such as symmetric-key encryption, commitment schemes, secure multiparty computation, and digital signatures are possible to achieve. %

It appears that we can obtain most MiniCrypt primitives from PR(F)S; for example this paper shows that we can get symmetric-key encryption, commitments, and secure multiparty computation. However it is a tantalizing open question of whether we can also build digital signatures from PR(F)S. Morimae and Yamakawa show that an analogue of one-time Lamport signatures can be constructed from PRS~\cite{morimae2021quantum}, but obtaining many-time signatures from PR(F)S seems more challenging.

More generally, what are other cryptographic applications of pseudorandom states?

\paragraph{Other quantum assumptions.} What are other interesting ``fully quantum'' assumptions that can we base cryptography on? Can we base cryptography on the assumption $\mathsf{BQP} \neq \mathsf{PP}$? We note that Chia, Chou, Zhang, Zhang also suggest the possibility of basing cryptography on the assumption that a quantum version of the Minimum Circuit Size Problem is hard~\cite[Open Problem 9]{chia2021quantum}.

\ifcrypto\else
\subsubsection{Organization}
We present the definitions of PRS and PRFS generators in \Cref{sec:prs}, as well as prove basic properties of them. We present a construction of PRFS generators from PRS generators in \Cref{sec:prfs-from-prs}. We present our quantum pseudo one-time pad scheme in \Cref{sec:qotp}. We present our quantum commitment scheme in \Cref{sec:commitment}. We present our many-time encryption scheme in \Cref{sec:cpa} and our message authentication code scheme in \Cref{sec:macs}.
\fi

\ifanon\else
\section*{Acknowledgements}
We thank Tomoyuki Morimae, Takashi Yamakawa, Jun Yan, and Fermi Ma for their very helpful feedback and discussions about pseudorandom quantum states.
HY is supported by AFOSR award FA9550-21-1-0040 and NSF CAREER award CCF-2144219.
LQ is supported by DARPA under Agreement No. HR00112020023.
\fi

\ifcrypto\else
  
\section{Preliminaries}
\label{sec:prelims}

We refer the reader to~\cite{nielsen_chuang_2010} for a comprehensive reference on the basics of quantum information and quantum computation. We use $I$ to denote the identity operator. We use $\dmx(\cal{H})$ to denote the set of density matrices on a Hilbert space $\cal{H}$. Let $\rho,\sigma \in \dmx(\cal{H})$ be density matrices. We write $\TD(\rho,\sigma)$ to denote the trace distance between them, i.e.,
\[
    \TD(\rho,\sigma) = \frac{1}{2} \| \rho - \sigma \|_1
\]
where $\norm{X}_1 = \Tr(\sqrt{X^\dagger X})$ denotes the trace norm.
We denote $\norm{X} := \sup_{\ket\psi}\{\braket{\psi|X|\psi}\}$ to be the operator norm where the supremum is taken over all unit vectors.
For a vector $x$, we denote its Euclidean norm to be $\norm{x}_2$.

\paragraph{General Measurements.} A \emph{general measurement} on a Hilbert space $\cal H$ is a set $M = \{ M_a \}_{a \in A}$ of operators acting on $\cal H$ indexed by some finite set $A$ of outcomes satisfying the completeness relation
\[
    \sum_{a \in A} M_a^\dagger M_a = I~.
\]
Applying the measurement $M$ to a density matrix $\rho \in \dmx(\cal H)$ corresponds to the following operation: outcome $a$ is obtained with probability $\Tr(M_a^\dagger M_a \rho)$, and the post-measurement state is defined to
\[
    \rho \mapsto \frac{ M_a \rho M_a^\dagger }{ \Tr(M_a^\dagger M_a \rho)}~.
\]

\paragraph{Haar Measure.} The Haar measure over $\C^d$, denoted by $\Haar(\C^d)$ is the uniform measure over all $d$-dimensional unit vectors. One useful property of the Haar measure is that for all $d$-dimensional unitary matrices $U$, if a random vector $\ket{\psi}$ is distributed according to the Haar measure $\Haar(\C^d)$, then the state $U\ket{\psi}$ is also distributed according to the Haar measure. For notational convenience we write $\Haar_m$ to denote the Haar measure over $m$-qubit space, or $\Haar((\C^2)^{\otimes m})$.

\begin{fact}
\label{fact:haar-avg}
We have
\[
    \E_{\ket{\psi} \leftarrow \Haar(\C^d)} \ketbra{\psi}{\psi} = \frac{I}{d}~.
\]
\end{fact}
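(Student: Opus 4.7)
The plan is to exploit the unitary invariance of the Haar measure together with a symmetry argument. Define $\rho := \E_{\ket{\psi} \leftarrow \Haar(\C^d)} \ketbra{\psi}{\psi}$. I want to show that $\rho$ must be proportional to the identity, and then fix the proportionality constant using the trace.

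First I would observe that $\rho$ is a well-defined density operator on $\C^d$: it is a convex combination of rank-one projectors, hence positive semidefinite, and $\Tr(\rho) = \E_{\ket{\psi}} \Tr(\ketbra{\psi}{\psi}) = \E_{\ket{\psi}} \ip{\psi}{\psi} = 1$ by linearity of trace and expectation and the fact that Haar samples are unit vectors. Next, for any $d$-dimensional unitary $U$, the unitary invariance of $\Haar(\C^d)$ (stated explicitly in the paragraph preceding the Fact) gives that $U \ket{\psi}$ has the same distribution as $\ket{\psi}$. Hence
\[
    U \rho U^\dagger = \E_{\ket{\psi} \leftarrow \Haar(\C^d)} U \ketbra{\psi}{\psi} U^\dagger = \E_{\ket{\phi} \leftarrow \Haar(\C^d)} \ketbra{\phi}{\phi} = \rho,
\]
so $\rho$ commutes with every unitary $U$.

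From $U\rho = \rho U$ for all unitaries, I would conclude by Schur's lemma (or, equivalently, by diagonalizing $\rho$ and using a permutation unitary to show all eigenvalues must coincide) that $\rho = c \, I$ for some scalar $c \in \C$. Taking traces then yields $1 = \Tr(\rho) = c \cdot d$, so $c = 1/d$ and $\rho = I/d$, as claimed.

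The main obstacle is essentially bookkeeping rather than conceptual: one needs to be comfortable interchanging the expectation over the Haar measure with linear operations like conjugation by $U$ and trace (which is immediate since the integrand is bounded and the Haar measure is a probability measure on a compact space). Once that is granted, the proof is a one-line application of unitary invariance plus Schur's lemma.
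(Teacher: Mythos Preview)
Your proof is correct. The paper states this result as a \emph{Fact} without proof, so there is no paper argument to compare against; your use of unitary invariance of the Haar measure together with Schur's lemma (and the trace normalization) is exactly the standard way to establish this identity.
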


The following result, known as \emph{L\'{e}vy's Lemma}, expresses strong concentration of measure for the Haar measure. 

\begin{fact}[L\'{e}vy's Lemma~\cite{hayden2006aspects}]
\label{fact:levy}
Let $f:\C^d \to \R$ be a function such that for all unit vectors $\ket{\psi},\ket{\phi}$ we have
\[
    | f(\ket{\psi}) - f(\ket{\phi}) | \leq K \cdot \norm{\ket{\psi} - \ket{\phi}}_2
\]
for some number $K > 0$. Then there exists a universal constant $C > 0$ such that
\[
    \Pr_{\ket{\psi} \leftarrow \Haar(\C^d)} \left [ | f(\ket{\psi}) - \E f | \geq \delta \right ] \leq \exp \Big ( - \frac{C d \delta^2}{K^2} \Big)
\]
where $\E f$ denotes the average of $f$ over the Haar distribution $\Haar(\C^d)$. 
\end{fact}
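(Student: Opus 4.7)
The plan is to reduce the claim to the standard Gaussian-type concentration inequality for Lipschitz functions on the real unit sphere and then invoke the spherical isoperimetric inequality.

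First, I would identify $\C^d$ with $\R^{2d}$ via the real-linear isometry $(z_1,\ldots,z_d)\mapsto(\Re z_1,\Im z_1,\ldots,\Re z_d,\Im z_d)$. This identification preserves the Euclidean norm, maps the set of unit vectors in $\C^d$ onto the unit sphere $S^{2d-1}\subset\R^{2d}$, and pushes $\Haar(\C^d)$ forward to the rotationally invariant (uniform) probability measure $\mu$ on $S^{2d-1}$. Consequently the given $K$-Lipschitz function $f:\C^d\to\R$ becomes a $K$-Lipschitz function $\tilde f:S^{2d-1}\to\R$ with respect to ambient Euclidean distance, and $\E f$ equals the spherical average of $\tilde f$. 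So it suffices to prove Gaussian concentration for $K$-Lipschitz real-valued functions on $S^{n-1}$ with $n=2d$.

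Second, I would obtain concentration around a median $M$ of $\tilde f$ by appealing to the spherical isoperimetric inequality: for any Borel $A\subseteq S^{n-1}$ with $\mu(A)\geq 1/2$ and any $\eta>0$, the Euclidean $\eta$-neighborhood satisfies
\[
    \mu(A_\eta)\;\geq\;1-\tfrac{1}{2}\exp\bigl(-(n-1)\eta^2/2\bigr).
\]
Apply this to $A=\{\tilde f\leq M\}$ and $A'=\{\tilde f\geq M\}$; since $\tilde f$ is $K$-Lipschitz, $A_\eta\subseteq\{\tilde f\leq M+K\eta\}$ and analogously for $A'$. Setting $t=K\eta$ and union-bounding gives
\[
    \Pr\bigl[\,|\tilde f-M|\geq t\,\bigr]\;\leq\;\exp\bigl(-(n-1)t^2/(2K^2)\bigr).
\]

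Third, I would replace the median by the mean via a standard comparison: integrating the tail bound yields $|\E\tilde f-M|=O(K/\sqrt{n})$, so the triangle inequality (handling the trivial regime $\delta=O(K/\sqrt{n})$ by shrinking the absolute constant) upgrades concentration around $M$ to concentration around $\E\tilde f=\E f$, with a bound of the form $\exp(-C'n\delta^2/K^2)=\exp(-2C'd\delta^2/K^2)$. Absorbing the factor of $2$ into the constant yields the statement as claimed.

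The only genuinely nontrivial step is the spherical isoperimetric inequality itself, whose classical proofs proceed either by two-point symmetrization or by a Gromov-style argument using comparison with geodesic balls and the Bishop--Gromov volume estimate. I would not reprove this; rather I would cite it as a standard geometric fact, since the content of L\'{e}vy's Lemma in our setting is precisely the passage from that isoperimetric statement to concentration for unitarily-invariant random pure states.
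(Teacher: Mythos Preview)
The paper does not prove this statement: it is stated as a Fact with a citation to \cite{hayden2006aspects} and is used as a black box in the proof of \Cref{lem:haar-prefix}. There is thus no ``paper's own proof'' to compare against.

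That said, your sketch is the standard route to L\'{e}vy's Lemma and is correct. The identification $\C^d\cong\R^{2d}$ carrying Haar to uniform measure on $S^{2d-1}$, followed by the spherical isoperimetric inequality to get sub-Gaussian tails around the median, and then the median-to-mean comparison, is exactly how the result is derived in the measure-concentration literature (e.g.\ Ledoux, Milman--Schechtman). One minor remark: in the third step you should be slightly careful that the ``trivial regime'' absorption really yields a bound of the form $\exp(-Cd\delta^2/K^2)$ uniformly in $\delta$; the cleanest way is to note that $|\E\tilde f - M|\le c_0 K/\sqrt{n}$ for an explicit constant, split at $\delta\ge 2c_0K/\sqrt{n}$ (where $\delta/2$ absorbs the shift) versus $\delta<2c_0K/\sqrt{n}$ (where the claimed bound exceeds some constant $e^{-c_1}$ and hence holds trivially once $C$ is small enough). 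With that bookkeeping the argument is complete.
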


\subsection{Quantum Algorithms}
\label{sec:algorithms}

A quantum algorithm $A$ is a family of generalized quantum circuits $\{A_\lambda\}_{\lambda \in \N}$ over a discrete universal gate set (such as $\{ CNOT, H, T \}$). By generalized, we mean that such circuits can have a subset of input qubits that are designated to be initialized in the zero state, and a subset of output qubits that are designated to be traced out at the end of the computation. Thus a generalized quantum circuit $A_\lambda$ corresponds to a \emph{quantum channel}, which is a is a completely positive trace-preserving (CPTP) map. When we write $A_\lambda(\rho)$ for some density matrix $\rho$, we mean the output of the generalized circuit $A_\lambda$ on input $\rho$. If we only take the quantum gates of $A_\lambda$ and ignore the subset of input/output qubits that are initialized to zeroes/traced out, then we get the \emph{unitary part} of $A_\lambda$, which corresponds to a unitary operator which we denote by $\hat{A}_\lambda$. The \emph{size} of a generalized quantum circuit is the number of gates in it, plus the number of input and output qubits.

We say that $A = \{A_\lambda\}_\lambda$ is a \emph{quantum polynomial-time (QPT) algorithm} if there exists a polynomial $p$ such that the size of each circuit $A_\lambda$ is at most $p(\lambda)$. We furthermore say that $A$ is \emph{uniform} if there exists a deterministic polynomial-time Turing machine $M$ that on input $1^n$ outputs the description of $A_\lambda$. 

We also define the notion of a \emph{non-uniform} QPT algorithm $A$ that consists of a family $\{(A_\lambda,\rho_\lambda) \}_\lambda$ where $\{A_\lambda\}_\lambda$ is a polynomial-size family of circuits (not necessarily uniformly generated), and for each $\lambda$ there is additionally a subset of input qubits of $A_\lambda$ that are designated to be initialized with the density matrix $\rho_\lambda$ of polynomial length. This is intended to model nonuniform quantum adversaries who may receive quantum states as advice.
Nevertheless, the reductions we show in this work are all uniform.

The notation we use to describe the inputs/outputs of quantum algorithms will largely mimick what is used in the classical cryptography literature. For example, for a state generator algorithm $G$, we write $G_\lambda(k)$ to denote running the generalized quantum circuit $G_\lambda$ on input $\ketbra{k}{k}$, which outputs a state $\rho_k$. 

Ultimately, all inputs to a quantum circuit are density matrices. However, we mix-and-match between classical, pure state, and density matrix notation; for example, we may write $A_\lambda(k,\ket{\theta},\rho)$ to denote running the circuit $A_\lambda$ on input $\ketbra{k}{k} \otimes \ketbra{\theta}{\theta} \otimes \rho$. In general, we will not explain all the input and output sizes of every quantum circuit in excruciating detail; we will implicitly assume that a quantum circuit in question has the appropriate number of input and output qubits as required by context.

\fi

\section{Pseudorandom States}
\label{sec:prs}

The notion of pseudorandom states were first introduced by Ji, Liu, and Song in~\cite{ji2018pseudorandom}. We reproduce their definition here:

\begin{definition}[PRS Generator~\cite{ji2018pseudorandom}]
\label{def:vanilla-prs}
We say that a QPT algorithm $G$ is a \emph{pseudorandom state (PRS) generator} if the following holds. 
\begin{enumerate}
    
    \item \textbf{State Generation}. There is a negligible function $\eps(\cdot)$ such that for all $\lambda$ and for all $k \in \{0,1\}^\lambda$, the algorithm $G$ behaves as
    \[
        G_\lambda(k) = \ketbra{\psi_{k}}{\psi_{k}}.
    \]
    for some $n(\lambda)$-qubit pure state $\ket{\psi_k}$.

    \item \textbf{Pseudorandomness}. For all polynomials $t(\cdot)$ and QPT (nonuniform) distinguisher $A$ there exists a negligible function $\eps(\lambda)$ such that for all $\lambda$,  we have
    \[
        \left | \Pr_{k \leftarrow \{0,1\}^\lambda} \left [ A_\lambda (G_\lambda(k)^{\otimes t(\lambda)}) = 1 \right] - \Pr_{\ket{\vartheta} \leftarrow \Haar_{n(\lambda)}} \left [ A_\lambda (\ket{\vartheta}^{\otimes t(\lambda)}) = 1 \right] \right | \leq \eps(\lambda)~.
    \]
\end{enumerate}
We also say that $G$ is a $n(\lambda)$-PRS generator to succinctly indicate that the output length of $G$ is $n(\lambda)$.
\end{definition}

Ji, Liu, and Song showed that post-quantum one-way functions can be used to construct PRS generators. 

\begin{theorem}[\cite{ji2018pseudorandom,BrakerskiS20}]
  If post-quantum one-way functions exist, then there exist PRS generators for all polynomial output lengths.
\end{theorem}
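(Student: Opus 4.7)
The plan is to follow the strategy of Ji, Liu, and Song. The first step is to bootstrap post-quantum one-way functions into a quantum-query-secure pseudorandom function family $\{f_k : \{0,1\}^n \to \mathbb{Z}_N\}_{k \in \{0,1\}^\lambda}$, via the standard HILL (OWF $\to$ PRG) and GGM (PRG $\to$ PRF) reductions; both preserve security against quantum distinguishers who may query the function in superposition. Given such a PRF, define the generator by
\[
    G_\lambda(k) = \ketbra{\psi_k}{\psi_k}, \qquad \ket{\psi_k} = \frac{1}{\sqrt{2^n}} \sum_{x \in \{0,1\}^n} \omega_N^{f_k(x)} \ket{x},
\]
where $\omega_N = e^{2\pi i / N}$ and $N$ is polynomial in $\lambda$. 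This state is efficiently preparable: start from the uniform superposition on $n$ qubits, coherently compute $f_k$ into an ancilla, apply the phase $\omega_N^y$ controlled on the ancilla value $y$, then uncompute $f_k$. This establishes the state-generation property.

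For pseudorandomness I would use a two-step hybrid argument. In the first hybrid, $f_k$ is replaced by a truly random function $f : \{0,1\}^n \to \mathbb{Z}_N$; indistinguishability from the real experiment follows from quantum-query PRF security, since producing $t$ copies of $\ket{\psi_k}$ amounts to $t$ superposition queries to $f_k$. In the second hybrid, the state is replaced by an $n$-qubit Haar-random state. The heart of the proof is bounding the trace distance between the density matrices of these two hybrids. Using the standard identity
\[
    \E_{\ket{\vartheta} \leftarrow \Haar_n} \ketbra{\vartheta}{\vartheta}^{\otimes t} = \binom{2^n + t - 1}{t}^{-1} \Pi_{\sym}^{(t)},
\]
where $\Pi_{\sym}^{(t)}$ is the projector onto the symmetric subspace of $(\mathbb{C}^{2^n})^{\otimes t}$, together with a direct expansion of $\E_f \ketbra{\psi_f}{\psi_f}^{\otimes t}$ in the computational basis (the $(x_1\cdots x_t, y_1\cdots y_t)$ entry is nonzero only when $\{x_i\}$ equals $\{y_i\}$ as multisets), one can compare the two averaged density matrices explicitly.

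The main obstacle is this statistical comparison. After expansion, the random-phase average concentrates on basis tuples $\ket{x_1}\otimes\cdots\otimes\ket{x_t}$ with distinct $x_i$ (the mass on colliding tuples is $O(t^2 / 2^n)$), while the Haar average is uniform on the symmetric subspace, which projects onto essentially the same set of tuples up to a $1 - O(t^2/2^n)$ normalization. A careful accounting yields trace distance $O(t^2/2^n)$, which is negligible whenever $n = \omega(\log \lambda)$ and $t$ is polynomial. For shorter output lengths this naive analysis fails; to obtain PRS generators at all polynomial output lengths I would invoke the improved Brakerski--Shmueli construction, which composes a pseudorandom binary phase with a pseudorandom permutation of the computational basis and proves the analogous statistical bound uniformly across output lengths. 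Combining the two constructions covers every polynomial output length and completes the theorem.
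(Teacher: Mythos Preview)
The paper does not prove this theorem; it is stated purely as a citation to \cite{ji2018pseudorandom,BrakerskiS20} with no accompanying argument. Your proposal is therefore not being compared against anything in the paper itself, but is a sketch of the proofs in the cited works, and as such it is broadly faithful to the literature.

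Two technical slips are worth flagging. First, you take $N$ polynomial in $\lambda$ but then assert that the $(x_1\cdots x_t,\,y_1\cdots y_t)$ entry of $\E_f \ketbra{\psi_f}{\psi_f}^{\otimes t}$ vanishes unless the multisets $\{x_i\}$ and $\{y_i\}$ agree. That clean multiset condition requires $N$ large enough that no nontrivial $\mathbb{Z}_N$-linear relation among the $f(x_i),f(y_j)$ can force the character sum to be nonzero; for fixed polynomial $N$ and arbitrary polynomial $t$ this fails. Ji--Liu--Song take $N$ exponential (e.g.\ $N=2^n$), which is what your expansion actually needs; the Brakerski--Shmueli binary-phase variant takes $N=2$ but then the combinatorics are different from what you wrote. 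Either route is fine, but they should not be mixed.

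Second, the step ``GGM preserves security against quantum distinguishers who may query the function in superposition'' is not automatic from the classical proof; it is a separate theorem of Zhandry. You should invoke it explicitly rather than fold it into the HILL/GGM sentence. With these two fixes your outline matches the standard proofs.
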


\subsection{Pseudorandom Function-Like State (PRFS) Generators}

In this section, we present our definition of pseudorandom function-like state (PRFS) generators. PRFS generators generalize PRS generators in two ways: first, in addition to the secret key $k$, the PRFS generator additionally takes in a (classical) input $x$. The security guarantee of a PRFS implies that, even if $x$ is adversarily chosen, %
the output state of the generator is indistinguishable from Haar-random. The second way in which this definition generalizes the definition of PRS generators is that the output of the generator need not be a pure state.

\begin{definition}[PRFS generator]
\label{def:prfs}
We say that a QPT algorithm $G$ is a (selectively secure) \emph{pseudorandom function-like state (PRFS) generator} if for all polynomials $s(\cdot), t(\cdot)$, QPT (nonuniform) distinguishers $A$ and a family of indices $(\{x_1,\ldots, x_{s(\lambda)}\} \subseteq \{0,1\}^{d(\lambda)})_\lambda$, there exists a negligible function $\eps(\cdot)$ such that for all $\lambda$,
    \begin{align*}
        &\Big | \Pr_{k \leftarrow \{0,1\}^\lambda} \left [ A_\lambda( x_1,\ldots,x_{s(\lambda)},G_\lambda(k,x_1)^{\otimes t(\lambda)},\ldots, G_\lambda(k,x_{s(\lambda)})^{\otimes t(\lambda)}) = 1 \right] \\
        & \qquad \qquad - \Pr_{\ket{\vartheta_1}, \ldots,\ket{\vartheta_{s(\lambda)}} \leftarrow \Haar_{n(\lambda)}} \left [ A_\lambda( x_1,\ldots,x_{s(\lambda)}, \ket{\vartheta_1}^{\otimes t(\lambda)},\ldots, \ket{\vartheta_{s(\lambda)}}^{\otimes t(\lambda)}) = 1 \right] \Big | \leq \eps(\lambda)~.
    \end{align*}
We also say that $G$ is a $(d(\lambda),n(\lambda))$-PRFS generator to succinctly indicate that its input length is $d(\lambda)$ and its output length is $n(\lambda)$.
\end{definition}

Our notion of security here can be seen as a version of \emph{(classical) selective security}, where the queries to the PRFS generator are fixed before the key is sampled.  One could consider stronger notions of security where the indistinguishability property holds even when the adversary is allowed to query the PRFS generator adaptively, or even in superposition. We explore these stronger notions in forthcoming work~\citemanuscript{}.

\paragraph{State Generation Guarantees.} As mentioned above, our definition of PRFS generator does not require that the output of the generator is always a pure state. 
However, we will see later that a consequence of the PRFS security guarantee is that the output of the generator is \emph{close} to a pure state for an overwhelming fraction of keys $k$ (see \Cref{lem:prs-orthog}).

Nevertheless, for applications it is sometimes more useful to also consider a stronger guarantee on the state generation of a PRFS generator.

\begin{definition}[Perfect state generation]
\label{def:perfstgen}
  A $(d(\lambda),n(\lambda))$-PRFS generator $G$ satisfies perfect state generation, if for every $k \in \bit^\lambda$ and $x \in \bit^{d(\lambda)}$, there exists an $n(\lambda)$-qubit pure state $\ket\psi$ such that $G_\secparam(k, x) = \ketbra\psi\psi$.
\end{definition}

We observe that an $n(\lambda)$-PRS generator defined in \Cref{def:vanilla-prs} is by definition equivalent to an $(0, n(\lambda))$-PRFS generator with perfect state generation.

In general, it may be difficult to construct PR(F)S with perfect state generation as the state generation could occasionally fail; for example, the generator may perform a (quantum) rejection sampling procedure in order to output the state. The scalable PRS generators of Brakerski and Shmueli~\cite{BrakerskiS20} is an example of this. To capture a very natural class of PRFS generators (including the one constructed in this paper), we define the notion of a PRFS generator where $G(k,x)$ outputs a convex combination of a fixed pure state $\ket{\psi_{k,x}}$ or a known abort state $\ket{\bot}$.

\begin{definition}[Recognizable abort]
\label{def:classicalgen}
 A $(d(\secparam),n(\secparam))$-PRFS generator $G$ has the \emph{recognizable abort property} if for
  every $k \in \bit^\lambda$ and $x \in \bit^{d(\secparam)}$ there exists an $n(\secparam)$-qubit pure state $\ket\psi$ and $0 \le \eta \le 1$ such that $G_\secparam(k, x) = \eta\ketbra\psi\psi + (1 - \eta)\ketbra\bot\bot$, where $\bot$ is a special symbol\footnote{One can think of $\ket\bot$ as the $(n + 1)$-qubit state $\ket{100\cdots0}$ with the first qubit indicating whether the generator aborted or not. If the generator doesn't abort, then it outputs $\ket{0} \otimes \ket{\psi}$ for some pure state $\ket{\psi}$ (called the \emph{correct output state} of $G$ on input $(k,x)$). The distinguisher in the definition of PRFS generator would then only get the last $n$ qubits as input.
  }.
\end{definition}
Note that this definition alone does not have any constraint on $\eta$ being close to $1$. However, the security guarantee of a PRFS generator implies that $\eta$ will be negligibly close to 1 with overwhelming probability over the choice of $k$.\footnote{The argument is as follows: if $\eta$ were on average noticeably far from $1$, then a purity test using SWAP tests would distinguish the outputs from Haar random states which are pure. This is formalized in \Cref{lem:prs-orthog}.}
We also note that a PRFS generator with perfect state generation trivially has the recognizable abort property with $\eta = 1$ for all $k, x$.

\ifcrypto
\else
  \subsection{Basic Properties of PRS and PRFS Generators}

In this section we present %
the following Lemma, which establishes some orthogonality and purity properties of the output of PRFS generators, on average over the key. 
\begin{lemma}[Properties of PRFS generator outputs]
\label{lem:prs-orthog}
Let $G$ be a $(d,n)$-PRFS generator. Then there exists a negligible function $\eps(\lambda)$ such that for all $\lambda$, for all $x,y \in \{0,1\}^{d(\lambda)}$ where $x \neq y$, we have
\begin{enumerate}
    \item $
    \E_{k \leftarrow \{0,1\}^\lambda} \, \Tr(G_\lambda(k,x) \, G_\lambda(k,y)) \leq 2^{-n(\lambda)} + \eps(\lambda)$;
\item $
    \E_{k \leftarrow \{0,1\}^\lambda} \, \Tr(G_\lambda(k,x)^2) \ge 1 - \eps(\lambda)$.
\end{enumerate}
\end{lemma}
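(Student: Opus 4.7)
The plan is to use the SWAP test as a distinguisher and exploit the PRFS security guarantee to reduce both statements to direct computations against the Haar measure.

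Recall that the SWAP test applied to two density matrices $\rho, \sigma$ on the same $n$-qubit space accepts (outputs 0) with probability exactly $\tfrac{1}{2} + \tfrac{1}{2}\Tr(\rho\sigma)$. This expression is exactly what appears in both inequalities, so using the SWAP test as the QPT distinguisher in \Cref{def:prfs} will translate the pseudorandomness guarantee directly into bounds on the traces in question. The only computations I need on the Haar side are (i) for two independent Haar-random pure states $\ket{\vartheta_1}, \ket{\vartheta_2} \in \Haar_n$, the quantity $\E\Tr(\ketbra{\vartheta_1}{\vartheta_1}\ketbra{\vartheta_2}{\vartheta_2}) = \E|\ip{\vartheta_1}{\vartheta_2}|^2 = 2^{-n}$, obtained by applying \Cref{fact:haar-avg} to average $\ket{\vartheta_1}$ out to $I/2^n$; and (ii) for a single Haar-random pure state, $\Tr(\ketbra\vartheta\vartheta^2) = 1$ deterministically, since $\ketbra\vartheta\vartheta$ is a rank-one projector.

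For the first inequality, I will instantiate \Cref{def:prfs} with $s=2$, $t=1$, and the fixed pair of indices $(x,y)$ (with $x \neq y$). The distinguisher $A_\lambda$ receives two $n$-qubit registers, runs the SWAP test on them, and outputs 0 iff the SWAP test accepts. In the PRFS world the acceptance probability is $\tfrac{1}{2} + \tfrac{1}{2}\E_k \Tr(G_\lambda(k,x) G_\lambda(k,y))$, and in the Haar world it is $\tfrac{1}{2} + \tfrac{1}{2} \cdot 2^{-n}$ by observation (i). The PRFS security then gives $\E_k \Tr(G_\lambda(k,x)G_\lambda(k,y)) \le 2^{-n} + 2\eps'(\lambda)$ for a negligible $\eps'$, which absorbs into $\eps$.

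For the second inequality, I will instantiate \Cref{def:prfs} with $s=1$, $t=2$, and the single index $x$. The distinguisher receives two copies, runs the SWAP test, and outputs 0 iff it accepts. In the Haar world, using observation (ii), the acceptance probability is exactly $1$. In the PRFS world it is $\tfrac{1}{2} + \tfrac{1}{2}\E_k \Tr(G_\lambda(k,x)^2)$. Pseudorandomness then forces this to be at least $1 - \eps'(\lambda)$, giving $\E_k\Tr(G_\lambda(k,x)^2) \ge 1 - 2\eps'(\lambda)$, and again a negligible slack can be folded into $\eps$.

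The only subtlety I anticipate is bookkeeping: the two statements need a \emph{single} negligible function $\eps(\lambda)$ that works uniformly in $(x,y)$. This is handled because the security guarantee in \Cref{def:prfs} yields negligible functions that depend only on the distinguisher (a fixed QPT SWAP-test circuit family) and on the query pattern (fixed as $\{x,y\}$ or $\{x\}$); the resulting bound is uniform in the choice of $x, y \in \bit^{d(\lambda)}$, so taking the maximum of the two negligible functions from the two reductions yields the desired $\eps$. No assumption on recognizable abort or perfect state generation is needed, since the SWAP-test computations above are valid for arbitrary density matrices.
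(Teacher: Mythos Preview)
Your proposal is correct and follows essentially the same approach as the paper: use the SWAP test as the QPT distinguisher, invoke PRFS security, and compare against the Haar-side values $2^{-n}$ and $1$ respectively. Your parametrization of the two invocations (taking $s=2$, $t=1$ for item~1 and $s=1$, $t=2$ for item~2) is in fact slightly cleaner than the paper's presentation, which writes the second case informally as input $(x,x,\ldots)$.
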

\begin{proof}
Consider the following QPT algorithm $A$: on input $(x,y,\ket{\phi_1},\ket{\phi_2})$, it performs the SWAP test on $\ket{\phi_1}$ and $\ket{\phi_2}$ and accepts if the SWAP test accepts. 
If $\ket{\phi_1},\ket{\phi_2}$ are independently sampled according to the Haar measure on $n(\lambda)$ qubits, the acceptance probability is on average
\begin{equation}
    \label{eq:prs-orthog-1}
    \frac{1}{2} + \frac{1}{2} \E_{\ket{\phi_1},\ket{\phi_2} \leftarrow \Haar_{n}} \left | \ip{\phi_1}{\phi_2} \right |^2 = \frac{1}{2} + \frac{1}{2} 2^{-n(\lambda)}
\end{equation}
where we used \Cref{fact:haar-avg}. On the other hand, if the algorithm $A$ is run on input $(x,y,G_\lambda(k,x),G_\lambda(k,y))$ for randomly chosen $k$ the acceptance probability is on average
\begin{equation}
    \label{eq:prs-orthog-2}
\frac{1}{2} + \frac{1}{2} \E_{k \leftarrow \{0,1\}^\lambda}   \Tr(G_\lambda(k,x) \, G_\lambda(k,y))~.
\end{equation}
Since $A$ is a QPT algorithm, by the pseudorandomness property of the PRFS generator, \Cref{eq:prs-orthog-1,eq:prs-orthog-2} are negligibly different. Specifically, their difference is $\eps(\secparam)$, where $\eps(\secparam)$ is the negligible function guaranteed by the pseudorandomness property. This implies the first item of the Lemma.

For the second item of the Lemma, if $\ket{\phi_1} = \ket{\phi_2}$, then the algorithm $A$ accepts $(x,x,\ket{\phi_1},\ket{\phi_1})$ with probability $1$. 
On the other hand, if the algorithm $A$ is run on input $(x,x,G_\lambda(k,x),G_\lambda(k,x))$, then the acceptance probability is on average
\[
\frac{1}{2} + \frac{1}{2} \E_{k \leftarrow \{0,1\}^\lambda}   \Tr(G_\lambda(k,x)^2)~.
\]
Since the algorithm is efficient and only uses the output of the generator instead of the key, this implies that $\E_{k \leftarrow \{0,1\}^\lambda}   \Tr(G_\lambda(k,x)^2)$ is negligibly (specifically, $\eps(\secparam)$) different from $1$, as desired. 
\end{proof}

\fi

\subsection{Testing Pseudorandom States}

Given a state $\rho$, it is useful to know whether it is the output of a PRFS generator with key $k$ and input $x$. 
One approach would be to invoke the generator to get some number of copies and perform SWAP tests. Unfortunately, this approach would only achieve polynomially small error,
which is undesirable for cryptographic applications where we want negligible security.
Another approach is to ``uncompute'' the state generation.
The issue with this approach is that it is not clear how to do it when the state generation is not perfect, or if it outputs some additional auxiliary states that we do not know how to uncompute.

In the following, we will show how to use the generator in a semi-black-box way to test any PRFS states.
We first state a general Lemma that shows how to convert any circuit that generates a state $\rho$ into a tester (of sorts) for the state $\rho$. 
\begin{lemma}[Circuit output tester]
  \label{prop:test-channel}
  Let $G$ denote a (generalized) quantum circuit that takes no input and outputs an $n$-qubit mixed state $\rho$.
  Then there exists a circuit $\Test$ with boolean output such that:
  \begin{enumerate}
      \item For all density matrices $\sigma_{\reg{E} \reg{Q}}$ where $\reg{Q}$ is an $n$-qubit register, applying the circuit $\Test$ on register $\reg{Q}$ yields the following state on registers $\reg{E} \reg{F}$ where $\reg{F}$ stores the decision bit:
      \[
        (I_{\reg{E}} \otimes \Test_{\reg{Q}}) (\sigma_{\reg{E} \reg{Q}}) = \sum_b \Tr_{\reg{Q}} \Big( (I_\reg{E} \otimes M_b) \sigma_{\reg{E} \reg{Q}} \Big ) \otimes \ketbra{b}{b}_{\reg{F}} 
      \]
      where $M_1 = \rho^2$ and $M_0 = I - M_1$. 
      \item Furthermore, $\Test$ runs the unitary part\footnote{See \Cref{sec:algorithms} for a definition of the unitary part of a generalized quantum circuit.} of $G$ as a black box, and if the complexity of $G$ is $T$, the complexity of $\Test$ is $O(T + n)$.
  \end{enumerate}
\end{lemma}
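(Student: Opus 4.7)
The plan is to realize $\Test$ by a prepare--SWAP--uncompute scheme that invokes the unitary part $\hat{G}$ of $G$ as a black box. By definition $\hat{G}$ acts on an ancilla register $\reg{A}$ together with the output register $\reg{Q}_1$, with $\hat{G}\ket{0}_{\reg{A}\reg{Q}_1} = \ket{\psi}_{\reg{A}\reg{Q}_1}$ a purification of $\rho$, i.e.\ $\Tr_{\reg{A}}\ketbra{\psi}{\psi} = \rho$. My proposed $\Test$ on input register $\reg{Q}$ does the following: append fresh ancillas $\reg{A}\reg{Q}_1$ in $\ket{0}$; apply $\hat{G}$ on $\reg{A}\reg{Q}_1$; SWAP the $n$-qubit registers $\reg{Q}$ and $\reg{Q}_1$; apply $\hat{G}^\dagger$ on $\reg{A}\reg{Q}_1$; measure $\reg{A}\reg{Q}_1$ in the computational basis and set $b = 1$ in $\reg{F}$ iff every outcome is $0$; finally discard the remaining ancilla and the $\reg{Q}$ qubits. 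Since $\hat{G}$ is invoked twice and the overhead of preparation, SWAP, and measurement is $O(n)$, the complexity bound $O(T+n)$ is immediate.

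For correctness I would reduce to the pure case by linearity and write $\sigma_{\reg{EQ}} = \ketbra{\chi}{\chi}_{\reg{EQ}}$ with $\ket{\chi} = \sum_e \ket{e}_{\reg{E}} \ket{\chi_e}_{\reg{Q}}$, together with the Schmidt form $\ket{\psi} = \sum_i \sqrt{\lambda_i}\ket{i}_{\reg{A}}\ket{\phi_i}_{\reg{Q}_1}$. After the $\hat{G}$ step the joint state is $\ket{\chi}_{\reg{EQ}} \otimes \ket{\psi}_{\reg{A}\reg{Q}_1}$, and the SWAP rewrites it as $\sum_{e,i}\sqrt{\lambda_i}\ket{e}_{\reg{E}}\ket{\phi_i}_{\reg{Q}}\ket{i}_{\reg{A}}\ket{\chi_e}_{\reg{Q}_1}$. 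Using $\bra{0}_{\reg{A}\reg{Q}_1}\hat{G}^\dagger = \bra{\psi}_{\reg{A}\reg{Q}_1}$, the $b=1$ amplitude in the $(e,\reg{Q})$-slot evaluates to
\[
  \sum_i \sqrt{\lambda_i}\cdot\sqrt{\lambda_i}\,\ip{\phi_i}{\chi_e}\,\ket{\phi_i}_{\reg{Q}} \;=\; \rho\,\ket{\chi_e}_{\reg{Q}},
\]
so the entire $b=1$ branch of the post-measurement state collapses to $(I_{\reg{E}}\otimes\rho_{\reg{Q}})\ket{\chi}_{\reg{EQ}}$.

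Converting back to density matrices, tracing out $\reg{Q}$ (which $\Test$ discards), and invoking cyclicity of the partial trace under $\reg{Q}$-local operators, the $b = 1$ contribution is
\[
  \Tr_{\reg{Q}}\!\bigl[(I_{\reg{E}}\otimes\rho)\,\sigma_{\reg{EQ}}\,(I_{\reg{E}}\otimes\rho)\bigr] \;=\; \Tr_{\reg{Q}}\!\bigl[(I_{\reg{E}}\otimes\rho^2)\,\sigma_{\reg{EQ}}\bigr],
\]
which identifies $M_1 = \rho^2$; then $M_0 = I - \rho^2 \succeq 0$ is forced by $\rho \preceq I$, and completeness $M_0 + M_1 = I$ is automatic. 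The main obstacle I foresee is picking the right primitive: a naive SWAP test between a freshly generated $\rho$ and $\sigma$ only yields the POVM element $\tfrac{1}{2}(I+\rho)$ on $\sigma$, not $\rho^2$. The key design choice is the SWAP-followed-by-$\hat{G}^\dagger$ step, which converts the Schmidt weights $\sqrt{\lambda_i}$ into $\lambda_i$ and thereby produces $\rho$ cleanly as a Kraus operator; once that is in place the verification reduces to the Schmidt-decomposition computation sketched above.
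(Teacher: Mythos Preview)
Your proposal is correct and follows essentially the same prepare--SWAP--uncompute construction as the paper, with the same Schmidt-decomposition computation showing that the accept branch collapses to $(I_{\reg{E}}\otimes\rho_{\reg{Q}})\ket{\chi}$ and hence yields $M_1=\rho^2$ after tracing out $\reg{Q}$. The only differences are notational (the paper names the output and garbage registers $\reg{R},\reg{B}$ rather than $\reg{Q}_1,\reg{A}$, and uses a Schmidt decomposition of $\ket{\chi}$ rather than a computational-basis expansion on $\reg{E}$), but the argument is otherwise identical.
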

\ifcrypto
  Due to space constraints, we defer the proof of this lemma to \Cref{sec:test-channel-proof}.
\else
  \begin{proof}
    Consider the unitary part $\hat{G}$ of $G$, which takes as input a register $\reg{A}$ and outputs registers $\reg{R} \reg{B}$ where $\reg{R}$ has $n$-qubits and $\reg A$ and $\reg B$ have the appropriate number of qubits.
  
The circuit $\Test$ takes as input an $n$-qubit register $\reg{Q}$ and outputs registers $\reg{F} \reg{Q}$ where $\reg{F}$ is a single-qubit accept/reject register. %
It behaves as follows:
\begin{enumerate}
  \item \label{alg:test-channel-0} Initialize an ancilla register $\reg{A}$ in the state $\ket{0 \cdots 0}$, and initialize a single-qubit register $\reg{F}$ in the state $\ket{0}$.
  \item \label{alg:test-channel-1} Run the unitary part $\hat{G}$ on register $\reg A$ to obtain registers $\reg{R} \reg{B}$;
\item \label{alg:test-channel-2} Swap the registers $\reg{Q}$ and $\reg{R}$;
\item \label{alg:test-channel-3} Apply the inverse $\hat{G}^\dagger$ on registers $\reg{R} \reg{B}$ to get register $\reg{A}$;
\item \label{alg:test-channel-4} Measure the register $\reg{A}$ in the computational basis; if the outcome is $\ket{0 \cdots 0}$, then flip the qubit in $\reg{F}$ to $\ket{1}$. 
\item \label{alg:test-channel-5} 
Trace out the register $\reg{Q}$. 
\end{enumerate}
This concludes the description of $\Test$. Item 2 of the Lemma statement follows from inspection. 
\medskip
\begin{figure}[H]
\centering
$
    \Qcircuit @C=0.6em @R=1.0em {
      \sigma & & &  & & {/} \qw & \qw & \multigate{1}{S}  & \qw & \qw & \qw & \qswap & &  & & & \reg{Q} \\
      \ket{0 \cdots 0} & & & & & {/} \qw  & \gate{\hat{G}} & \ghost{S} & \gate{\hat{G}^\dagger} & \qw & \meter & \control \cw \cwx[1]  & & & & & \reg{A} \\
      \ket{0}  & & & &  & \qw & \qw & \qw     & \qw & \qw & \qw & \targ & \cw & \cw & & & \reg{F}
    }
    $
\caption{$\Test$ circuit. The $S$ gate denotes SWAP between registers $\reg{Q}$ and $\reg{R}$. The $\reg{F}$ register is set to $\ket{1}$ if and only if the $\reg{A}$ register measures to all zeroes. The $\reg{Q}$ and $\reg{A}$ registers are traced out at the end.}
\label{fig:teleportation-gadget-intro-simple}
\end{figure}
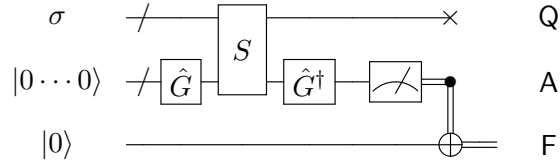
\medskip

We now prove Item 1 of the Lemma statement. Fix a density matrix $\sigma_{\reg{E} \reg{Q}}$. Without loss of generality we can assume that $\sigma$ is a pure state $\ket{\theta}_{\reg{E} \reg{Q}}$ (because we can let $\reg{E}$ contain the purification). We analyze running the circuit $\Test$ on $\reg{Q}$ of $\ket{\theta}$.

Let $\ket{\theta}_{\reg{E} \reg{Q}} = \sum_i \alpha_i \ket{u_i}_{\reg{E}} \otimes \ket{v_i}_{\reg{Q}}$ denote the Schmidt decomposition of $\ket{\theta}$ for some orthonormal bases $\{ \ket{u_i} \}, \{ \ket{v_i} \}$.
After Step~\ref{alg:test-channel-1} of the $\Test$ circuit, the global state is
\[
    \ket{\theta}_{\reg{E} \reg{Q}} \otimes \hat{G} \ket{0 \cdots 0}_{\reg{A}}~.
\]
Let $\hat{G}\ket{0}_{\reg A} = \sum_j \beta_j \ket{\psi_j}_{\reg R} \otimes \ket{\phi_j}_{\reg B}$ denote the Schmidt decomposition of $\hat{G}\ket{0}$ for some orthonormal bases $\{ \ket{\psi_j} \}, \{ \ket{\phi_j} \}$. 
After Step~\ref{alg:test-channel-2} the global state can be written as
\[
    \sum_{ij} \alpha_i \beta_j \ket{u_i}_{\reg{E}} \otimes \ket{\psi_j}_{\reg{Q}} \otimes \ket{v_i}_{\reg{R}} \otimes \ket{\phi_j}_{\reg{B}}~.
\]
After Step~\ref{alg:test-channel-3} the global state can be written as 
\[
    \sum_{ij} \alpha_i \beta_j \ket{u_i}_{\reg{E}} \otimes \ket{\psi_j}_{\reg{Q}} \otimes \hat{G}^\dagger \Big( \ket{v_i}_{\reg{R}} \otimes \ket{\phi_j}_{\reg{B}} \Big)~.
\]
In Step~\ref{alg:test-channel-4}, the $\reg{A}$ register is measured. If the outcome is all zeroes, then the post-measurement state can be written as (up to normalization) 
\begin{align*}
    &\equad \sum_{ij} \alpha_i \beta_j \ket{u_i}_{\reg{E}} \otimes \ket{\psi_j}_{\reg{Q}} \otimes \ketbra{0}{0}_{\reg{A}} \hat{G}^\dagger \Big( \ket{v_i}_{\reg{R}} \otimes \ket{\phi_j}_{\reg{B}} \Big) \\
    &= \sum_{ij} \alpha_i \beta_j \ket{u_i}_{\reg{E}} \otimes \ket{\psi_j}_{\reg{Q}} \otimes \Big( \sum_k \beta_k \bra{\psi_k}_{\reg{R}} \otimes \bra{\phi_k}_{\reg{B}} \Big) \Big( \ket{v_i}_{\reg{R}} \otimes \ket{\phi_j}_{\reg{B}} \Big) \, \ket{0}_{\reg{A}}  \\
    &= \sum_{ij} \alpha_i \, \beta_j^2 \, \ip{\psi_j}{v_i} \, \ket{u_i}_{\reg{E}} \otimes \ket{\psi_j}_{\reg{Q}} \otimes \ket{0}_{\reg{A}} \\
    &= (I_{\reg{E}} \otimes \rho_{\reg{Q}}) \ket{\theta}_{\reg{E} \reg{Q}} \otimes \ket{0}_{\reg{A}}
\end{align*}
where in the second line we used our Schmidt decomposition for $\hat{G} \ket{0 \cdots 0}$ and in the third line we used the orthonormality of the basis $\{ \ket{\phi_j} \}$. The fourth line follows since by definition of $G$ we have
  $\rho = \sum_j \beta_j^2 \ketbra{\psi_j}{\psi_j}$.

If the measurement outcome is all zeroes, the register $\reg{F}$ is set to $\ket{1}$. Otherwise it remains $\ket{0}$.

In Step~\ref{alg:test-channel-5}, the registers $\reg{Q}$ and $\reg{A}$ are traced out. Thus conditioned on getting the all zeroes outcome, the state on register $\reg{E}$ is 
\[
    \Tr_{\reg{Q}} \Big( \rho_{\reg{Q}} \, \ketbra{\theta}{\theta}_{\reg{E} \reg{Q}} \, \rho_{\reg{Q}} \Big) = \Tr_{\reg{Q}} \Big( \rho_{\reg{Q}}^2 \, \ketbra{\theta}{\theta}_{\reg{E} \reg{Q}}\Big)
\]
where we used the cyclicity of the partial trace with respect to operators acting on register $\reg{Q}$ only. Conditioned on \emph{not} getting the all zeroes outcome, it must be that the state on register $\reg{E}$ is
\[
\Tr_{\reg{Q}} \Big( (I - \rho^2)_{\reg Q} \, \ketbra{\theta}{\theta}_{\reg{E} \reg{Q}}\Big)~.
\]
This establishes that the output of the $\Test$ algorithm is as described in the Lemma statement.

  \end{proof}
\fi

We note that if a PRFS satisfies perfect state generation, then the $\Test$ algorithm corresponding to the circuit $G_\secparam(k,x)$ implements a projection onto the state $\ket{\psi_{k,x}} = G_\secparam(k,x)$ in the case that the $\Test$ accepts (i.e. outputs $1$). If the PRFS satisfies the weaker recognizable abort property, we get that the $\Test$ algorithm implements a \emph{scaled} projection onto the correct state $\ket{\psi_{k,x}}$.

\begin{corollary}[PRFS tester with recognizable abort]
\label{lem:test}
  Let $G$ be a $(d,n)$-PRFS generator with the recognizable abort property. Then there exists a QPT algorithm $\Test$ such that for all $\lambda$, $k \in \{0,1\}^\lambda$ and $x \in \{0,1\}^{d(\lambda)}$, for all density matrices $\sigma_{\reg{E} \reg{Q}}$ where $\reg{Q}$ is an $n(\secparam)$-qubit register, applying $\Test(k,x,\cdot)$ to register $\reg{Q}$ yields the following state on registers $\reg{E} \reg{F}$ where $\reg{F}$ stores the decision bit:
   \[
        (I_{\reg{E}} \otimes \Test_{\reg{Q}}) (k,x,\sigma_{\reg{E} \reg{Q}}) = \sum_b \Tr_{\reg{Q}} \Big( (I_\reg{E} \otimes M_b) \sigma_{\reg{E} \reg{Q}} \Big ) \otimes \ketbra{b}{b}_{\reg{F}} 
      \]
      where $M_1 = \eta^2 \ketbra{\psi}{\psi}$ and $M_0 = I - M_1$ with $\eta, \ket\psi$ (which generally depend on $k,x$) are those guaranteed by the recognizable abort property.
\end{corollary}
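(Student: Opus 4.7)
The plan is to reduce directly to \Cref{prop:test-channel}. Fix $\secparam$, $k \in \bit^\secparam$, and $x \in \bit^{d(\secparam)}$, and let $G^{k,x}_\secparam$ denote the generalized circuit obtained from $G_\secparam$ by hardcoding $(k,x)$. Adopting the $(n(\secparam)+1)$-qubit convention from the footnote of \Cref{def:classicalgen}, the recognizable abort property gives $G^{k,x}_\secparam$ the output
\[
  \rho = \eta\,\ketbra{0,\psi}{0,\psi} + (1-\eta)\,\ketbra{\bot}{\bot},
\]
where $\ket{\bot} = \ket{1, 0^{n(\secparam)}}$. The two pure states are orthogonal (they differ on the flag qubit), so $\rho^2 = \eta^2 \ketbra{0,\psi}{0,\psi} + (1-\eta)^2 \ketbra{\bot}{\bot}$.

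Applying \Cref{prop:test-channel} to the circuit $G^{k,x}_\secparam$ yields a tester $\widetilde{\Test}$ acting on an $(n(\secparam)+1)$-qubit register, implementing the two-outcome measurement $\{\widetilde M_0, \widetilde M_1\}$ with $\widetilde M_1 = \rho^2$ and $\widetilde M_0 = I - \widetilde M_1$. I then define $\Test(k, x, \cdot)$ as follows: on an $n(\secparam)$-qubit input register $\reg{Q}$, prepend a fresh ancilla qubit $\reg{Q}_0$ initialized to $\ket{0}$ and run $\widetilde{\Test}$ on the combined register $\reg{Q}_0\reg{Q}$, outputting the resulting flag in $\reg{F}$. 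Since $\widetilde\Test$ only uses the unitary part of $G_\secparam^{k,x}$ as a black box, wiring $(k,x)$ in as ordinary classical inputs makes $\Test$ a uniform QPT algorithm in $(k,x)$.

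It remains to verify the effective operators on $\reg{Q}$. Invoking \Cref{prop:test-channel} on the extended state $\ketbra{0}{0}_{\reg{Q}_0} \otimes \sigma_{\reg{E}\reg{Q}}$ and using the fact that the trace over $\reg{Q}_0 \reg{Q}$ factors as a sandwich by $\bra{0}_{\reg{Q}_0}, \ket{0}_{\reg{Q}_0}$ followed by the trace over $\reg{Q}$, the effective operators become $M_b = \bra{0}_{\reg{Q}_0} \widetilde M_b \ket{0}_{\reg{Q}_0}$. For $b = 1$, the $\ket{\bot}$ contribution is annihilated by $\ip{0}{1} = 0$ on the flag qubit, so $M_1 = \eta^2 \ketbra{\psi}{\psi}$; completeness then forces $M_0 = I_{\reg Q} - M_1$. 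This matches the claim exactly.

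The corollary is essentially a repackaging of \Cref{prop:test-channel} for the PRFS generator, so there is no serious obstacle. The only delicate point is the presence of the abort flag qubit, which is handled automatically by padding the input with a $\ket{0}$ flag so that the unwanted $\ketbra{\bot}{\bot}$ cross term drops out by orthogonality.
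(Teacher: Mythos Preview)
Your proposal is correct and takes essentially the same approach as the paper: both reduce to \Cref{prop:test-channel} applied to the hardwired circuit $G^{k,x}_\secparam$, and both exploit the orthogonality of $\ket{\psi}$ and $\ket{\bot}$ to kill the abort contribution to $M_1$. The paper phrases the construction as ``first test whether the input is $\ket{\bot}$ and reject if so, then apply the tester,'' whereas you phrase it as ``prepend a fresh $\ket{0}$ flag qubit and apply the tester''; under the $(n+1)$-qubit flag convention these are the same operation, and your version is arguably the more explicit of the two.
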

\begin{proof}
  Fix $\lambda$ and $k\in\{0,1\}^\lambda, x \in \{0,1\}^{d(\lambda)}$.
  By the recognizable abort property, we know that $G_\secparam(k, x) = \eta\ketbra\psi\psi + (1 - \eta)\ketbra\bot\bot$.
  We implement the circuit $\Test$ by first testing whether the input state is $\ket\bot$ (which we can do since it is a fixed known state), rejecting if so, and otherwise applying the test circuit from \Cref{prop:test-channel} with the circuit $G_{k,x}$ that takes no input and outputs $\rho = G_\secparam(k, x)$.
  Since we projected the input state to have no overlap with $\ket{\bot}$, we get that
  \[
    \rho \, \sigma \, \rho = \eta^2 \, \ketbra{\psi}{\psi} \, \sigma \, \ketbra{\psi}{\psi}
  \]
  as desired.
\end{proof}

Next we analyze a \emph{product} of $\Test$ algorithms run in parallel on different qubits of a (possibly entangled) state.

\begin{corollary}[Product of PRFS testers with recognizable abort]
\label{lem:test-product}
  Let $G$ be a $(d, n)$-PRFS generator with the recognizable abort property and let $\Test$ denote the corresponding tester algorithm given by \Cref{lem:test}. Fix $\lambda, t \in \N$. For all $k_1,\ldots,k_t \in \{0,1\}^\lambda$ and for all $x_1,\ldots,x_t \in \{0,1\}^{d(\lambda)}$, define the QPT algorithm $\Test^{\otimes t}$ that given an $t\cdot n(\lambda)$-qubit density matrix $\sigma$ behaves as follows: for all $i=1,\ldots,t$, on the $i$'th block of $n(\lambda)$ qubits of $\sigma$, run the algorithm $\Test_\lambda(k_i,x_i,\cdot)$. Output $1$ if and only if all $t$ invocations of $\Test$ output $1$.
  
  Then $\Test^{\otimes t}$ satisfies the following. For all density matrices $\sigma_{\reg{E} \reg{Q}}$ where $\reg{Q}$ is an $t \cdot n(\secparam)$-qubit register, applying $\Test^{\otimes t}$ to register $\reg{Q}$ yields the following state on registers $\reg{E} \reg{Q} \reg{F}$ where $\reg{F}$ stores the decision bit:
   \[
        (I_{\reg{E}} \otimes \Test^{\otimes t}) (\sigma_{\reg{E} \reg{Q}}) = \sum_b \Tr_{\reg{Q}} \Big( (I_\reg{E} \otimes M_b) \sigma_{\reg{E} \reg{Q}} \Big ) \otimes \ketbra{b}{b}_{\reg{F}} 
      \]
where $M_1 = \eta^2 \ketbra{\psi}{\psi}$ and $M_0 = I - M_1$ with $\ket{\psi} = \ket{\psi_{k_1,x_1}} \otimes \cdots \otimes \ket{\psi_{k_t,x_t}}$, and $\eta = \eta_{k_1,x_1} \cdots \eta_{k_t,x_t}$ where $\ket{\psi_{k_i, x_i}}, \eta_{k_i, x_i}$ for $i = 1, ..., t$ are the values guaranteed by the recognizable abort property.
\end{corollary}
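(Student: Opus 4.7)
The plan is to reduce to the single-copy statement of \Cref{lem:test} and then do straightforward bookkeeping about parallel composition of channels. The key observation is that the $t$ invocations of $\Test_\secparam(k_i, x_i, \cdot)$ act on pairwise disjoint qubits: the $i$-th invocation touches only the $i$-th block of $n(\secparam)$ qubits of $\reg{Q}$ together with its own fresh ancilla (whose unitary part is an independent copy of $\hat G_\secparam$). Consequently the $t$ instantiations commute, and the overall channel induced by the parallel execution of all $t$ testers is exactly the tensor product $\Phi_1 \otimes \cdots \otimes \Phi_t$ of the individual channels $\Phi_i$, each of which outputs a classical bit $b_i$ into its own single-qubit register $\reg{F}_i$.

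By \Cref{lem:test}, each $\Phi_i$ has the form $\Phi_i(\tau) = \sum_{b_i \in \{0,1\}} \Tr_{\reg{Q}_i}\bigl(M_{b_i}^{(i)} \tau\bigr) \otimes \ketbra{b_i}{b_i}_{\reg{F}_i}$ where $M_1^{(i)} = \eta_i^2 \ketbra{\psi_i}{\psi_i}$ and $M_0^{(i)} = I - M_1^{(i)}$, with $\eta_i$ and $\ket{\psi_i}$ the quantities supplied by the recognizable abort property for the pair $(k_i, x_i)$. Taking the tensor product over $i$ and expanding, the action of the joint channel on $\sigma_{\reg{E}\reg{Q}}$ is
\[
  \sum_{(b_1, \ldots, b_t) \in \{0,1\}^t} \Tr_{\reg{Q}}\Bigl( \bigl(I_{\reg{E}} \otimes \bigotimes_{i=1}^{t} M_{b_i}^{(i)}\bigr) \sigma_{\reg{E}\reg{Q}} \Bigr) \otimes \ketbra{b_1\cdots b_t}{b_1\cdots b_t}_{\reg{F}_1 \cdots \reg{F}_t}.
\]

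Since the output bit of $\Test^{\otimes t}$ is the logical AND of the $b_i$, I would coarse-grain the above sum into two pieces. The accept outcome collects the single term $(b_1, \ldots, b_t) = (1, \ldots, 1)$ whose operator is $\bigotimes_i \eta_i^2 \ketbra{\psi_i}{\psi_i} = \eta^2 \ketbra{\psi}{\psi}$, matching the claimed $M_1$ after using the definitions $\eta = \prod_i \eta_i$ and $\ket\psi = \bigotimes_i \ket{\psi_{k_i,x_i}}$. The reject outcome aggregates the remaining $2^t - 1$ terms, whose operators sum to $\bigotimes_i (M_0^{(i)} + M_1^{(i)}) - \bigotimes_i M_1^{(i)} = I - M_1$ by completeness of each POVM $\{M_0^{(i)}, M_1^{(i)}\}$, matching the claimed $M_0$.

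There is essentially no obstacle; the only thing to be careful about is confirming that the individual testers really do act on independent ancillary workspaces (so that the overall channel factors as a tensor product rather than involving cross-terms), and tracking the partial trace over the input blocks correctly. Both follow immediately from \Cref{lem:test} applied component-wise, completing the proof.
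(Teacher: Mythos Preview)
Your proposal is correct and takes essentially the same approach as the paper: the paper's proof is a one-sentence observation that each invocation of $\Test(k_i,x_i,\cdot)$, conditioned on accepting, implements the scaled projection $\eta_{k_i,x_i}\ketbra{\psi_{k_i,x_i}}{\psi_{k_i,x_i}}$ on a disjoint register, and you have simply spelled out the tensor-product bookkeeping and the coarse-graining of the AND explicitly.
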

\begin{proof}
  This follows from the fact that each invocation of $\Test(k_i,x_i,\cdot)$, conditioned on accepting, implements a (scaled) projection $\eta_{k_i,x_i} \ketbra{\psi_{k_i,x_i}}{\psi_{k_i,x_i}}$ on a disjoint register of $\sigma$. 
\end{proof}

We note that the previous two Corollaries establish the behavior of the $\Test$ procedure for \emph{every} fixed key $k$ (or sequence of keys, in the case of \Cref{lem:test-product}). The next Lemma establishes the behavior of the $\Test$ procedure when given outputs of \emph{any} PRFS generator (even ones without recognizable abort); the bounds are stated \emph{on average} over a uniformly random key $k$.

\begin{lemma}[Self-testing PRFS]
  \label{lem:test-honest}
  Let $G$ be a $(d, n)$-PRFS generator and $\Test(k, x, \cdot)$ denote the tester algorithm for $G(k, x)$ given by \Cref{prop:test-channel}.
  There exists a negligible function $\nu(\cdot)$ such that for all $\lambda$, for all $x \neq y$,
  \[
    \Pr_{k}[\Test(k, x, G(k, x)) = 1] \ge 1 - \nu(\lambda),
  \]
  and
  \[
    \Pr_{k}[\Test(k, x, G(k, y)) = 1] \le 2^{-n(\lambda)} + \nu(\lambda).
  \]
\end{lemma}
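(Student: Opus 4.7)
The plan is to apply \Cref{prop:test-channel} to the generalized circuit $G_\lambda(k,x,\cdot)$ that outputs $\rho_{k,x} := G_\lambda(k,x)$: this tells us that for any input density matrix $\sigma$, the tester $\Test(k,x,\sigma)$ accepts with probability exactly $\Tr(\rho_{k,x}^2\,\sigma)$. Both bounds in the lemma then reduce to elementary manipulations combined with \Cref{lem:prs-orthog}.

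For the first inequality I would substitute $\sigma = \rho_{k,x}$, so that
\[
  \Pr_{k}[\Test(k,x,G(k,x)) = 1] \;=\; \E_{k \leftarrow \{0,1\}^\lambda}\,\Tr(\rho_{k,x}^3).
\]
The main (small) observation here is that $\Tr(\rho^3) \geq \Tr(\rho^2)^2$ for every density matrix $\rho$. This follows from Cauchy--Schwarz applied to the eigenvalue spectrum $\{\lambda_i\}$ of $\rho$: writing $\sum_i \lambda_i^2 = \sum_i \sqrt{\lambda_i}\cdot\lambda_i^{3/2}$ and using $\sum_i\lambda_i = 1$ gives $\bigl(\sum_i\lambda_i^2\bigr)^2 \leq \bigl(\sum_i\lambda_i\bigr)\bigl(\sum_i\lambda_i^3\bigr) = \Tr(\rho^3)$. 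Combining this pointwise bound with Jensen's inequality for the convex function $y\mapsto y^2$ and item 2 of \Cref{lem:prs-orthog} yields
\[
  \E_k \Tr(\rho_{k,x}^3) \;\geq\; \E_k \Tr(\rho_{k,x}^2)^2 \;\geq\; \bigl(\E_k \Tr(\rho_{k,x}^2)\bigr)^2 \;\geq\; (1-\eps(\lambda))^2 \;\geq\; 1 - 2\eps(\lambda),
\]
which is $1 - \nu(\lambda)$ for a suitable negligible $\nu$.

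For the second inequality, fix $x \neq y$ and substitute $\sigma = \rho_{k,y}$, giving
\[
  \Pr_{k}[\Test(k,x,G(k,y)) = 1] \;=\; \E_{k \leftarrow \{0,1\}^\lambda}\,\Tr(\rho_{k,x}^2\,\rho_{k,y}).
\]
Since the eigenvalues of the density matrix $\rho_{k,x}$ lie in $[0,1]$ we have the operator inequality $\rho_{k,x}^2 \preceq \rho_{k,x}$, so $\rho_{k,x} - \rho_{k,x}^2 \succeq 0$, and taking the trace against the PSD matrix $\rho_{k,y}$ gives $\Tr(\rho_{k,x}^2\,\rho_{k,y}) \leq \Tr(\rho_{k,x}\,\rho_{k,y})$. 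Averaging over $k$ and invoking item 1 of \Cref{lem:prs-orthog} bounds this by $2^{-n(\lambda)} + \eps(\lambda)$, which is the desired inequality (absorbing $\eps$ into $\nu$).

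The main potential obstacle is only the reduction from $\Tr(\rho^3)$ to the ``purity'' quantity $\Tr(\rho^2)$ that is controlled by \Cref{lem:prs-orthog}; I expect this to be handled cleanly by the Cauchy--Schwarz step above. Everything else is a direct consequence of the exact acceptance probability formula given by \Cref{prop:test-channel} and the two average-case bounds already proven for PRFS outputs.
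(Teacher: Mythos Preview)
Your proposal is correct and matches the paper's proof essentially line for line: both compute the acceptance probability as $\E_k \Tr(\rho_{k,x}^3)$, bound it below by $\E_k \Tr(\rho_{k,x}^2)^2$ via the same Cauchy--Schwarz argument on the spectrum, and handle the second inequality via $\rho_{k,x}^2 \preceq \rho_{k,x}$ together with item~1 of \Cref{lem:prs-orthog}. The only cosmetic difference is that you pass from $\E_k \Tr(\rho_{k,x}^2)^2$ to $1 - \mathrm{negl}$ using Jensen's inequality, whereas the paper invokes Markov's inequality; your route is slightly cleaner and gives a marginally better (but irrelevant) constant.
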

\begin{proof}
  By \Cref{prop:test-channel},
  \[
    \Pr_{k}[\Test(k, x, G(k, x)) = 1]
      = \E_k\left[\Tr(G(k, x)^3)\right]
      \ge \E_k\left[\frac{\Tr(G(k, x)^2)^2}{\Tr(G(k, x))}\right]
      = \E_k\left[\Tr(G(k, x)^2)^2\right],
  \]
  which is negligibly close to 1 by Item 1 of \Cref{lem:prs-orthog} and Markov's inequality, and the inequality is due to the following fact.
  For any finite-dimensional vector $x$ with non-negative coefficients, by Cauchy--Schwarz we have
  \begin{align*}
    \norm{x}_1 \norm{x}_3^3
      &= \left(\sum_i x_i\right) \left(\sum_i x_i^3\right)
      \ge \left(\sum_i \sqrt{x_i} \sqrt{x_i^3}\right)^2
      = \norm{x}_2^4.
  \end{align*}
  Similarly,
  \[
    \Pr_{k}[\Test(k, x, G(k, y)) = 1]
      = \E_k\left[\Tr(G(k, x)^2G(k, y))\right]
      \le \E_k\left[\Tr(G(k, x)G(k, y))\right],
  \]
  which is at most negligibly larger than $2^{-n(\lambda)}$ by Item 2 of \Cref{lem:prs-orthog}, and the inequality is due to the fact that $G(k, x) \preccurlyeq I$.
\end{proof}

\newcommand{\measurefail}{\mathsf{Abort}}
\newcommand{\swapfail}{\mathsf{SwapFail}}
\newcommand{\orthfail}{\mathsf{OrthFail}}

\section{Constructing PRFS from PRS}
\label{sec:prfs-from-prs}

In this section we present our construction of PRFS generators using PRS generators, which are seemingly weaker objects. As mentioned in the introduction, there is a trivial construction of PRFS from PRS. Let $G$ be a PRS generator. Define the PRFS generator $G'$ with input length $d(\lambda) = O(\log \lambda)$, where $G'_{\lambda'}(k,x) = G_\lambda(k_x)$ with $\lambda' = 2^{d(\lambda)} \lambda$ and $k_x$ denoting the $x$'th block of $\lambda$ bits in $k \in \{0,1\}^{\lambda '}$. 
However, this simple construction is such that the input length is always at most logarithmic in the seed length. This, as far as we can tell, is not very useful for applications. 

We are going to present a more interesting construction: we will build a PRFS generator for \emph{any} input length $d(\secparam)$ that is at most constant times $\log \secparam$, as long as the the output length of the starting PRS generator is at least $2d(\secparam) + \omega(\log \log \secparam)$.
Although the input length may appear modest, such PRFS generators are sufficient for most of the applications we consider in this paper. We find it an intriguing question of whether it is possible to construct PRFS generators with longer input lengths from PRS generators in a black box way.%

\begin{theorem}
  \label{thm:prfs-from-prs}
  Let $d(\secparam),n(\secparam)$ be functions such that $d(\secparam) = O(\log \secparam)$ and $n(\secparam) = d(\secparam) + \omega(\log\log \secparam)$.
  Let $G$ denote a $(n(\secparam) + d(\secparam))$-PRS generator.
  Then there exists a $(d(\secparam),n(\secparam))$-PRFS generator $F$ with the recognizable abort property, such that for all $\secparam$ the circuit $F_\secparam$ invokes the $G_\secparam$ as a black box. 
\end{theorem}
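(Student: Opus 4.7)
The plan is to follow the post-selection construction outlined in the technical overview. The PRFS generator $F_\secparam(k,x)$ runs the PRS generator $G_\secparam$ on key $k$ to produce the pure state $\ket{\psi_k} = \sum_y \alpha_y \ket{y} \otimes \ket{\psi_{k,y}}$ on $n+d$ qubits (pure by perfect state generation of PRS), measures the first $d$ qubits in the computational basis, and outputs the residual $n$-qubit state if the outcome equals $x$; otherwise it retries with a fresh invocation of $G_\secparam(k)$, up to $N := \secparam \cdot 2^{d+1}$ attempts, outputting the designated $\ket\bot$ state if all attempts fail. The recognizable abort property is immediate: since $\ket{\psi_k}$ is a fixed pure state for each $k$, every successful post-selection yields the same state $\ket{\psi_{k,x}}$, so $F_\secparam(k,x)$ is always a convex mixture of $\ketbra{\psi_{k,x}}{\psi_{k,x}}$ and $\ketbra{\bot}{\bot}$.

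For pseudorandomness I would use a hybrid argument over a selectively chosen query sequence $x_1,\ldots,x_s$ with $t$ copies each. Since $d = O(\log\secparam)$, the total number of PRS invocations across the execution is $s \cdot t \cdot N = \mathrm{poly}(\secparam)$. In hybrid $H_0$ the generator runs honestly; in $H_1$ every invocation of $G_\secparam(k)$ is replaced by a fresh copy of a single Haar-random $(n+d)$-qubit state $\ket\vartheta$; in $H_2$ the adversary receives, for each $i$, $t$ copies of an independent Haar-random $n$-qubit state $\ket{\vartheta_i}$. The transition $H_0 \approx H_1$ is computational indistinguishability by PRS pseudorandomness with $s\cdot t \cdot N$ copies, while $H_1 \approx H_2$ will be statistical.

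The two ingredients for $H_1 \approx H_2$ are as follows. First, to control the abort probability, I would apply L\'evy's lemma (\Cref{fact:levy}) to $f(\ket\vartheta) = \braket{\vartheta | (\ketbra{x}{x}\otimes I) | \vartheta}$, which has Lipschitz constant $2$ and expected value $2^{-d}$ over $\Haar_{n+d}$; this gives $\Pr[|\alpha_x|^2 < \tfrac{1}{2}\cdot 2^{-d}] \le \exp(-\Omega(2^{n-d}))$, and a union bound over the $s$ queries combined with the hypothesis $n-d = \omega(\log\log\secparam)$ makes this negligible. Hence with overwhelming probability over $\ket\vartheta$ the per-attempt success probability for every queried $x$ is at least $\tfrac{1}{2}\cdot 2^{-d}$, making the abort probability of $F$ over $N = 2^{d+1}\cdot\secparam$ attempts at most $\exp(-\Omega(\secparam))$. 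Second, to argue that the non-aborted outputs are Haar-distributed, I would invoke the unitary-invariance trick from the overview: applying the block-diagonal unitary $R = \sum_y \ketbra{y}{y} \otimes R_y$ with independent Haar-random $R_y$ leaves the distribution of $\ket\vartheta$ invariant, which implies that the joint distribution of $(\ket{\psi_{\vartheta,x_1}}, \ldots, \ket{\psi_{\vartheta,x_s}})$ is unchanged when each component is hit with an independent Haar-random unitary, and therefore already equals a product of independent Haar-random $n$-qubit states.

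The main obstacle is that the parameter regime is genuinely tight: one needs $2^{n-d}$ to dominate the union-bound loss $\log s = O(\log\secparam)$ while keeping $N = 2^d \cdot \mathrm{poly}(\secparam)$ polynomial, and this is exactly what the hypotheses $d = O(\log\secparam)$ and $n-d = \omega(\log\log\secparam)$ provide. A secondary subtlety is that in $H_1$ the $s \cdot t \cdot N$ copies of $\ket\vartheta$ are independent copies of a single fixed vector, so different attempts can yield different measurement outcomes but, conditioned on success, always collapse to the same residual $\ket{\psi_{\vartheta,x}}$ determined by the (fixed) decomposition of $\ket\vartheta$ across the $d/n$ cut; verifying this carefully is what makes the multiple-copy guarantee go through cleanly and justifies combining outputs across different post-selection attempts.
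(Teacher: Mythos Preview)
Your proposal is correct and matches the paper's proof essentially step for step: the same post-selection construction (the paper uses $2^d\lambda$ attempts rather than your $2^{d+1}\lambda$, but this is immaterial), the same reduction to PRS security via an adversary that consumes $\mathrm{poly}(\lambda)$ copies, the same L\'evy-lemma bound on $|\alpha_x|^2$ with Lipschitz constant $2$ to control aborts, and the same block-diagonal Haar randomization $R = \sum_y \ketbra{y}{y}\otimes R_y$ to show the post-selected residuals are jointly Haar. The paper makes one point slightly more explicit than you do---that the abort event is independent of the randomizing unitary $R$, so conditioning on non-abort does not disturb the Haar distribution of the residuals---but your sketch clearly has this in mind.
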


The rest of this section is dedicated to proving the theorem.
For notational clarity we use the abbreviations $d = d(\secparam)$ and $n = n(\secparam)$. %

The construction of the PRFS generator is given by the following circuit $F_{\secparam}(k,x)$.
On input key $k \in \{0,1\}^{\secparam}$, input $x \in \{0,1\}^{d}$, repeat the following $2^d \cdot \secparam$ times: 
        \begin{itemize}
            \item Compute the $(d + n)$-qubit state $\rho_k \leftarrow G_{\secparam}(k)$.

            \item Measure the first $d$ qubits of $\rho_k$ in the computational basis to obtain a string $y \in \{0,1\}^d$. If $y = x$, then output the remaining $n$ qubits. Otherwise, continue.
        \end{itemize}
If the measurement outcomes was different from $x$ in all the $2^d \secparam$ iterations, set $\sigma_{k,x}=\ketbra{\bot}{\bot}$.
Let the output be $\sigma_{k,x}$.

\medskip

The algorithm $F = \{F_\secparam \}_\secparam$ is uniform QPT because for each $\secparam$, the running time of the circuit $F_\secparam$ is going to be $O(2^d \cdot \secparam)$ times the complexity of running $G_\secparam$, which is QPT since $d = O(\log \secparam)$ and $G$ is QPT.
It is easy to see that even if $G$ (as a PRFS generator) only satisfies recognizable abort (instead of perfect generation), $F$ still satisfies recognizable abort by construction.
Therefore, the construction also works with the PRS generator constructed by Brakerski and Shmueli~\cite{BrakerskiS20}.

\ifcrypto
  Due to space constraints, we defer the proof of security to \Cref{sec:prfs-security}.
\else
  We now argue that the outputs of $F$ satisfy the pseudorandomness property of a PRFS.
  Assume for contradiction that there exists a non-uniform QPT adversary $A$ and distinct $x_1,\ldots, x_{s} \in \{0,1\}^{d}$ such that
 \begin{align}
 \label{eqn:prfs:proof}
        &\Big | \Pr_{k \leftarrow \{0,1\}^\secparam} \left [ A_\secparam( x_1,\ldots,x_{s},F_{\secparam}(k,x_1)^{\otimes t},\ldots, F_{\secparam}(k,x_{s})^{\otimes t}) = 1 \right] \\ 
        & \qquad \qquad - \Pr_{\ket{\vartheta_1}, \ldots,\ket{\vartheta_{s}} \leftarrow \Haar_{n}} \left [ A_\secparam( x_1,\ldots,x_{s}, \ket{\vartheta_1}^{\otimes t},\ldots, \ket{\vartheta_{s}}^{\otimes t}) = 1 \right] \Big | = \eps(\secparam) \nonumber
    \end{align}
is not negligible in $\secparam$. %
Let $M = \secparam s t 2^d = \lambda^{O(1)}$.
We are going to construct an adversary $B$ that breaks the pseudorandomness property for the underlying PRS used in the construction using $M$ copies.
Formally, $B_{\secparam}$ is a QPT algorithm that takes as input $\rho^{\otimes M}$ where $\rho$ is a $(d+n)$-qubit state and does the following: 
\begin{itemize}
\item For $j = 1,\ldots,s$, repeat the following $t$ times:
        \begin{itemize}
            \item Repeat the following $ \secparam 2^{d}$ times: Measure the first $d$ qubits of a new copy of $\rho$ in the computational basis to obtain a string $y \in \{0,1\}^d$. If $y = x_j$, then save the remaining $n$ qubits of $\rho$ (which we denote as the state $\sigma_{x_j}$). %
            Otherwise, continue. 
            \item If the outcome $x_j$ was never measured, $B_{\secparam}$ aborts.
        \end{itemize}
\item Execute $b \leftarrow A_{\secparam}\left( x_1,\ldots,x_{s}, \sigma_{x_1}^{\otimes t},\ldots,\sigma_{x_s}^{\otimes t} \right)$.
\item Output $b$. 
\end{itemize} 
We show the following: 
 \begin{equation}
 \label{eqn:prfs:secred}
        \Big | \Pr_{k \leftarrow \{0,1\}^\secparam} \left [ B_\secparam(G_{\secparam}(k)^{ \otimes M }) = 1 \right]  - \Pr_{\ket{\vartheta} \leftarrow \Haar_{d + n}} \left [ B_{\secparam}(\ket{\vartheta}^{ \otimes M}) = 1 \right] \Big | \geq \eps(\secparam) - \nu(\secparam)
    \end{equation}
for some negligible function $\nu(\secparam)$.
This in turn shows that the algorithm $B = \{ B_\lambda \}_\lambda$ violates the pseudorandomness assumption on the PRS generator $G$, which is a contradiction. Thus $\eps(\lambda)$ must be negligible.

We prove \eqref{eqn:prfs:secred} by the following hybrid argument.
Note that the probability on the left hand side is by construction the same as the probability that $A$ outputs 1 in the real world experiment (when $A$ is given PRFS as input).
Therefore, we arrive at \eqref{eqn:prfs:secred} by triangle inequality after comparing the probability on the right hand side with the probability that $A$ outputs 1 in the ideal world experiment (when $A$ is given Haar random states as input), which is given by the following Lemma.

\begin{lemma}
\label{clm:prfs:int2}
If $d(\secparam) = O(\log \secparam)$ and $n(\secparam) = d(\secparam) + \omega(\log\log \secparam)$, then for any polynomial $s(\cdot), t(\cdot)$, there exists a negligible function $\nu(\secparam)$ such that
\begin{align*}
 \left | \Pr_{\ket{\vartheta} \leftarrow \Haar_{d + n}} \left [ B_{\secparam}(\ket{\vartheta}^{ \otimes M}) = 1 \right] - \Pr_{\ket{\vartheta_1}, \ldots,\ket{\vartheta_{s}} \leftarrow \Haar_{n}} \left [ A_\secparam( x_1,\ldots,x_{s}, \ket{\vartheta_1}^{\otimes t},\ldots, \ket{\vartheta_{s}}^{\otimes t}) = 1 \right] \right | \leq  \nu(\secparam)~.
\end{align*}
\end{lemma}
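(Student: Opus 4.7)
The plan is to show that, with overwhelming probability over the Haar-random input to $B$, the $n$-qubit states that $B$ extracts and feeds to $A$ are distributed exactly as $s$ independent Haar-random $n$-qubit states, matching the right-hand experiment. The key structural fact I will use is the decomposition: if $\ket\vartheta$ is Haar random on $d+n$ qubits, write $\ket\vartheta = \sum_{y \in \bit^d} \alpha_y \ket{y} \otimes \ket{\psi_y}$ where $\alpha_y \ge 0$. Using the Gaussian representation of Haar-random vectors (each coefficient is an i.i.d.\ complex Gaussian, normalized), the amplitudes $(\alpha_y)_y$ are (jointly) independent of the collection $(\ket{\psi_y})_y$, and conditioned on $(\alpha_y)_y$ the states $\ket{\psi_y}$ are i.i.d.\ Haar random on $n$ qubits.

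Next, I will define a good event $E$ consisting of two parts: (i) $\lvert\alpha_{x_j}\rvert^2 \ge 2^{-d-1}$ for every $j \in [s]$, and (ii) for every $j$, at least $t$ of the $\secparam 2^d$ trials in the $j$-th outer loop produce the measurement outcome $x_j$, so that $B$ does not abort. For (i), apply \Cref{fact:levy} to the function $f(\ket\vartheta) = \lvert\alpha_x\rvert^2 = \bra\vartheta(\ketbra{x}{x}\otimes I_n)\ket\vartheta$, which has Lipschitz constant $\le 2$ and mean $2^{-d}$ by \Cref{fact:haar-avg}; this gives $\Pr[\lvert\alpha_x\rvert^2 < 2^{-d-1}] \le \exp(-\Omega(2^{n-d}))$, which is negligible in $\secparam$ since $n-d = \omega(\log\log\secparam)$. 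Conditioned on (i), each trial succeeds with probability $\ge 2^{-d-1}$ independently, so (ii) fails with probability at most $\exp(-\Omega(\secparam))$ by a Chernoff bound; union bounding over $j \in [s]$ gives $\Pr[\bar E] \le \mathrm{negl}(\secparam)$.

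Conditioned on $E$, whenever $B$ measures outcome $x_j$ on any copy of $\ket\vartheta$, the post-measurement residual on the last $n$ qubits is exactly $\ket{\psi_{x_j}}$, so the state passed to $A$ is precisely $\ket{\psi_{x_1}}^{\otimes t} \otimes \cdots \otimes \ket{\psi_{x_s}}^{\otimes t}$. By the Gaussian decomposition, the tuple $(\ket{\psi_{x_1}},\ldots,\ket{\psi_{x_s}})$ is distributed as $s$ i.i.d.\ Haar-random $n$-qubit states (even after the mild conditioning on the event in (i), which depends only on $(\alpha_y)_y$ and not on the $\ket{\psi_y}$'s). This matches exactly the distribution of $(\ket{\vartheta_1},\ldots,\ket{\vartheta_s})$ in the right-hand experiment. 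Therefore, letting $p_L, p_R$ denote the two probabilities in the statement, we get $\lvert p_L - p_R\rvert \le 2\Pr[\bar E] = \mathrm{negl}(\secparam)$, proving the lemma.

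The main subtlety will be the careful bookkeeping around the conditioning: I need the distribution of $(\ket{\psi_{x_j}})_j$ conditioned on $E$ to be essentially unchanged from unconditioned, which holds because the ``high-probability'' part of $E$ (the Lévy bound on the $\alpha_y$'s) is a function of the amplitudes only, and the ``no-abort'' part of $E$, conditioned on the amplitudes, is independent of the $\ket{\psi_y}$'s since each measurement outcome depends on $\ket\vartheta$ only through $(\alpha_y)_y$. Once this is in place, the rest is a routine triangle inequality.
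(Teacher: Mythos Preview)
Your argument is correct and arrives at the same conclusion, but it reaches the key structural fact by a different route than the paper. You use the Gaussian realization of the Haar measure to conclude directly that, in the decomposition $\ket{\vartheta}=\sum_y \alpha_y\ket{y}\otimes\ket{\psi_y}$, the block $(\alpha_y)_y$ is independent of $(\ket{\psi_y})_y$ and the latter are i.i.d.\ Haar on $n$ qubits; this lets you couple $B$'s residual states to the ideal $n$-qubit Haar samples exactly, once you condition on the (amplitude-measurable) good event. The paper instead exploits unitary invariance: it composes $\ket{\vartheta}$ with a block-diagonal random unitary $R=\sum_x\ketbra{x}{x}\otimes R_x$ with i.i.d.\ Haar $R_x$, observes that $R\ket{\vartheta}$ is again Haar, and notes that the abort event is independent of $R$ because it depends only on first-register measurement outcomes; conditioned on non-abort, the residual states $R_{x_j}\ket{\vartheta_{x_j}}$ are then i.i.d.\ Haar purely by the randomness of $R$. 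Both proofs use L\'evy's lemma (your step~(i) is exactly the paper's \Cref{lem:haar-prefix}) for the concentration of the prefix probabilities and the same union-bound to control abort. Your Gaussian argument is arguably more elementary and makes the independence completely transparent, at the cost of invoking a fact about Haar measure not stated in the paper's preliminaries; the paper's argument stays within the unitary-invariance characterization of Haar and avoids any external probabilistic machinery.

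One small wording issue: your description of (ii) as ``at least $t$ of the $\secparam 2^d$ trials in the $j$-th outer loop'' does not match $B$'s actual structure (there are $t$ blocks of $\secparam 2^d$ trials per $j$, each block needing one success). This does not affect your probability bound, which is correct, but you should phrase the event to match the algorithm.
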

\begin{proof}
Consider the behavior of the algorithm $B_\lambda$ on input $\ket{\vartheta}^{\otimes M}$ for $\ket{\vartheta}$ sampled from the Haar distribution $\Haar_{d+n}$. Define the distribution $\mathscr{R}$ over $(d+n)$-qubit unitary operators
\[
    R = \sum_{x \in \{0,1\}^d} \ketbra{x}{x} \otimes R_x
\]
where $(R_x)_{x \in \{0,1\}^d}$ is a sequence of i.i.d. Haar-random $n$-qubit unitaries. 

Observe that, by the unitary invariance of the Haar measure, $R \ket{\vartheta}$ is also distributed according to $\Haar_{d+n}$. Therefore the algorithm $B_\lambda$ behaves identically on input $(R \ket{\vartheta})^{\otimes M}$ for any $R$ in the support.

Let $\measurefail$ denote the event that the algorithm $B_\lambda$ aborts on input $(R \ket{\vartheta})^{\otimes M}$; this happens only if there exists a $j\in [s]$ such that, even after measuring the first $d$ qubits of $\lambda t 2^d$ copies of $R \ket{\vartheta}$, the string $x_j$ occured fewer than $t$ times as a measurement outcome. 

Notice that the event $\measurefail$ (and its negation) is \emph{independent} of the choice of randomizing unitaries $(R_x)_x$; that is because applying $R$ to $\ket{\vartheta}$ does not change the distribution of measurement outcomes on the first $d$ qubits. Thus, for all $\ket{\vartheta} = \sum_x \alpha_x \ket{x} \otimes \ket{\vartheta_x}$, conditioning on the event $\neg \measurefail$ (the negation of $\measurefail$) still leaves the unitary $R$ distributed according to $\mathscr{R}$. 

Therefore for all $\ket{\vartheta}$ and for all $R$ we have
\begin{equation}
    \label{eq:prs:int2:1}
\Pr \left [ B_\lambda((R \ket{\vartheta})^{\otimes M}) = 1 \big | \, \neg \measurefail \right ] = \Pr \left [ A_\secparam( x_1,\ldots,x_{s}, (R_{x_1} \ket{\vartheta_{x_1}})^{\otimes t}, \ldots,(R_{x_s} \ket{\vartheta_{x_s}})^{\otimes t}) = 1 \right ]
\end{equation}
where the probabilities are over the randomness of the measurements.
Therefore, \eqref{eq:prs:int2:1} also holds if the probability also averages over the randomness of sampling $R \leftarrow \mathscr{R}$.
Since the $R_{x_j}$'s are i.i.d. Haar-random unitaries and the $x_i$'s are distinct, we conclude that \eqref{eq:prs:int2:1} is exactly equal to the probability $A$ outputs 1 in the ideal experiment.
Thus 
\begin{align*}
&\equad \Pr \left [ B_\lambda( \ket{\vartheta}^{\otimes M}) = 1 \right ] \\
&= \Pr \left [ B_\lambda((R \ket{\vartheta})^{\otimes M}) = 1 \right ] \\
&= \Pr \left [  \measurefail \right ] \cdot \Pr \Big [ B_\lambda((R \ket{\vartheta})^{\otimes M}) = 1 \big | \, \measurefail \Big ] + \Pr \left [ \neg \measurefail \right ] \cdot \Pr \Big [ B_\lambda((R \ket{\vartheta})^{\otimes M}) = 1 \big | \, \neg \measurefail \Big ]   \\
&= \Pr \left [  \measurefail \right ] \cdot \Big(\Pr \Big [ B_\lambda((R \ket{\vartheta})^{\otimes M}) = 1 \big | \, \measurefail \Big ]  - \Pr \Big [ B_\lambda((R \ket{\vartheta})^{\otimes M}) = 1 \big | \, \neg \measurefail \Big ] \Big) \\
&\qquad \qquad + \Pr \Big [ B_\lambda((R \ket{\vartheta})^{\otimes M}) = 1 \big | \, \neg \measurefail \Big ] 
\end{align*}
where in the last equality we used $\Pr \left [  \neg \measurefail \right ] = 1 - \Pr \left [  \measurefail \right ]$. 
Thus combining this with \eqref{eq:prs:int2:1},
\begin{align*}
  &\equad \left | \Pr \left [ B_{\secparam}(\ket{\vartheta}^{ \otimes M}) = 1 \right] - \Pr \left [ A_\secparam( x_1,\ldots,x_{s}, (R_{x_1} \ket{\vartheta_{x_1}})^{\otimes t}, \ldots,(R_{x_s} \ket{\vartheta_{x_s}})^{\otimes t}) = 1 \right ] \right | \\
 &\leq \Pr \left [  \measurefail \right ] \cdot \Big | \Pr \Big [ B_\lambda((R \ket{\vartheta})^{\otimes M}) = 1 \big | \, \measurefail \Big ] - \Pr \Big [ B_\lambda((R \ket{\vartheta})^{\otimes M}) = 1 \big | \, \neg \measurefail \Big ] \Big | \\
 &\leq \Pr \left [  \measurefail \right ].
\end{align*}

We now estimate the probability of the event $\measurefail$. Fix a state $\ket{\vartheta}$ and let $p_{x_j}$ denote the probability of obtaining $x_j$ when measuring the first $d$ qubits of $\ket{\vartheta}$, or equivalently since $R$ commutes with the measurement, $R \ket{\vartheta}$. Fix a $j \in [s]$. The probability that measuring $\lambda 2^d$ copies of $R \ket{\vartheta}$ fails to yield the outcome $x_j$ is equal to
\[
    \Big ( 1 - p_{x_j} \Big) ^{\secparam 2^d}
\]

The algorithm aborts if this happens in any of the $st$ iterations of the ``main loop'' of $B_\lambda$; thus the probability of $\measurefail$ is, by union bound, at most
\[
    \sum_{j=1}^s t \Big ( 1 - p_{x_j} \Big) ^{\secparam 2^d}
\]
The following Lemma establishes deviation bounds on the probabilities $p_x$:

\begin{lemma}
\label{lem:haar-prefix}
Let $\ket{\psi}$ be sampled from the Haar distribution $\Haar_{d+n}$. For all $x \in \{0,1\}^d$, let $p_x$ denote the probability of measuring the first $d$ qubits of $\ket{\psi}$ in the computational basis and obtaining outcome $x$. Then for all $\delta > 0$ with probability at least $1 - 2^{d+1} \cdot \exp(-C 2^{n+d} \delta^2 )$ over $\ket{\psi}$ for some universal constant $C > 0$, we have that
\[
    |p_x - 2^{-d}| \leq \delta 
\]
for all $x \in \{0,1\}^d$.
\end{lemma}

By \Cref{lem:haar-prefix} (setting $\delta = 2^{-d}/2$), with all but negligible probability over the choice of $\ket{\vartheta}$, each of the $p_{x_j}$'s are at least $2^{-d}/2$.
Therefore by union bound, the probability of $\measurefail$, when averaged over the choice of $\ket{\vartheta}$, is at most
\[
    st(1 - 2^{-d}/2)^{\lambda 2^d} + 2\exp(-(C2^{n - d} - d)) \leq st \exp(-\Omega(\lambda)) + 2\exp(-(C2^{n - d} - d))
\]
which for our choice of $s,t,n,d$ is negligible in $\lambda$. 
\end{proof}

\begin{proof}[Proof of \Cref{lem:haar-prefix}]
We first show that, with high probability over $\ket{\psi}$, the probability obtaining any \emph{fixed} prefix $x \in \{0,1\}^d$ is going to be exponentially small in $2^n$. We then apply a union bound over all $x \in \{0,1\}^d$ to obtain the Lemma statement.

Let $\Pi_x$ denote the projector onto the first $d$ qubits being in the state $\ket{x}$. Define $p_x = \Tr(\Pi_x \ketbra{\psi}{\psi})$. On average over the choice of $\ket{\psi}$, this quantity is equal to
\[
    \E_{\ket{\psi} \leftarrow \Haar_{d+n}} p_x = \Tr \left (\Pi_x \, \E_{\ket{\psi} \leftarrow \Haar_{d+n}} \ketbra{\psi}{\psi} \right ) = 2^{-(d+n)}\,  \Tr( \Pi_x) = 2^{-d}
\]
where we used the fact that the average of a Haar-random state is the maximally mixed state.

We now appeal to \emph{L\'{e}vy's Lemma} (\Cref{fact:levy}), which shows that $p_x$ concentrates tightly around its expectation. Define $f(\ket{\psi}) = \Tr( \Pi_x \, \ketbra{\psi}{\psi})$. We calculate the Lipschitz constant of $f$: %
\begin{align*}
    &\equad \frac{|f(\ket{\psi}) - f(\ket{\phi})|}{\norm{\ket{\psi} - \ket{\phi}}_2} \\
    &= \frac{\left |\Tr \Big (\Pi_x ( \ketbra{\psi}{\psi} - \ketbra{\phi}{\phi}) \Big ) \right |}{\norm{\ket{\psi} - \ket{\phi}}_2} \\
    &\leq \frac{ \norm{ \ketbra{\psi}{\psi} - \ketbra{\phi}{\phi}}_1}{\norm{\ket{\psi} - \ket{\phi}}_2} \\
    &= \frac{2 \sqrt{1 - | \langle \psi \mid \phi \rangle|^2}}{\norm{\ket{\psi} - \ket{\phi}}_2} \\
    &= \frac{2 \sqrt{\Big(1 - | \langle \psi \mid \phi \rangle| \Big)\Big(1 + | \langle \psi \mid \phi \rangle| \Big) }}{\norm{\ket{\psi} - \ket{\phi}}_2} \\
    &\leq \frac{2 \sqrt{2 \Big(1 - \Re \langle \psi \mid \phi \rangle \Big) }}{\norm{\ket{\psi} - \ket{\phi}}_2} \\
    &\leq 2
\end{align*}
for all $\ket{\psi},\ket{\phi}$.
By \Cref{fact:levy}, we have
\[
    \Pr \left [ \Big | p_x - 2^{-d} \Big| \geq \delta \right ] \leq 2\exp \Big( -C2^{d+n} \delta^2 \Big)
\]
for some universal constant $C > 0$, where the probability is over $\ket{\psi} \leftarrow \Haar_{d+n}$. 
\end{proof}

\fi

\section{Quantum Pseudo One-Time Pad from PRFS}
\label{sec:qotp}

The first application of PRFS we present is the Quantum Pseudo One-Time Pad (QP-OTP). In classical cryptography, a pseudo one-time pad is like the one-time pad except the key length is shorter than the length of the plaintext message. %
This is often presented in introductory cryptography courses as a basic example of using pseudorandomness to achieve a cryptographic task that is impossible in the information-theoretic setting. Here, we use a PRFS in place of a PRG to encrypt (classical) messages.

We point out that without knowing about the notion of PRFS, it appears difficult and challenging to construct secure quantum one-time pad schemes directly from PRS generators alone. 

\begin{definition}[Quantum Pseudo One-Time Pad]
\label{def:qpotp}
We say that a pair of QPT algorithms $(\Enc,\Dec)$ is a \emph{quantum pseudo one-time pad (QP-OTP)} for messages of length $\ell(\lambda) > \lambda$ for some polynomial $\ell(\cdot)$ if the following properties are satisfied:
\begin{itemize}
    \item \textbf{Correctness}: There exists a negligible function $\eps(\cdot)$ such that for every $\lambda$, every $x \in \{0,1\}^{\ell}$,
    $$\Pr_{\substack{k \leftarrow \{0,1\}^\lambda,\\ \sigma \leftarrow \Enc_\lambda(k,x)}} \left[ \Dec_\lambda(k,\sigma) = x \right] \geq 1 - \eps(\lambda).$$
    
    \item \textbf{Security}: There exist a polynomial $n(\cdot)$ such that for every polynomial $t(\cdot)$, for every nonuniform QPT adversary $A$, there exists a negligible function $\eps(\cdot)$ where for every $\lambda$ and $x \in \{0,1\}^{\ell}$,
    $$\left| \Pr_{\substack{k \leftarrow \{0,1\}^\lambda,\\ \sigma \leftarrow \Enc_\lambda(k,x)}} \left[ A_\lambda(\sigma^{\otimes t}) = 1 \right] - \Pr_{\ket{\vartheta_1},\ldots,\ket{\vartheta_\ell} \leftarrow \Haar_n} \left[ A_\lambda((\ket{\vartheta_1} \otimes \cdots \otimes \ket{\vartheta_\ell})^{\otimes t}) = 1 \right] \right| \leq \eps(\lambda),$$
    where we have abbreviated $n = n(\lambda)$, $\ell = \ell(\lambda)$, and $t = t(\lambda)$.
\end{itemize}
\end{definition}

Here the security holds even if the adversary could see multiple copies of the same ciphertexts, which might be useful for certain applications, for example when the communication channel is adversarially lossy.
However, when $t = 1$, we can see that the security implies that the ciphertext is computationally indistinguishable to random bit strings of length $\ell n$ (or a maximally mixed state) by \Cref{fact:haar-avg}.

To construct such a quantum pseudo one-time pad, let $G$ be a $(d(\lambda),n(\lambda))$-PRFS generator where $d(\lambda) \geq \lceil \log \ell(\lambda) \rceil + 1$ and $n(\lambda) = \omega(\log \lambda)$. We interpret $G_\lambda(k,\cdot)$ as taking inputs of the form $(i,b)$ where $i \in [\ell(\lambda)]$ and $b \in \{0,1\}$. Let $\Test$ denote the test algorithm from \Cref{lem:test-honest}. 

Fix $\lambda$ and let $\ell = \ell(\lambda)$, $d = d(\lambda)$, and $n = n(\lambda)$.

\begin{enumerate}
    \item $\Enc_\lambda(k,x)$: on input $k \in \{0,1\}^\lambda$ and a message $x \in \{0,1\}^{\ell}$, do the following: 
    \begin{itemize}
        \item For every $i \in [\ell]$, compute $\sigma_i \leftarrow G_\lambda(k,(i,x_i))$.
        \item Set $\sigma = \sigma_1 \otimes \cdots \otimes \sigma_\ell$. 
        \item Output the ciphertext state $\sigma$.
    \end{itemize}

    \item $\Dec_\lambda(k,\sigma)$: on input $k$, $\ell n$-qubit ciphertext state $\sigma$, perform the following operations: 
    \begin{itemize}
        \item Parse $\sigma$ as $\sigma_1 \otimes \cdots \otimes \sigma_\ell$.
        \item For $i \in [\ell]$, execute $\Test(k,(i,0),\sigma_i)$. If it accepts, set $x_i = 0$. Otherwise, set $x_i = 1$.
        \item Output $x = x_1 \cdots x_{\ell}$. 
    \end{itemize}
\end{enumerate}

\begin{lemma}
\label{lem:qopt-correctness}
$(\Enc,\Dec)$ satisfies the correctness property of a quantum pseudo one-time pad according to \Cref{def:qpotp}.
\end{lemma}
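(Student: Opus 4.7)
The plan is to show that each of the $\ell$ bits of the message is recovered correctly with overwhelming probability (on average over the key $k$), and then conclude by a union bound. For a fixed message $x \in \{0,1\}^\ell$, the $i$-th ciphertext block is $\sigma_i = G_\lambda(k,(i,x_i))$, and decryption outputs $x_i' = 0$ iff $\Test(k,(i,0),\sigma_i)$ accepts. I will analyze the probability of error at each position in two cases, depending on the value of $x_i$, and use \Cref{lem:test-honest} as a black box to bound each.

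If $x_i = 0$, the decryptor runs $\Test(k,(i,0),G_\lambda(k,(i,0)))$, which by the first bound of \Cref{lem:test-honest} accepts with probability at least $1 - \nu(\lambda)$ over a uniformly random $k$; so position $i$ is decrypted incorrectly with probability at most $\nu(\lambda)$. If $x_i = 1$, the decryptor runs $\Test(k,(i,0),G_\lambda(k,(i,1)))$, which by the second bound of \Cref{lem:test-honest} (applied with the two distinct inputs $(i,0) \ne (i,1)$) accepts with probability at most $2^{-n(\lambda)} + \nu(\lambda)$; so again position $i$ is decrypted incorrectly with probability at most $2^{-n(\lambda)} + \nu(\lambda)$. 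In both cases the per-position error is bounded by $\nu(\lambda) + 2^{-n(\lambda)}$.

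Taking a union bound over the $\ell(\lambda)$ positions, the total probability that $\Dec_\lambda(k,\Enc_\lambda(k,x)) \ne x$ is at most
\[
    \ell(\lambda)\cdot\bigl(\nu(\lambda) + 2^{-n(\lambda)}\bigr),
\]
which is negligible in $\lambda$: $\ell(\lambda)$ is polynomial, $\nu$ is negligible by \Cref{lem:test-honest}, and $2^{-n(\lambda)}$ is negligible because $n(\lambda) = \omega(\log\lambda)$. This establishes the required correctness bound in \Cref{def:qpotp}.

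The proof is largely routine once \Cref{lem:test-honest} is available; there is no real obstacle. The only minor point to keep in mind is that \Cref{lem:test-honest} gives guarantees \emph{on average} over $k$ rather than for every $k$, but this matches exactly the randomness over which the correctness probability in \Cref{def:qpotp} is taken, so no extra work (such as Markov's inequality or key derandomization) is needed.
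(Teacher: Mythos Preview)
Your proof is correct and follows essentially the same approach as the paper: case-split on $x_i$, apply \Cref{lem:test-honest} to bound the per-position error (using $n(\lambda) = \omega(\log\lambda)$ to make $2^{-n(\lambda)}$ negligible), and conclude with a union bound over the $\ell(\lambda)$ positions. Your explicit bookkeeping of the bounds and the remark about averaging over $k$ are slightly more detailed than the paper's version, but the argument is the same.
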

\begin{proof}
Fix $\lambda$ and let $\ell = \ell(\lambda)$. Fix a message $x \in \{0,1\}^{\ell}$.
Let $\sigma_{k,i} = G_\lambda(k,(i,x_i))$ and let $\sigma_k = \sigma_{k,1} \otimes \cdots \otimes \sigma_{k,\ell}$. %

Consider the decryption process. Fix an index $i \in [\ell]$.
By \Cref{lem:test-honest}, the probability that $\Test \Big (k,(i,0),\sigma_{k,i} \Big)$ accepts (on average over $k$) is negligibly close to $1$ if $x_i = 0$, and it is negligibly close to $0$ if $x_i = 1$, on average over the key $k$ (here we use the fact that the output length of the PRFS generator is $\omega(\log \lambda)$, so that $2^{-n(\lambda)}$ is negligible). Thus the probability that the correct bit $x_i$ gets decoded is negligibly close to $1$. Taking a union bound over all indices $i \in [\ell]$, we get that the probability of decoding $x$ is negligibly close to $1$, over the randomness of the key $k$ and the decryption algorithm.
\end{proof}

\begin{lemma}
  $(\Enc,\Dec)$ satisfies the security property of quantum pseudo one-time pad according to \Cref{def:qpotp}.
\end{lemma}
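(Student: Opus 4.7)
The plan is a direct reduction to the pseudorandomness property of the PRFS generator $G$. The key structural observation is that for any message $x \in \{0,1\}^{\ell}$, the ciphertext $\sigma = G_\lambda(k,(1,x_1)) \otimes \cdots \otimes G_\lambda(k,(\ell,x_\ell))$ uses the PRFS on inputs $(1,x_1),\ldots,(\ell,x_\ell) \in \{0,1\}^d$ that are pairwise \emph{distinct}, since their first coordinates $i \in [\ell]$ are distinct (and $d \ge \lceil \log \ell \rceil + 1$ ensures these all fit in the input domain). Thus the list of PRFS inputs produced by encryption is exactly the kind of list that the PRFS pseudorandomness definition deals with.

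Given a non-uniform QPT adversary $A$ and a message family $(x^{(\lambda)})_\lambda$ witnessing some distinguishing advantage $\eps(\lambda)$, I would construct a PRFS distinguisher $B$ as follows. Set $s(\lambda) = \ell(\lambda)$ and fix the PRFS query family $x_i^{\mathsf{PRFS}} := (i, x^{(\lambda)}_i)$ for $i = 1,\ldots,\ell$; these are distinct as noted. On input $(x_1^{\mathsf{PRFS}},\ldots,x_\ell^{\mathsf{PRFS}}, \rho_1^{\otimes t}, \ldots, \rho_\ell^{\otimes t})$, the distinguisher $B$ rearranges the qubit registers by a fixed efficient permutation into $(\rho_1 \otimes \cdots \otimes \rho_\ell)^{\otimes t}$, runs $A_\lambda$ on this state, and outputs whatever $A_\lambda$ outputs.

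In the PRFS world, $\rho_i = G_\lambda(k,(i,x_i))$ for a uniformly random $k$, so the input handed to $A_\lambda$ is distributed exactly as $\sigma^{\otimes t}$ with $\sigma \leftarrow \Enc_\lambda(k,x^{(\lambda)})$. In the Haar world, each $\rho_i = \ket{\vartheta_i}\!\bra{\vartheta_i}$ with $\ket{\vartheta_i}$ independently Haar-random on $n$ qubits, so the input to $A_\lambda$ is distributed exactly as $(\ket{\vartheta_1} \otimes \cdots \otimes \ket{\vartheta_\ell})^{\otimes t}$. Therefore the distinguishing advantage of $B$ equals the QP-OTP distinguishing advantage of $A$ on the message family $(x^{(\lambda)})_\lambda$. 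By the PRFS pseudorandomness property (\Cref{def:prfs}) applied with the fixed family $(x_i^{\mathsf{PRFS}})$, the advantage of $B$ is negligible, hence so is $\eps(\lambda)$.

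There is essentially no genuine obstacle: the construction was designed to mirror the PRFS structure, and the selective-security formulation of \Cref{def:prfs} (where the queries are a fixed family in $\lambda$) exactly matches the QP-OTP security statement (which fixes $x$ per $\lambda$). The only minor things to verify are that the rearrangement of registers is an efficient fixed unitary (it is), and that $(i,x_i)$ being distinct only uses $d \ge \lceil \log \ell\rceil + 1$, as already mandated by the parameter choice. No properties of $\Test$ or of the recognizable abort structure are needed for security; they are only used in the correctness lemma.
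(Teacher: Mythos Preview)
Your proposal is correct and matches the paper's proof: the paper phrases it as a three-step hybrid argument (real encryption $\to$ introduce the rearranging algorithm $B$ $\to$ replace PRFS outputs by Haar-random states), but the content is exactly the direct reduction you describe, with the same register-rearrangement trick and the same appeal to \Cref{def:prfs}. Your observation that the inputs $(i,x_i)$ are pairwise distinct because of the index $i$ is implicit in the paper but worth stating, as you do.
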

\begin{proof}
We prove the security via a hybrid argument. Let $n(\lambda)$ denote the output length of the PRFS generator $G$. Fix $\lambda$, and let $\ell = \ell(\lambda)$, $n = n(\lambda)$, and $t = t(\lambda)$. Fix a message $x \in \{0,1\}^{\ell}$. Consider a nonuniform QPT adversary $A$ such that $A_\lambda$ takes as input $t$ copies of an $\ell n$-qubit density matrix $\sigma$. 

\paragraph{Hybrid $\hybrid_1$.} Sample $k \leftarrow \{0,1\}^\lambda$. Compute $\sigma \leftarrow \Enc_\lambda(k,x)$. The output of the hybrid is the output of the adversary $A_\lambda$ on input $\sigma^{\otimes t}$.

\paragraph{Hybrid $\hybrid_2$.} Consider the following QPT algorithm $B_\lambda$: it takes as input $(i_1,b_1),\ldots,(i_\ell,b_\ell) \in [\ell] \times \{0,1\}$ and a $tn$-qubit state $\sigma_1^{\otimes t} \otimes \cdots \otimes \sigma_\ell^{\otimes t}$. The algorithm $B$ runs the adversary $A_\lambda$ on input $(\sigma_1 \otimes \cdots \otimes \sigma_\ell)^{\otimes t}$ and returns A$_\secparam$'s output.

Sample $k \leftarrow \{0,1\}^\lambda$. Compute $t$ copies of $\sigma \leftarrow \Enc_\lambda(k,x)$. 
The output of this hybrid is the output of $B_\lambda$ on input $((1,x_1),\ldots,(\ell,x_\ell))$ and $\sigma^{\otimes t} = \sigma_1^{\otimes t} \otimes \cdots \otimes \sigma_\ell^{\otimes t}$.

\paragraph{Hybrid $\hybrid_3$.}

Sample $t$ copies of Haar-random states $\ket{\vartheta_1},\ldots,\ket{\vartheta_\ell} \leftarrow \Haar_n$. The output of this hybrid is the output of $B_\lambda$ on input $((1,x_1),\ldots,(\ell,x_\ell))$ and $\ket{\vartheta_1}^{\otimes t} \otimes \cdots \otimes \ket{\vartheta_\ell}^{\otimes t}$.

\medskip 

We now argue the indistinguishability of the hybrids. Clearly, hybrids $\hybrid_1$ and $\hybrid_2$ are identical by construction (the adversary $B_\lambda$ ignores its first input and runs $A_\lambda$ on input $\sigma^{\otimes t}$). Hybrids $\hybrid_2$ and $\hybrid_3$ are indistinguishable because of the pseudorandomness property of the PRFS generator $G$. Notice that, by construction, the output of hybrid $\hybrid_3$ is $A_\lambda((\ket{\vartheta_1} \otimes \cdots \otimes \ket{\vartheta_\ell})^{\otimes t})$.
\end{proof}

\section{Quantum Bit Commitments from PRFS}
\label{sec:commitment}

\subsection{Definition}
\label{sec:comm:def}

We consider the notion of quantum commitment scheme with statistical binding and computational hiding property. This is analogous to a classical commitment scheme where the messages are allowed to be quantum states.
We in particular focus on bit commitments where the the committed message is a single bit.
We can generically achieve commitments of long messages by composing many instantiations of the bit-commitment scheme in parallel.

A (bit) commitment scheme is given by a pair of (uniform) QPT algorithms $(C,R)$, where $C=\{C_{\secparam}\}_{\secparam \in \mathbb{N}}$ is called the \emph{committer} and $R=\{R
_{\secparam}\}_{\secparam \in \mathbb{N}}$ is called the \emph{receiver}. There are two phases in a commitment scheme: a commit phase and a reveal phase. %
\begin{itemize}
    \item In the (possibly interactive) commit phase between $C_{\secparam}$ and $R_{\secparam}$, the committer $C_{\secparam}$ commits to a bit, say $b$. We denote the execution of the commit phase to be $\sigma_{CR} \leftarrow \commit \langle C_{\secparam}(b),R_{\secparam} \rangle$, where $\sigma_{CR}$ is a joint state of $C_{\secparam}$ and $R_{\secparam}$ after the commit phase. 
    \item In the %
    reveal phase $C_{\secparam}$ interacts with $R_{\secparam}$ and the output is a trit $\mu \in \{0,1,\bot \}$ indicating the receiver's output bit or a rejection flag. We denote an execution of the reveal phase where the committer and receiver start with the joint state $\sigma_{CR}$ by $\mu \leftarrow \reveal \langle C_\lambda, R_\lambda, \sigma_{CR} \rangle$. 
    
\end{itemize}

\noindent We define the properties satisfied by a commmitment scheme. %

\paragraph{Statistical Binding.} We start by discussing the statistical binding property.
The classical statistical binding property could be rephrased as the following in the quantum setting: for any adversarial (possibly unbounded) committer $C^*_{\secparam}$, we require that at the end of the commit phase, with high probability over the measurement randomness of the receiver, there is a unique bit that $C^*_{\secparam}$ can decommit to in the reveal phase.
Unfortunately, this idealistic notion is not always possible to achieve: in some quantum commitment protocols where the receiver does not measure everything, it is possible for $C^*_{\secparam}$ to send a uniform superposition of commitments of 0 and 1 and later can open to either 0 or 1 with equal probability. This attack was observed and taken into account in many works, including but not limited to~\cite{YWLQ15,Unruh16,FUYZ20,BB21}.

To account for this issue, we consider a notion where an extraction procedure can be applied on the state of the receiver after the commit phase. The output is the receiver's post-extraction state along with the extracted bit $b$.
We revise the statistical binding property guarantee to informally require the following: (a) whether the extractor is applied or not is imperceivable to the committer and (b) the committer can almost never decommit to $1 - b$ if the extracted bit is $b$.

\begin{definition}[Statistical Binding]
\label{def:stat:binding}
We say that a quantum commitment scheme $(C,R)$ satisfies \emph{statistical binding} if for any (non-uniform) adversarial committer $C^*=\{C^*_{\secparam}\}_{\secparam \in \mathbb{N}}$, there exists a (possibly inefficient) extractor algorithm ${\cal E}$ such that the following holds: 
$$ \TraceDist{\realexpt_{\secparam}^{C^*}}{ \idealexpt_{\secparam}^{C^*,{\cal E}}} \leq \nu(\secparam),$$
for some negligible function $\nu(\secparam)$, where the experiments $\realexpt_{\secparam}^{C^*}$ and $\idealexpt_{\secparam}^{C^*,{\cal E}}$ are defined as follows.
\begin{itemize}
    \item $\realexpt_{\secparam}^{C^*}$: Execute the commit phase to obtain the joint state $\sigma_{C^* R} \leftarrow \commit\langle C_{\secparam}^*,R_{\secparam} \rangle$. Execute the reveal phase to obtain the trit $\mu \leftarrow \reveal \langle C_{\secparam}^*, R_\secparam, \sigma_{C^* R} \rangle$. Output the pair $(\tau_{C^*},\mu)$ where $\tau_{C^*}$ is the final state of the committer. %
    \item $\idealexpt_{\secparam}^{C^*,{\cal E}}$: Execute the commit phase to obtain the joint state $\sigma_{C^* R} \leftarrow \commit\langle C_{\secparam}^*,R_{\secparam} \rangle$.  Apply the extractor $I \otimes {\cal E}$ on $\sigma_{C^* R}$ (acting \emph{only on the receiver's part}) to obtain a new joint committer-receiver state $\sigma'_{C^* R}$ along with $b' \in \{0,1,\bot\}$.
    Execute the reveal phase 
    to obtain the trit $\mu \leftarrow \allowbreak\reveal \langle C_{\secparam}^*,\allowbreak R_\secparam,\allowbreak \sigma'_{C^* R} \rangle$. Let $\tau_{C^*}$ denote the final state of the committer.
    If $\mu = \bot$ or $\mu = b'$, then output $(\tau_{C^*},\mu)$.
    Otherwise, output a special symbol $\mathfrak{E}$ (unused in the real experiment) indicating extraction error.
    
\end{itemize}
\end{definition}

\begin{remark}
Many prior works consider statistical binding for quantum commitments. We highlight the main differences between our definition and the prior notions. %
\begin{itemize}
    \item Comparison with~\cite{YWLQ15,Unruh16,FUYZ20}: the statistical binding property is formalized by requiring the states of the (honest) committer when committing to bits 0 and 1 to be far in trace distance. While their definition is cleaner (and probably equivalent to our notion), in our opinion, it is unwieldy to use their definition for applications. Specifically, one has to either implicitly or explicitly come up with an extractor in the security proofs for applications \cite{YWLQ15,FUYZ20} and moreover, show that the indistinguishability of the real and the ideal world holds against dishonest committers. On the other hand, we incorporate these technical difficulties as requirements in our definition making it easier to use in applications.
    
    Another downside of the statistical binding property there is that in order for the sum-binding property to be useful in applications, it is common to additionally require the opening phase to follow the ``canonical'' opening protocol, where the committer sends the purification of the mixed state sent in the committing phase, and the receiver performs a rank-1 projection to check the state.
    This implies that \emph{both} parties must keep their part of the state coherent between the two phases. However, our definition gives the flexibility of the reveal phase having purely classical communication.

    \item Comparison with~\cite{BB21}: A related work by~\cite{BB21} considers statistical binding of quantum commitments called classical binding. %
    The main difference is the following. In their notion, the honest receiver applies a measurement that collapses the commitment into a quantum state and a classical string in such a way that the classical string information theoretically determines the message.
    They then use this feature to show that in some applications, the opening of the commitment can be classical.
    Our definition is also more general in the sense that the honest receiver is not required to do any measurement and the collapsing happens implicitly in the ideal world during the execution of extractor.
\end{itemize}
\end{remark}

\begin{remark}
  One has to be careful when using quantum commitments in a larger system if the receiver's state is quantum after the commit phase.
  As an example, suppose we design a protocol where the quantum commitment held by the receiver before the reveal phase is used inside another cryptographic protocol.
  Then we might not be able to invoke binding if the state is destroyed, whereas classically the state could always be copied.
  Nevertheless, this is a generic caveat of quantum commitments and is not an artifact of any specific definition of binding.
\end{remark}

\paragraph{Computational Hiding.} We define the computational hiding property below. This is the natural quantum analogue of the classical computational hiding property. In the literature, this property is also sometimes referred to as quantum concealing. 

\begin{definition}[Computational Hiding]
\label{def:comphiding}
We say that a quantum commitment scheme $(C,R)$ satisfies computational hiding if for any malicious QPT receiver $\{R^*_{\secparam}\}_{\secparam \in \mathbb{N}}$, for any QPT distinguisher $\{D_{\secparam}\}_{\secparam \in \mathbb{N}}$, the following holds:  
    \begin{align*}
    &\bigg| \Pr \left[  D_{\secparam}(\sigma_{R^*}) = 1\ : \, \sigma_{CR^*} \leftarrow \commit \langle C_{\secparam}(0), R^*_{\secparam} \rangle \right] \\
    & \qquad \qquad \qquad - \Pr \left[ D_{\secparam}(\sigma_{R^*}) = 1 \ :\ \sigma_{CR^*} \leftarrow \commit  \langle C_{\secparam}(1), R^*_{\secparam} \rangle \right] \bigg| \leq \nu(\lambda),
    \end{align*}
    for some negligible function $\nu(\cdot)$, where $\sigma_{R^*}$ is obtained by tracing out the committer's part of the state $\sigma_{CR^*}$.
\end{definition}

\subsection{Construction}
\label{sec:comm:cons}

We now present the main theorem of this section, which shows that statistically binding quantum commitment schemes can be constructed from PRFS.

\begin{theorem}
\label{thm:comm:prfs}
Assuming the existence of $(d(\secparam),n(\secparam))$-PRFS satisfying recognizable abort (\Cref{def:classicalgen}) with $2^d \cdot n \ge \comlen$, there exists a commitment scheme satisfying statistical completeness, statistical binding (\Cref{def:stat:binding}) and computational hiding (\Cref{def:comphiding}).
\end{theorem}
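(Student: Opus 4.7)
The plan is to adapt Naor's commitment scheme by replacing the PRG with the given PRFS generator $G$ and replacing the receiver's random string with a uniformly random tensor-product Pauli. Concretely, in the commit phase the receiver samples $P = P_0 \otimes \cdots \otimes P_{2^d - 1}$, where each $P_x$ is a uniformly random $n$-qubit Pauli, and sends $P$; the committer samples $k \leftarrow \{0,1\}^\secparam$ and sends ${\bf c} = \bigotimes_{x} P_x^b \, G_\secparam(k, x) \, P_x^b$. In the reveal phase the committer sends $(k, b)$ classically; the receiver applies $P^b$ to untwirl the ciphertext and then runs $\Test^{\otimes 2^d}$ from \Cref{lem:test-product} with inputs $(k, 0), \ldots, (k, 2^d - 1)$, accepting $b$ iff every tester accepts.

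Statistical completeness follows from \Cref{lem:test-honest} via a union bound over the $2^d = \mathrm{poly}(\secparam)$ parallel testers (using $d = O(\log \secparam)$). Computational hiding is a two-step hybrid: first, by selective PRFS security (\Cref{def:prfs}) applied to the fixed inputs $0, 1, \ldots, 2^d - 1$, each $G_\secparam(k, x)$ is replaced with an independent Haar-random pure state $\ketbra{\vartheta_x}{\vartheta_x}$; second, by unitary invariance of the Haar measure, $P_x \ketbra{\vartheta_x}{\vartheta_x} P_x$ is itself Haar-distributed, so the ciphertext is identically distributed for $b = 0$ and $b = 1$ conditioned on every $P$.

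For binding, I define the inefficient extractor $\mathcal{E}$ as follows. Let $\ket{\psi_{k, x}}$ denote the correct-output pure state from recognizable abort and set $\ket{\phi_k^b} := \bigotimes_x P_x^b \ket{\psi_{k, x}}$. Let $\Pi_b$ project onto $V_b := \mathrm{span}\{ \ket{\phi_k^b} : k \in \{0, 1\}^\secparam\}$ and let $\Pi$ project onto $V_0 + V_1$. Since $\Pi_0 \le \Pi$, the operators $\{\Pi_0, \Pi - \Pi_0, I - \Pi\}$ form a three-outcome PVM; $\mathcal{E}$ measures the receiver's register with this PVM and outputs $b' \in \{0, 1, \bot\}$ on the respective outcomes. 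The key technical lemma is that, with overwhelming probability over $P$, $V_0$ and $V_1$ are \emph{almost orthogonal}: the Pauli $1$-design identity $\E_{P_x}[P_x \ketbra{\psi}{\psi} P_x] = I/2^n$ yields $\E_P |\langle \phi_{k_0}^0 | \phi_{k_1}^1 \rangle|^2 = 2^{-2^d n}$ for every fixed $k_0, k_1$; summing over $\le 2^{2\secparam}$ key pairs, together with a Gram-matrix argument that uses PRFS-based well-conditioning to relate $\Tr(\Pi_0 \Pi_1)$ to the sum of squared inner products, gives $\E_P \Tr(\Pi_0 \Pi_1) \le O(2^{2\secparam - 2^d n}) \le 2^{-5\secparam}$ under the hypothesis $2^d n \ge 7\secparam$, and Markov's inequality then gives almost-orthogonality with high probability over $P$.

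With almost-orthogonality in hand, the indistinguishability of $\realexpt_\secparam^{C^*}$ and $\idealexpt_\secparam^{C^*, \mathcal{E}}$ reduces to the following structural fact: the joint probability in the ideal experiment that the extractor outputs $b'$ while the receiver later accepts a decommitment to some $b \ne b'$ is bounded by $\|\Pi_{b'} \ket{\phi_k^b}\|^2$ for the purported decommitment key $k$, which is negligible by almost-orthogonality when $b \ne b'$; conversely, when $b = b'$ the post-extraction outcome $\Pi_{b'}$ absorbs any state in $V_b$ and hence does not alter the receiver's acceptance probability compared to the real experiment. The \textbf{main obstacle} is handling the \emph{recognizable-abort} relaxation quantitatively: \Cref{lem:test-product} realizes only the scaled operator $\eta^2 \ketbra{\phi_k^b}{\phi_k^b}$ rather than a sharp rank-one projector, so one must absorb the $\eta$-scaling via \Cref{lem:prs-orthog} (which implies $\eta$ is close to $1$ on average over $k$) and combine it with the small-but-nonzero cross-overlap bound to conclude that $\realexpt$ and $\idealexpt$ lie within negligible trace distance for every unbounded committer $C^*$.
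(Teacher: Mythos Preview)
Your construction, completeness argument, and hiding argument match the paper exactly. Your binding outline is also broadly right --- almost-orthogonality of $V_0$ and $V_1$ over the random Pauli is indeed the crux --- but two technical choices diverge from the paper, and one of them is a genuine gap.

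\textbf{Extractor.} The paper does not use your asymmetric PVM $\{\Pi_0,\Pi-\Pi_0,I-\Pi\}$ but rather the symmetric general measurement $\{p^{-1/2}\Pi_0,\,p^{-1/2}\Pi_1,\,\sqrt{I-p^{-1}(\Pi_0+\Pi_1)}\}$ with $p=\operatornorm{\Pi_0+\Pi_1}$. Both choices work, but note that your sentence ``$\Pi_{b'}$ absorbs any state in $V_b$'' is literally false for $b'=1$: the relevant operator there is $\Pi-\Pi_0$, not $\Pi_1$, and $(\Pi-\Pi_0)\ket{\phi_k^1}=\ket{\phi_k^1}-\Pi_0\ket{\phi_k^1}$. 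This is only \emph{approximately} $\ket{\phi_k^1}$, via the almost-orthogonality you later prove. With the paper's symmetric extractor one gets the clean identity $\sqrt{\Lambda_b}\,P^bM_0P^b\,\sqrt{\Lambda_b}=p^{-1}P^bM_0P^b$, so the only discrepancy between real and ideal is the scalar $p^{-1}$.

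\textbf{The Gram-matrix step is a gap.} You propose to bound $\Tr(\Pi_0\Pi_1)$ via $\sum_{k_0,k_1}|\langle\phi_{k_0}^0|\phi_{k_1}^1\rangle|^2$ together with ``PRFS-based well-conditioning'' of the Gram matrix of $\{\ket{\phi_k^0}\}_k$. But PRFS security is \emph{computational} and gives no structural guarantee on that Gram matrix; a perfectly secure PRFS may ignore the last key bit, making the Gram matrix rank-deficient and $G_0^+$ unbounded, so the inequality $\Tr(G_0^+BG_1^+B^\dagger)\le\Tr(BB^\dagger)$ you would need can fail. The paper avoids this entirely: pick an orthonormal basis $\{\ket{u_i}\}$ of $V_0$ (which depends only on the PRFS, not on $P$), write $\Tr(\Pi_0\Pi_1)=\sum_{i,j}|\langle u_i|P|u_j\rangle|^2$, and apply the Pauli identity directly to get $\E_P\Tr(\Pi_0\Pi_1)\le(\dim V_0)^2\cdot 2^{-m}\le 2^{2\secparam-m}$. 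No conditioning is needed because an ONB is used.

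\textbf{The $\eta$-scaling is a non-obstacle.} The receiver's acceptance operator $M_0=\eta^2\ketbra{\psi}{\psi}$ is the \emph{same} in both real and ideal experiments, so $\eta$ plays no role in the comparison beyond the trivial $\eta^2\le 1$. The paper's binding proof never needs $\eta$ close to $1$, and you should not invoke \Cref{lem:prs-orthog} here: that lemma is an average-over-$k$ statement, whereas binding must hold for the adversary's \emph{chosen} decommitment key $k$. The actual obstacle is solely the possible overlap of $\Pi_0$ and $\Pi_1$, which is exactly what the almost-orthogonality lemma controls.
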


\noindent We note that, combined with \Cref{thm:prfs-from-prs} which constructs PRFS generators with $\Omega(\log\lambda)$ input length and recognizable abort property from PRS generators, we can obtain quantum commitment schemes from PRS generators. We present the construction, which is inspired by Naor's commitment scheme~\cite{Naor91}.

The main building block is a $(d(\secparam),n(\secparam))$-PRFS, denoted by  $G = \{G_{\secparam}(\cdot,\cdot)\}_{\secparam \in \mathbb{N}}$.
Since $n \ge 1$, we assume $d(\secparam) = \lceil\log\frac{\comlen}n\rceil = O(\log\secparam)$ to ensure the efficiency of the algorithm.
This is without loss of generality since we can generically shrink the input length for a PRFS by padding zeroes.
Let $\Test_{\secparam}^{\otimes 2^{d(\secparam)}}$ be the product PRFS tester corresponding to $G$ as guaranteed by \Cref{lem:test-product}. 

We describe the commitment scheme, $(C,R)$ as follows. For notational convenience, we abbreviate $n = n(\lambda)$, $d = d(\lambda)$.

\begin{enumerate}
    \item {\em Commit Phase}: 
    \begin{itemize}
        \item The receiver $R_\secparam$ samples a uniformly random $m$-qubit Pauli operator $P$, where $m = 2^d \cdot n$. We write $P$ as $P_0 \otimes \cdots \otimes P_{2^d - 1}$, where $P_i$ is an $n$-qubit Pauli operator\footnote{To sample $P = \bigotimes_i P_i$, the receiver can sample uniformly random bits $\alpha_1,\beta_1,\ldots,\alpha_m,\beta_m$, and let $P_i = X^{\alpha_i} Z^{\beta_i}$ where $X$ and $Z$ are the single-qubit Pauli operators.}.
        It sends $P$ to the committer. 
        \item The committer $C_\secparam$ on input a bit $b \in \{0,1\}$, does the following: 
        \begin{itemize}
            \item It samples $k \xleftarrow{\$} \{0,1\}^{\secparam}$.
            \item For every $x \in \{0,1\}^{d}$, computes  
        $\sigma_{k,x} \leftarrow G_{\secparam}(k,x)$.
        \end{itemize}
         It sends the commitment $\ciphertext = \bigotimes_{x \in \{0,1\}^{d}} \widetilde{\sigma}_{k,x}$, where $\widetilde{\sigma}_{k,x} = P_{x}^b \sigma_{k,x} P_x^b$, to the receiver. %
    \end{itemize}
    \item {\em Reveal Phase}: The committer sends $(k,b) \in \bit^\lambda \times \bit$ as the decommitment to the receiver.
    The receiver outputs $b$ if and only if $\Test_{\secparam}^{\otimes 2^{d}} \left(\{k,x\}_x, P^b \ciphertext P^b \right)=1$
    where $P^b = \bigotimes_{x \in \{0,1\}^{2^d}} P_x^b$.
    Otherwise the receiver outputs $\bot$.

\end{enumerate}

\begin{lemma}
  If $G$ is a PRFS, then there exists a negligible function $\nu(\cdot)$ such that the probability that the honest receiver accepts the honest committer's opening is at least $1 - \nu(\lambda)$.
\end{lemma}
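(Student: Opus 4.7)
The plan is to reduce the claim to the tester-correctness lemma \Cref{lem:test-honest} applied independently to each of the $2^d$ blocks, after observing that the Pauli conjugation cancels out in the honest execution.

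The first step will be to unwind what $\Test_\secparam^{\otimes 2^d}$ actually sees. Each $P_x$ is an $n$-qubit Pauli of the form $X^{\alpha}Z^{\beta}$ and therefore satisfies $P_x^2 = \pm I$; hence conjugation by $P_x^b$ twice is the identity map on density matrices. So in the honest execution, regardless of $b$,
\[
    P^b \, \ciphertext \, P^b \;=\; \bigotimes_{x \in \{0,1\}^d} P_x^b \bigl(P_x^b \sigma_{k,x} P_x^b\bigr) P_x^b \;=\; \bigotimes_{x \in \{0,1\}^d} \sigma_{k,x},
\]
where $\sigma_{k,x} = G_\secparam(k,x)$. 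So after the receiver's $P^b$-conjugation, the input to $\Test_\secparam^{\otimes 2^d}$ is a tensor product across the $2^d$ blocks.

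Next, because $\Test_\secparam^{\otimes 2^d}$ is by definition the parallel composition of the single-index testers $\Test_\secparam(k,x,\cdot)$ acting on disjoint registers, its accept event on a product state is the conjunction of the individual accept events, and these are probabilistically independent. Writing $p_{k,x} := \Pr[\Test_\secparam(k,x,G_\secparam(k,x)) = 1]$, the receiver's overall acceptance probability (averaged over $k$) is therefore $\E_{k}\bigl[\prod_x p_{k,x}\bigr]$.

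Finally, \Cref{lem:test-honest} supplies, for each fixed $x$, a negligible $\nu'(\cdot)$ such that $\E_k[p_{k,x}] \ge 1 - \nu'(\secparam)$. Using $\prod_x p_{k,x} \ge 1 - \sum_x (1 - p_{k,x})$ and linearity of expectation, the acceptance probability is at least $1 - 2^d \nu'(\secparam)$. Since $d = O(\log \secparam)$ by the efficiency requirement stated before the construction, $2^d$ is polynomial in $\secparam$ and this bound is $1 - \nu(\secparam)$ for some negligible $\nu$, as desired.

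I do not anticipate a real obstacle: the Pauli cancellation is immediate, and the only non-trivial input is the single-tester correctness guarantee already proved. The one place to be mildly careful is the union-bound step, which needs the input length $d$ to be logarithmic so that $2^d \cdot \nu'(\secparam)$ remains negligible; this is exactly the efficiency regime in which the construction is stated, so the argument goes through without further assumptions.
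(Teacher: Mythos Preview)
Your proof is correct and follows the same approach as the paper: invoke \Cref{lem:test-honest} for each of the $2^d$ blocks and apply a union bound, using that $2^d$ is polynomial in $\secparam$. The paper states this in one sentence without spelling out the Pauli cancellation or the product structure, but your elaboration of those steps is accurate.
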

\begin{proof}
  This follows immediately from \Cref{lem:test-honest} and union bound as $2^d$ is polynomial in $\secparam$.
\end{proof}

\ifcrypto
  Due to space constraints, we defer the security proof of this construction to \Cref{sec:commitment-security}.
\else
  \begin{lemma}
  If $G$ is a PRFS, then $(C,R)$ satisfies computational hiding as defined in \Cref{def:comphiding}. %
\end{lemma}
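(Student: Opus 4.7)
The plan is to prove hiding via a four-hybrid argument that bridges commitments to $0$ and to $1$ through Haar-random states, exploiting the unitary invariance of the Haar measure on each of the $2^d$ tensor factors of the commitment.

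Fix a non-uniform QPT receiver $R^* = \{R^*_\secparam\}$ and QPT distinguisher $D$. Let $P = \bigotimes_{x \in \bit^d} P_x$ denote the Pauli that $R^*$ sends in the first message, and let $\tau_{R^*}$ denote $R^*$'s residual state after sending $P$. For $b \in \{0,1\}$, I will define $\hybrid^{(0)}_b$ to be the real commit experiment with bit $b$, in which $R^*$'s final state is $\tau_{R^*}$ tensored with $\bigotimes_x P_x^b\, \sigma_{k,x}\, P_x^b$ for $k \xleftarrow{\$} \bit^\secparam$ and $\sigma_{k,x} = G_\secparam(k,x)$, and $\hybrid^{(1)}_b$ to be identical except with each $\sigma_{k,x}$ replaced by $\ketbra{\vartheta_x}{\vartheta_x}$ for fresh $\ket{\vartheta_x} \leftarrow \Haar_n$, sampled independently across $x$.

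The argument has three steps. First, for each $b \in \{0,1\}$, I will argue $\hybrid^{(0)}_b$ and $\hybrid^{(1)}_b$ are computationally indistinguishable via a reduction to the selective PRFS security of $G$ (\Cref{def:prfs}), with the fixed index set $\bit^d$ of polynomial size $2^d$ and $t = 1$: the PRFS distinguisher internally simulates $R^*$ until $P$ is emitted (recording its classical description), receives its $2^d$ challenge states $\rho_x$, conjugates each by the efficiently implementable Pauli $P_x^b$, tensors them as the commitment, resumes the interaction with $R^*$, and runs $D$ on $R^*$'s final state. Second, I will observe that $\hybrid^{(1)}_0$ and $\hybrid^{(1)}_1$ are identically distributed: conditioned on any fixed value of $P$ (and $\tau_{R^*}$), unitary invariance of the Haar measure gives $P_x\ket{\vartheta_x} \sim \Haar_n$ marginally and independently across $x$, so conjugation by $P$ leaves the joint distribution unchanged. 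Chaining the three gaps via the triangle inequality yields that $\hybrid^{(0)}_0$ and $\hybrid^{(0)}_1$ are computationally indistinguishable, which is the statement of \Cref{def:comphiding}.

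The main subtlety I anticipate is that $P$ is chosen adaptively by the possibly dishonest quantum $R^*$ as a function of its internal state, so it is not fixed ahead of the PRFS challenge. This is handled cleanly because the PRFS distinguisher can simulate $R^*$ internally, measure $P$ as it is sent in order to obtain a classical description on the outgoing register, and then apply $P_x^b$ efficiently as a Clifford; the resulting distinguisher is a non-uniform QPT algorithm, as permitted by \Cref{def:prfs}. No additional property of the PRFS (such as recognizable abort or perfect state generation) is needed for hiding.
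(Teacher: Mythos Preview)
Your proposal is correct and follows essentially the same four-hybrid route as the paper: your $\hybrid^{(0)}_0,\hybrid^{(1)}_0,\hybrid^{(1)}_1,\hybrid^{(0)}_1$ are exactly the paper's $\hybrid_1,\hybrid_2,\hybrid_3,\hybrid_4$, with the two PRFS hops and the Haar-invariance step in between. You are somewhat more explicit than the paper about the reduction (noting that the index set is all of $\bit^d$, that $t=1$ suffices, and that $P$ can be obtained by internally simulating $R^*$ before using the challenge states), but these are elaborations of the same argument rather than a different approach.
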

\begin{proof}
This follows from a standard hybrid argument. Let $R^*$ be a QPT receiver. \\

\paragraph{Hybrid $\hybrid_1$.} This corresponds to $C$ committing to the bit $b=0$. In more details, let $P=\bigotimes_{x \in \{0,1\}^{d}} P_x$. be the Pauli sent by $R^*$ to $C$. Then, $C$ computes $\sigma_{k,x} \leftarrow G_{\secparam}(k,x)$, for every $x \in \{0,1\}^{d}$. $C$ sends $\ciphertext = \bigotimes_{x \in \{0,1\}^{d(\secparam)}} \widetilde{\sigma}_{k,x}$, where $\widetilde{\sigma}_{k,x} = \sigma_{k,x}$ to $R^*$. \\

\paragraph{Hybrid $\hybrid_2$.} This hybrid is the same as before, except that $\sigma_{k,x}=\ketbra{\vartheta_x}{\vartheta_x}$, where $\ket{\vartheta_1},\ldots,\ket{\vartheta_{2^d}} \leftarrow \Haar_n$. 
\par The output distributions of $\hybrid_1$ and $\hybrid_2$ are computationally indistinguishable from the security of PRFS $\{G_{\secparam}\left( \cdot,\cdot \right)\}_{\secparam \in \mathbb{N}}$. \\

\paragraph{Hybrid $\hybrid_3$.} This hybrid is the same as before, except that $\widetilde{\sigma}_{k,x}=P_x \sigma_{k,x}P_x$. That is, the operator $P^b$ is applied to $\sigma_{k,x}$.  
\par The output distributions of $\hybrid_2$ and $\hybrid_3$ are identical by unitary invariance of Haar random states.  \\

\paragraph{Hybrid $\hybrid_4$.} This corresponds to $C$ committing to the bit $b=1$. In more detail, let $P$ be the Pauli sent by $R^*$ to $C$. Then, $C$ computes $\sigma_{k,x} \leftarrow G_{\secparam}(k,x)$, for every $x \in \{0,1\}^{d}$. $C$ sends $\ciphertext = \bigotimes_{x \in \{0,1\}^{d(\secparam)}} \widetilde{\sigma}_{k,x}$, where $\widetilde{\sigma}_{k,x} = P_x \sigma_{k,x} P_x$ to $R^*$.
\par The output distributions of $\hybrid_3$ and $\hybrid_4$ are computationally indistinguishable from the security of PRFS $G$.
\end{proof}

\paragraph{Statistical binding.}
The rest of the section will be devoted to proving statistical binding of the construction.
For this part, we explicitly assume recognizable abort to simplify the analysis.
However, we believe that with some more work, our construction would still satisfy binding even if a more generic PRFS is used.

\begin{lemma}
\label{lem:statbind}
  $(C,R)$ satisfies $O(2^{-\lambda})$-statistical binding if the $(d, n)$-PRFS satisfies recognizable abort property and $2^n \cdot d \ge \comlen$.
\end{lemma}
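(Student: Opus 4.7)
The plan is to construct a (possibly inefficient) extractor $\mathcal{E}$ that implements a sequential projective measurement onto the subspaces of ``legal commitments'' for bits $0$ and $1$, and then to show that these two subspaces are nearly orthogonal when the challenge Pauli $P$ is uniformly random. Fix the receiver's Pauli $P = P_0 \otimes \cdots \otimes P_{2^d-1}$, let $\ket{\psi_k} := \bigotimes_{x} \ket{\psi_{k,x}}$ where $\ket{\psi_{k,x}}$ is the correct output state of $G_\secparam(k,x)$ guaranteed by \Cref{def:classicalgen}, and let $V_0 = \mathrm{span}\{\ket{\psi_k} : k \in \bit^\secparam\}$, $V_1 = P \cdot V_0$, with corresponding orthogonal projectors $\Pi_0$ and $\Pi_1 = P\, \Pi_0\, P^\dagger$. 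The extractor $\mathcal{E}$ measures the commitment register with $\{\Pi_0, I-\Pi_0\}$; on outcome $\Pi_0$ it outputs $b' = 0$; otherwise it measures $\{\Pi_1, I-\Pi_1\}$, outputting $b' = 1$ on success and $b' = \bot$ on failure. The associated Kraus operators are $K_0 = \Pi_0$, $K_1 = \Pi_1(I-\Pi_0)$, $K_\bot = (I-\Pi_1)(I-\Pi_0)$.

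The key analytic step is the near-orthogonality estimate $\norm{\Pi_0 \Pi_1} \leq 2^{-3\secparam/2}$ with probability at least $1 - 2^{-2\secparam}$ over the uniformly random Pauli $P$. My plan: use $\norm{\Pi_0 \Pi_1}^2 = \norm{\Pi_0 \Pi_1 \Pi_0} \leq \Tr(\Pi_0 P \Pi_0 P^\dagger)$. The Pauli twirl ($1$-design property) gives $\E_P [P \Pi_0 P^\dagger] = (\mathrm{rank}(\Pi_0)/2^m) \cdot I$ with $m = 2^d n \geq 7\secparam$; since $\mathrm{rank}(\Pi_0) \leq 2^\secparam$, the expected trace is at most $2^{2\secparam}/2^{7\secparam} = 2^{-5\secparam}$, and Markov's inequality yields the stated tail bound.

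Next I would case-analyse the joint probability of $(b', \mu)$ in the ideal experiment. By \Cref{lem:test-product}, the honest receiver's Test under decommitment $(k, b)$ corresponds to accepting POVM element $\eta_k^2 \ketbra{\phi_{k,b}}{\phi_{k,b}}$ on the commitment register, where $\ket{\phi_{k,b}} := P^b \ket{\psi_k}$ and $\eta_k := \prod_x \eta_{k,x}$; crucially $\ket{\phi_{k,0}} \in V_0$ and $\ket{\phi_{k,1}} \in V_1$. Conditioning on near-orthogonality, I would verify: (i) if $b' = 0$ and $b = 0$, the joint acceptance probability equals its real-world value exactly, because $\Pi_0 \ket{\phi_{k,0}} = \ket{\phi_{k,0}}$; (ii) if $b' = 0$ and $b = 1$, the joint probability is at most $\eta_k^2 \norm{\Pi_0 \ket{\phi_{k,1}}}^2 \leq \norm{\Pi_0 \Pi_1}^2 \leq 2^{-3\secparam}$; (iii) if $b' = 1$ and $b = 0$, the joint probability is zero since $(I-\Pi_0)\ket{\phi_{k,0}} = 0$; (iv) if $b' = 1$ and $b = 1$, the joint probability equals $\eta_k^2 \langle(I-\Pi_0)\phi_{k,1}|\ciphertext|(I-\Pi_0)\phi_{k,1}\rangle$, which deviates from the real-world value $\eta_k^2 \langle\phi_{k,1}|\ciphertext|\phi_{k,1}\rangle$ by at most $O(2^{-3\secparam/2})$ via a Cauchy--Schwarz expansion using $\ket{\phi_{k,1}} = (I-\Pi_0)\ket{\phi_{k,1}} + \Pi_0\ket{\phi_{k,1}}$ with the latter summand of norm at most $2^{-3\secparam/2}$; (v) if $b' = \bot$, both openings produce zero acceptance probability, since $(I-\Pi_0)\ket{\phi_{k,0}} = 0$ and $(I-\Pi_0)(I-\Pi_1)\ket{\phi_{k,1}} = 0$.

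Combining (i)--(v), the ideal-world extraction-error event $\mu \notin \{\bot, b'\}$ has total probability at most $O(2^{-3\secparam})$, and the discrepancy on the ``good'' outcomes is at most $O(2^{-3\secparam/2})$, both conditional on the near-orthogonality event; the latter itself fails with probability at most $2^{-2\secparam}$. By the triangle inequality this yields $\TraceDist{\realexpt_{\secparam}^{C^*}}{\idealexpt_{\secparam}^{C^*,\mathcal{E}}} \leq O(2^{-\secparam})$. The main obstacle I anticipate is case (iv), where the post-extraction state is sandwiched by $(I-\Pi_0)$ on both sides rather than being a clean rescaling of $\ketbra{\phi_{k,1}}{\phi_{k,1}}$, requiring the above Cauchy--Schwarz manipulation. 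A secondary subtlety is that an adversarial committer may produce its classical opening $(k,b)$ via a measurement on its own register correlated with the receiver's commitment and hence with the extractor's outcome; however, since the per-$(k,b)$ bounds above hold uniformly over the commitment state and the near-orthogonality estimate depends only on the PRFS and on $P$, they remain valid after averaging over the committer's strategy.
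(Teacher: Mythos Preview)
Your overall strategy is sound and genuinely different from the paper's. The paper uses a \emph{symmetric} three-outcome POVM $\{\sqrt{\Lambda_0},\sqrt{\Lambda_1},\sqrt{\Lambda_\bot}\}$ with $\Lambda_b = p^{-1}\Pi_b$ and $p=\norm{\Pi_0+\Pi_1}$, whereas you use a \emph{sequential} projective extractor with Kraus operators $K_0=\Pi_0$, $K_1=\Pi_1(I-\Pi_0)$, $K_\bot=(I-\Pi_1)(I-\Pi_0)$. Your near-orthogonality estimate via $\norm{\Pi_0\Pi_1}^2\le\Tr(\Pi_0\Pi_1)$ and the Pauli $1$-design plus Markov is cleaner and gives tighter constants than the paper's bound on $\norm{\Pi_0+\Pi_1}$. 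The paper's symmetric extractor, on the other hand, makes the ``match'' cases exact up to a global factor $p^{-1}$, avoiding the Cauchy--Schwarz patch you need in case~(iv).

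There are, however, two concrete errors in your case analysis stemming from the non-self-adjointness of $K_1$ and $K_\bot$. In case~(iii) you claim zero because $(I-\Pi_0)\ket{\phi_{k,0}}=0$, but the quantity that appears is $K_1^\dagger\ket{\phi_{k,0}}=(I-\Pi_0)\Pi_1\ket{\phi_{k,0}}$: the $(I-\Pi_0)$ acts \emph{after} $\Pi_1$, not before, so this is only $O(\norm{\Pi_0\Pi_1})$ in norm, not zero. Likewise in case~(v) with $b=0$, $K_\bot^\dagger\ket{\phi_{k,0}}=(I-\Pi_0)(I-\Pi_1)\ket{\phi_{k,0}}=-(I-\Pi_0)\Pi_1\ket{\phi_{k,0}}$, again $O(\norm{\Pi_0\Pi_1})$ rather than zero. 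Both corrections still contribute only $O(\norm{\Pi_0\Pi_1}^2)=O(2^{-3\lambda})$ to the extraction-error probability, so your final bound survives, but the stated reasoning is wrong.

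You also omit the $\mu=\bot$ (receiver rejects) branch entirely; your five cases only cover acceptance. This branch is needed for the full trace-distance comparison of $(\tau_{C^*},\mu)$. It can be handled by noting that $\sum_{b'}K_{b'}^\dagger K_{b'}=I$ and that you have already bounded the discrepancy in the accept-operators $\sum_{b'}K_{b'}^\dagger\ketbra{\phi_{k,b}}{\phi_{k,b}}K_{b'}$ versus $\ketbra{\phi_{k,b}}{\phi_{k,b}}$ in trace norm by $O(\norm{\Pi_0\Pi_1})$; the reject discrepancy is then the same by complementation. The paper makes this step explicit via the identity $M_\bot=I-M_0$, $N_\bot=I-N_b-N_{\mathfrak E}$.
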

Let $C^*=\{C^*_{\secparam}\}_{\secparam \in \mathbb{N}}$ be an malicious committer. Suppose $C^*_{\secparam}$ executes the commit phase with the honest receiver $R_{\secparam}$. Let $\ciphertext$ denote the mixed state sent by $C^*$ to $R$.

We first describe the extractor.

\paragraph{Description of ${\cal E}$.} On input the commitment $\ciphertext$, the extractor $\cal E$ obtains the description of the Pauli matrix $P$ from the receiver's state, and performs general measurement $\Lambda$ whose operators are  $\{\sqrt{\Lambda_0},\sqrt{\Lambda_1},\sqrt{\Lambda_{\bot}}\}$, where $\Lambda_0,\Lambda_1,\Lambda_{\bot}$ are positive semi-definite operators defined as follows:
\begin{itemize}
    \item Define $T_0$ to be the subspace spanned by $\left\{ \bigotimes_{x \in \{0,1\}^{2^{d}}}\ketbra{\psi_{k,x}}{\psi_{k,x}}\ :\ \forall k \in \{0,1\}^{\secparam} \right\}$, where the states $\ket{\psi_{k, x}}$ are pure states guaranteed by~\Cref{def:classicalgen}. Let $\Pi_0$ be a projection that projects onto $T_0$. 
    \item Define $T_1$ to be the subspace spanned by $\left\{ P \bigotimes_{x \in \{0,1\}^{2^d}}\ketbra{\psi_{k,x}}{\psi_{k,x}}P\ :\ \forall k \in \{0,1\}^{\secparam} \right\}$. Let $\Pi_1$ be a projection that projects onto $T_1$.  Note that $\Pi_1=P \Pi_0 P$ by definition.
    \item Let $p = \operatornorm{\Pi_0 + \Pi_1}$ (i.e. the maximum eigenvalue of $\Pi_0 + \Pi_1$), $\Lambda_0 = p^{-1} \cdot \Pi_0$, and $\Lambda_1= p^{-1} \cdot \Pi_1$. Define $\Lambda_{\bot} = I - (\Lambda_0 + \Lambda_1)$.
    Since $\Pi_0$ and $\Pi_1$ are projections, $\sqrt{\Lambda_0}$ and $\sqrt{\Lambda_1}$ are well defined. By definition, $\Lambda_0 + \Lambda_1 = p^{-1} (\Pi_0 + \Pi_1) \preccurlyeq I$ and therefore $\Lambda_\bot$ is positive-semidefinite. Thus  $\sqrt{\Lambda_{\bot}}$ is also well-defined. 
\end{itemize}

Let the measurement outcome be $b' \in \{0,1,\bot\}$ and let the post-measurement state be denoted by $\ciphertext'$ after applying the general measurement $\{\sqrt{\Lambda_0}, \sqrt{\Lambda_1}, \sqrt{\Lambda_\bot}\}$. The extractor $\cal E$ outputs $(\ciphertext',b')$. This completes the description of the extractor.

\begin{fact}
\label{clm:paulioverlap}
Let $\ket{\phi}$ and $\ket{\psi}$ be two arbitrary $m$-qubit states. Let ${\cal P}_m$ be the $m$-qubit Pauli group. Then,
$$
\E_{P \leftarrow \cal{P}_m} \left [ \left|\bra{\psi} P \ket{\phi} \right|^2 \right ] = 2^{-m}.
$$
\end{fact}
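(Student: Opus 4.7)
The plan is to unfold the squared absolute value into a quadratic form and then reduce the statement to the well-known fact that conjugation by a uniformly random Pauli is a \emph{twirl} to the maximally mixed state. Specifically, I would first write
\[
    \E_{P \leftarrow \cal{P}_m} \left| \bra{\psi} P \ket{\phi} \right|^2 = \E_{P \leftarrow \cal{P}_m} \bra{\psi} P \ketbra{\phi}{\phi} P^\dagger \ket{\psi} = \bra{\psi} \Big(\E_{P \leftarrow \cal{P}_m} P \ketbra{\phi}{\phi} P \Big) \ket{\psi},
\]
using that every Pauli is Hermitian so $P^\dagger = P$. This reduces the claim to showing $\E_P P \rho P = I/2^m$ for every $m$-qubit density matrix $\rho$ (applied with $\rho = \ketbra{\phi}{\phi}$).

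For the twirling identity, I would expand $\rho$ in the Pauli basis as $\rho = 2^{-m} \sum_{Q \in \cal{P}_m} \Tr(Q \rho) \, Q$, using the orthogonality relation $\Tr(Q Q') = 2^m \delta_{Q,Q'}$. Then linearity gives
\[
    \E_{P \leftarrow \cal{P}_m} P \rho P = 2^{-m} \sum_{Q \in \cal{P}_m} \Tr(Q \rho) \cdot \E_{P \leftarrow \cal{P}_m} \bigl[ P Q P \bigr].
\]
The key combinatorial fact is that for any fixed non-identity Pauli $Q$, exactly half the Paulis in $\cal{P}_m$ commute with $Q$ and exactly half anticommute (since $Q$ has at least one non-identity single-qubit factor, and a single-qubit non-identity Pauli anticommutes with exactly two of $\{I, X, Y, Z\}$). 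Hence $\E_P P Q P = 0$ whenever $Q \neq I$, while for $Q = I$ we trivially get $I$. Since $\Tr(I \cdot \rho) = 1$, only the identity term survives and we conclude $\E_P P \rho P = 2^{-m} I$.

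Plugging back in yields $\bra{\psi} (2^{-m} I) \ket{\psi} = 2^{-m}$, finishing the proof. The argument is almost entirely mechanical; the only subtle point is the commute/anticommute counting for a random Pauli, which is a standard computation so I do not expect any real obstacle.
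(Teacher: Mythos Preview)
Your proposal is correct and follows essentially the same route as the paper: rewrite $|\bra{\psi}P\ket{\phi}|^2$ as $\bra{\psi}\bigl(P\ketbra{\phi}{\phi}P\bigr)\ket{\psi}$, pull the expectation inside, and invoke the Pauli twirling identity $\E_{P}[P\rho P]=2^{-m}I$. The only difference is that the paper simply cites this twirling identity from~\cite{MTdW00}, whereas you supply a self-contained proof via the Pauli-basis expansion and the commute/anticommute count; this is a harmless elaboration, not a different approach.
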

\begin{proof}
We first observe that $\left|\bra{\psi} P \ket{\phi} \right|^2 = \Tr\left( P \ketbra{\psi}{\psi} P \ketbra{\phi}{\phi} \right)$; this follows from the fact that the trace of an outer product of two vectors is equivalent to the square of their inner product.  

We also use the following fact from~\cite{MTdW00}: for any $m$-qubit density matrix $\rho$,
\begin{eqnarray}
\label{eqn:qotp}
\mathbb{E}_{P \leftarrow {\cal P}_m} \left[  P \rho P \right] = \frac{I}{2^m}~.
\end{eqnarray}
This implies that for all states $\ket{\psi}, \ket{\phi}$,
\begin{align*}
\mathbb{E}_{P \leftarrow {\cal P}_m} \left[ \left|\bra{\psi} P \ket{\phi} \right|^2  \right]  & =  \mathbb{E}_{P \leftarrow {\cal P}_m} \left[ \Tr\left( P \ketbra{\psi}{\psi} P \ketbra{\phi}{\phi} \right)  \right] \\
 & = \Tr \left( \mathbb{E}_{P \leftarrow {\cal P}_m} \left[ P \ketbra{\psi}{\psi} P \ketbra{\phi}{\phi}  \right] \right) & \text{(from linearity of }\mathbb{E})\\
& = \Tr \left( \mathbb{E}_{P \leftarrow {\cal P}_m} \left[ P \ketbra{\psi}{\psi} P \right] \cdot \ketbra{\phi}{\phi} \right) \\
& = \Tr \left( \frac{I}{2^m} \cdot \ketbra{\phi}{\phi} \right) &  \text{(from }~\eqref{eqn:qotp}) \\
& = \frac{1}{2^m} \cdot \Tr \left( \ketbra{\phi}{\phi} \right) \\
&= \frac{1}{2^m}
\end{align*}
as desired.
\end{proof}

\begin{lemma}[Almost orthogonality of $\Pi_0$ and $\Pi_1$]
\label{clm:povm:closeness}
$$\Pr_{P \leftarrow {\cal P}_{m}}\left[ p \geq 1 + 3\cdot 2^{-(m-4\lambda)/3} \right] \leq 2^{-(m-4\lambda)/3}.$$
\end{lemma}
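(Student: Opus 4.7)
The plan is to reduce the claim to a concentration bound on the Frobenius norm of $\Pi_0 P \Pi_0$, use \Cref{clm:paulioverlap} to compute the expectation, and then apply Markov's inequality with a carefully tuned threshold. The starting observation is the standard operator-norm bound for a sum of two projectors: expanding
\[
(\Pi_0+\Pi_1)^2 = \Pi_0+\Pi_1+\Pi_0\Pi_1+\Pi_1\Pi_0
\]
and using $\|(\Pi_0+\Pi_1)^2\|=p^2$, $\|\Pi_1\Pi_0\|=\|\Pi_0\Pi_1\|$, and the triangle inequality yields $p^2\le p+2\|\Pi_0\Pi_1\|$, hence $p\le 1+2\|\Pi_0\Pi_1\|$. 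Since $\Pi_1=P\Pi_0 P$, I can rewrite $\|\Pi_0\Pi_1\|^2=\|\Pi_0\Pi_1\Pi_0\|=\|\Pi_0 P\Pi_0 P\Pi_0\|$, and the positive semidefiniteness of this operator gives
\[
\|\Pi_0 P\Pi_0 P\Pi_0\|\le \Tr(\Pi_0 P\Pi_0 P\Pi_0)=\Tr(\Pi_0 P\Pi_0 P),
\]
using cyclicity of the trace and $\Pi_0^2=\Pi_0$. Thus it suffices to upper-bound $\Tr(\Pi_0 P\Pi_0 P)$ with high probability over $P$.

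To bound this trace, I would fix an orthonormal basis $\{|\phi_k\rangle\}_{k=1}^D$ of $T_0$, where $D=\dim T_0\le 2^\lambda$ since $T_0$ is spanned by the at most $2^\lambda$ vectors $\bigotimes_x |\psi_{k,x}\rangle$, and then expand
\[
\Tr(\Pi_0 P\Pi_0 P)=\sum_{k,k'=1}^D |\langle \phi_k|P|\phi_{k'}\rangle|^2.
\]
By \Cref{clm:paulioverlap} each summand has expectation exactly $2^{-m}$ over a uniformly random $m$-qubit Pauli, so $\mathbb{E}_P\Tr(\Pi_0 P\Pi_0 P)\le D^2\cdot 2^{-m}\le 2^{2\lambda-m}$. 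Markov's inequality then gives
\[
\Pr_P\bigl[\Tr(\Pi_0 P\Pi_0 P)\ge t\bigr]\le 2^{2\lambda-m}/t
\]
for any $t>0$.

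Finally, I would choose $t=2^{-2(m-4\lambda)/3}$. On the event $\Tr(\Pi_0 P\Pi_0 P)<t$, the chain of inequalities above yields $\|\Pi_0\Pi_1\|<\sqrt t=2^{-(m-4\lambda)/3}$ and hence $p<1+2\cdot 2^{-(m-4\lambda)/3}<1+3\cdot 2^{-(m-4\lambda)/3}$. The failure probability is $2^{2\lambda-m}/t=2^{-(m+2\lambda)/3}\le 2^{-(m-4\lambda)/3}$, matching the stated bound. The one step where I expect to have to be careful is the relation $p\le 1+2\|\Pi_0\Pi_1\|$; the quadratic $p^2\le p+2\|\Pi_0\Pi_1\|$ gives $p\le (1+\sqrt{1+8\|\Pi_0\Pi_1\|})/2$, and one checks $\sqrt{1+8x}\le 1+4x$ to get the stated linear bound. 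Everything else is routine once the reduction to a Frobenius-norm computation of the Pauli twirl is in place.
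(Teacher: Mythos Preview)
Your proof is correct and arrives at the same key quantity, $\Tr(\Pi_0\Pi_1)=\Tr(\Pi_0 P\Pi_0 P)$, as the paper does, but the reduction is different. The paper fixes an arbitrary unit vector $\ket\psi$, decomposes it as $\ket\alpha+\ket\beta$ with $\ket\alpha\in T_0$ and $\ket\beta\in T_0^\perp$, and bounds $\bra\psi(\Pi_0+\Pi_1)\ket\psi\le 1+3\sqrt{\Tr(\Pi_0\Pi_1)}$ via Cauchy--Schwarz-type manipulations on the cross terms. You instead work directly with operator norms: from $(\Pi_0+\Pi_1)^2=\Pi_0+\Pi_1+\Pi_0\Pi_1+\Pi_1\Pi_0$ you extract $p\le 1+2\|\Pi_0\Pi_1\|$, and then $\|\Pi_0\Pi_1\|^2=\|\Pi_0\Pi_1\Pi_0\|\le\Tr(\Pi_0\Pi_1)$. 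Both routes land on the same trace, and both end with the same Markov step (the paper uses a union bound over pairs $(i,j)$ before Markov, whereas you apply Markov to the sum directly; the resulting tail bounds coincide). Your argument is slightly cleaner in that it avoids the state decomposition and the union bound, at the cost of needing the small quadratic manipulation $p^2\le p+2\|\Pi_0\Pi_1\|\Rightarrow p\le 1+2\|\Pi_0\Pi_1\|$, which you handle correctly.
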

\begin{proof}
Let $\ket{\psi}$ be an arbitrary $m$-qubit pure state.
Write $\ket{\psi} = \ket{\alpha} + \ket{\beta}$, where $\ket{\alpha}$ is the projection of $\ket{\psi}$ onto the subspace $T_0$, and $\ket{\beta}$ is the projection of $\ket{\psi}$ onto the orthogonal complement of $T_0$.
We determine an upper bound for the following quantity: 
\begin{align*}
\bra{\psi} (\Pi_0 + \Pi_1) \ket{\psi} & = \left( \bra{\alpha} + \bra{\beta} \right)(\Pi_0 + \Pi_1) \left( \ket{\alpha} + \ket{\beta} \right) \\
&= \left( \bra{\alpha} + \bra{\beta} \right)\left( \ket{\alpha} + \Pi_1\ket\alpha + \Pi_1\ket{\beta} \right) \\
&= \braket{\alpha|\alpha} + \bra{\alpha} \Pi_1 \ket{\beta} + \bra{\beta} \Pi_1 \ket{\alpha} + \bra{\beta} \Pi_1 \ket{\beta} + \braket{\alpha|\Pi_1|\alpha} \\
&\leq \ip{\alpha}{\alpha} + \ip{\beta}{\beta} + 2 \left |\bra{\alpha} \Pi_1 \ket{\beta} \right | + \braket{\alpha|\Pi_1|\alpha} \\
&= 1 + 2 \left |\bra{\alpha} \Pi_1 \ket{\beta} \right | + \braket{\alpha|\Pi_1|\alpha} \\
&= 1 + 2 \sqrt{\braket{\alpha|\Pi_1|\beta}\braket{\beta|\Pi_1|\alpha}} + \braket{\alpha|\Pi_1|\alpha} \\
&\le 1 + 2\sqrt{\braket{\alpha|\Pi_1|\alpha}} + \braket{\alpha|\Pi_1|\alpha} \\
&\le 1 + 3\sqrt{\braket{\alpha|\Pi_1|\alpha}} \\
&\leq 1 + 3\sqrt{\Tr(\Pi_0 \Pi_1)}
\end{align*}
where we used the fact that since $\ket{\alpha}$ is contained in the support of $\Pi_0$, we have $\ketbra{\alpha}{\alpha} \preceq \Pi_0$ and thus $\Tr(\ketbra{\alpha}{\alpha} \, \Pi_1) \leq \Tr(\Pi_0 \, \Pi_1)$. 

We now estimate the quantity $\Tr(\Pi_0 \, \Pi_1)$. Let $\{\ket{u_1},\ldots,\ket{u_{\dim(T_0)}} \}$ be an orthonormal basis of $T_0$, so that $\Pi_0 = \sum_{i=1}^{\dim(T_0)} \ketbra{u_i}{u_i}$. Using that $\Pi_1 = P \Pi_0 P$ and cyclicity of trace, we have
\[
    \Tr(\Pi_0 \, \Pi_1) = \sum_{i,j=1}^{\dim(T_0)} \bra{u_j} P \ketbra{u_i}{u_i} P \ket{u_j} \leq 2^{2\lambda} \cdot \max_{i,j} |\bra{u_i} P \ket{u_j}|^2
\]
where we used that $\dim(T_0) \leq 2^\secparam$. 

Now, applying \Cref{clm:paulioverlap} to $|\bra{u_i} P \ket{u_j}|^2$ and using Markov's inequality we get that for each $i,j \in [\dim(T_0)]$ we have for all $\delta > 0$,
\[
    \Pr_{P \leftarrow \cal{P}_m} \left [ |\bra{u_i} P \ket{u_j}|^2 \geq \delta \right ] \leq \delta^{-1} 2^{-m}.
\]
Using a union bound over all $i,j$, 
\[
    \Pr_{P \leftarrow \cal{P}_m} \left [\exists \, i,j  : |\bra{u_i} P \ket{u_j}|^2 \geq \delta  \right ] \leq \delta^{-1} 2^{2\secparam - m}
\]
which implies
\[
    \Pr_{P \leftarrow \cal{P}_m} \left [ \Tr(\Pi_0 \, \Pi_1) \geq  \delta \, 2^{2\secparam} \right ] \leq \delta^{-1} \, 2^{2\secparam - m}~.
\]
Putting everything together, since for all $\ket{\psi}$ the quantity $\bra{\psi} (\Pi_0 + \Pi_1) \ket{\psi}$ is upper-bounded by a quantity that only depends on $\Tr(\Pi_0 \Pi_1)$ which only depends on $P$, we get 
$$\Pr_{P \leftarrow {\cal P}_{m}}\left[ \max_{\ket{\psi}} \{ \bra{\psi} (\Pi_0 + \Pi_1) \ket{\psi} \}  \geq 1 + 3 \sqrt{\delta} \, 2^{\secparam} \right] \leq
\delta^{-1} \, 2^{2 \secparam - m}~.
$$
Setting $\delta = 2^{2(\secparam - m)/3}$ we get the desired lemma statement.
\end{proof}

\paragraph{Indistinguishability of Real World and Ideal World.} We need to show that the output distributions of $\realexpt_{\secparam}^{C^*}$ and $\idealexpt_{\secparam}^{C^*,{\cal E}}$ as defined in \Cref{def:stat:binding} are statistically indistinguishable.\\

\noindent To argue this, we set up some notation. 
\begin{itemize}
    \item We assume that after the commit phase, the random Pauli $P$ sent by $R$ in the first message and $C^*$'s decommitment $(k,b)$ are obtained by measuring some registers of their joint state. 
    Let $\sigma_{\reg{X} \reg{Y}}$ denote the joint state of $C^*$ and $R$ after the commit phase, conditioned on the Pauli $P$ and the decommitment $(k,b)$. The register $\reg{X}$ denotes $C^*$'s private register and $\reg{Y}$ denotes $R$'s private register. 
    
    \item Let $\rho_{{\sf real}}$ denote the output of $\realexpt^{C^*}_\secparam$. If $b = \bot$, then  $\rho_{{\sf real}} = (\sigma_{\reg{X}},\ketbra{\bot}{\bot})$. Otherwise, since the $\Test_\secparam^{\otimes 2^d}$ is being applied to register $\reg{Y}$ of $P^b \sigma_{\reg{X} \reg{Y}} P^b$ (where $P^b$ is applied to register $\reg{X}$), we have
    \begin{align*}
        \rho_{{\sf real}} &= \E_{P,k,b} \Tr_{\reg{Y}} \Big( \Test_\secparam^{\otimes 2^d}( \{k,x\}_x , P^b \sigma_{\reg{X} \reg{Y}} P^b) \Big ) \\
        &= \E_{P,k,b} \Tr_{{\reg Y}}\left( M_0 P^b \sigma P^b \right) \otimes \ketbra{b}{b}  + \Tr_{{\reg Y}}\left( M_\bot P^b \sigma P^b \right) \otimes \ketbra{\bot}{\bot}
    \end{align*}
    where $M_0 = \eta^2 \ketbra{\psi}{\psi}$ and $M_\bot = I - M_0$ are positive semi-definite operators acting on register $\reg{Y}$ with $\eta, \ket{\psi}$ given by  \Cref{lem:test-product}. The expectation is over the choice of random Pauli $P$ and decommitment $(k,b)$.
    
    \item Let $\rho_{{\sf ideal}}$ denote the output of $\idealexpt^{C^*,\cal E}_\secparam$. If $b = \bot$, then by definition of the ideal experiment, the output is $(\sigma_{\reg{X}},\ketbra{\bot}{\bot})$. Otherwise, the general measurement $\{ \sqrt{\Lambda_0}, \sqrt{\Lambda_1}, \sqrt{\Lambda_\bot} \}$ is performed first on register $\reg{Y}$ of the state $\sigma_{\reg{X} \reg{Y}}$ to yield outcome $a \in \{0,1,\bot\}$. Conditioned on outcome $a$ the post-measurement state is $\frac{\sqrt{\Lambda_a} \sigma \sqrt{\Lambda_a}}{\Tr(\Lambda_a \sigma)}$.
    
    The Pauli operator $P^b$ and then the $\Test^{\otimes 2^d}_\secparam$ circuit is applied to register $\reg{Y}$ (corresponding to the reveal phase); if the test accepts and the decommitted bit $b$ matches the output $a$ of the extractor, then the register $\reg{X}$ and $\ketbra{\mu}{\mu}$ are output. If the bits do not match then the outcome is $\ketbra{\mathfrak E}{\mathfrak E}$. Otherwise the register $\reg{X}$ and $\ketbra{\bot}{\bot}$ are output. Put together, we get
    \begin{align*}
    \rho_{{\sf ideal}} &= \E_{P,k,b} \left[\Tr_{\reg{Y}} ( N_b \sigma) \otimes \ketbra{b}{b} + \Tr_{\reg{Y}} ( N_\perp \sigma) \otimes \ketbra{\bot}{\bot} + \Tr_{\reg{Y}} (N_{\mathfrak E} \sigma) \otimes \ketbra{\mathfrak E}{\mathfrak E}\right]
    \end{align*}
    where $N_b = \sqrt{\Lambda_b} P^b M_0 P^b \sqrt{\Lambda_b}$, $N_{\mathfrak E} = \sqrt{\Lambda_{1 - b}} P^b M_0 P^b \sqrt{\Lambda_{1 - b}}$, and $N_\bot = I - N_b - N_{\mathfrak E}$. To see that this is correct in the case that the ideal experiment where the receiver accepts, consider that the post-measurement state of the extractor measurement, conditioned on obtaining outcome $b$, is $\frac{\sqrt{\Lambda_b} \sigma \sqrt{\Lambda_b}}{\Tr(\Lambda_b \sigma)}$. Applying $P^b$, conditioning on $\Test_\secparam^{\otimes 2^d}$ accepting, and then tracing out the register $\reg{Y}$ yields the state
    \[
        \Tr_{\reg{Y}} \Big( M_0 \Big( P^b \sqrt{\Lambda_b} \sigma \sqrt{\Lambda_b} P^b \Big) \Big).
    \]
    Note that all the operators $M_0, P^b, \sqrt{\Lambda_b}$ all act on the register $\reg{Y}$, and the partial trace over $\reg{Y}$ is cyclic with respect to such operators. Thus this is equal to $\Tr_{\reg{Y}}(N_b \sigma)$.

\end{itemize}

\noindent We now prove~\Cref{lem:statbind}. %

\begin{proof}[Proof of \Cref{lem:statbind}]
Write 
\begin{align*}
    \rho_{{\sf real}} &= \E_{P,k,b} \left[\tau_{{\sf real}}^{(b)} \otimes \ketbra{b}{b} + \tau_{{\sf real}}^{(\bot)} \otimes \ketbra{\bot}{\bot}\right] \\
    \rho_{{\sf ideal}} &= \E_{P,k,b} \left[\tau_{{\sf ideal}}^{(b)} \otimes \ketbra{b}{b} + \tau_{{\sf ideal}}^{(\bot)} \otimes \ketbra{\bot}{\bot} + \tau^{(\mathfrak{E})}_{{\sf ideal}} \otimes \ketbra{\mathfrak E}{\mathfrak E}\right]
\end{align*}
for subnormalized density matrices $\tau_{{\sf real}}^{(\cdot)},\tau_{{\sf ideal}}^{(\cdot)}, \tau^{(\mathfrak{E})}_{{\sf ideal}} = \Tr_{\reg{Y}} (N_{\mathfrak E} \sigma)$ %
which implicitly depend on $P,k,b$. Since the trace distance is jointly convex we have
\begin{align*}
    &\equad \TraceDist{\rho_{{\sf real}}}{\rho_{{\sf ideal}}} \\
    &\leq \E_{P,k,b} \TraceDist{\tau_{{\sf real}}^{(b)} \otimes \ketbra{b}{b} + \tau_{{\sf real}}^{(\bot)} \otimes \ketbra{\bot}{\bot}}{\tau_{{\sf ideal}}^{(b)} \otimes \ketbra{b}{b} + \tau_{{\sf ideal}}^{(\bot)} \otimes \ketbra{\bot}{\bot} + \tau^{(\mathfrak{E})}_{{\sf ideal}} \otimes \ketbra{\mathfrak E}{\mathfrak E}}\\
    &= \E_{P,k,b}\left[\TraceDist{\tau_{{\sf real}}^{(b)}}{\tau_{{\sf ideal}}^{(b)}} + \TraceDist{\tau_{{\sf real}}^{(\bot)}}{\tau_{{\sf ideal}}^{(\bot)}} + \Tr(\tau^{(\mathfrak{E})}_{{\sf ideal}}) \right].
\end{align*}
Define $\mu = \Tr(\tau^{(\mathfrak{E})}_{{\sf ideal}})$. Using that $M_\bot = I - M_0$ and $N_\bot = I - N_b - N_{\mathfrak E}$ and that the partial trace is cyclic with respect to operators acting on $\reg{Y}$ only, we have
\begin{align*}
\TraceDist{\tau_{{\sf real}}^{(\bot)}}{\tau_{{\sf ideal}}^{(\bot)}} &= \TraceDist{ \Tr_{\reg{Y}} \Big( M_\bot P^b \sigma P^b \Big)}{ \Tr_{\reg{Y}} \Big( N_\bot \sigma \Big) } \\
&= \TraceDist{ \Tr_{\reg{Y}} \Big(  P^b M_0 P^b \sigma \Big)}{\Tr_{\reg{Y}} \Big( (N_b + N_{\mathfrak E}) \sigma \Big) }\\
&\le \TraceDist{\tau_{{\sf real}}^{(b)}}{\tau_{{\sf ideal}}^{(b)}} + \mu.
\end{align*}
Thus to prove the Lemma it suffices to prove that the following statement is true: for all $k \in \bit^\secparam$ and $b \in \bit$, 
\begin{equation}
    \label{eq:commit-sufficient}
    \E_{P \leftarrow \cal{P}_m} \left[\TraceDist{\tau^{(b)}_{{\sf real}}}{\tau^{(b)}_{{\sf ideal}}} + \mu\right] \leq \frac{5}{2^\secparam}~.
\end{equation}

Fix a decommitment $(k,b)$. Recall that $M_0 = \eta^2 \ketbra{\psi}{\psi}$ where $\ket{\psi} = \bigotimes_x \ket{\psi_{k,x}}$, and that $\Lambda_b = p^{-1} \cdot \Pi_b$. Then
\begin{align*}
    N_b &= \sqrt{\Lambda_b} P^b M_0 P^b \sqrt{\Lambda_b} \\
    &= \eta^2 \, p^{-1} \, \sqrt{\Pi_b} P^b \ketbra{\psi}{\psi} P^b \sqrt{\Pi_b} \\
    &= \eta^2 \, p^{-1} \, \Pi_b \, P^b  \ketbra{\psi}{\psi} P^b \, \Pi_b
\end{align*}
where we use the fact that $\sqrt{\Pi_b} = \Pi_b$ since it is a projector. Since $\Pi_b$ projects onto the span of $\{ P^b \bigotimes_x \ket{\psi_{k,x}} : k \in \bit^\secparam \}$, this means that $\Pi_b P^b \ket{\psi} = P^b \ket{\psi}$, so $N_b$ is equal to
\[
 \eta^2 \, p^{-1} \, P^b  \ketbra{\psi}{\psi} P^b = p^{-1} P^b M_0 P^b~.
\]
This means that 
\begin{align}
\E_{P \leftarrow \cal{P}_m} \TraceDist{\tau^{(b)}_{{\sf real}}}{\tau^{(b)}_{{\sf ideal}}} &= \E_{P \leftarrow \cal{P}_m} \TraceDist{ \Tr_{\reg{Y}} \Big(  P^b M_0 P^b \sigma \Big)}{\Tr_{\reg{Y}} \Big( N_b \sigma \Big) } \notag \\
&= \E_{P \leftarrow \cal{P}_m} \TraceDist{ \Tr_{\reg{Y}} \Big(  P^b M_0 P^b \sigma \Big)}{p^{-1} \, \Tr_{\reg{Y}} \Big( P^b M_0 P^b \sigma \Big) } \label{eq:commit-almost}
\end{align}
If $p$ (which is a function of $P$) is at most $1 + 3 \cdot 2^{-(4\lambda - m)/3}$ then we say $p$ is \emph{good}, otherwise it is \emph{bad}. By \Cref{clm:povm:closeness} $p$ is bad with probability at most $2^{-(m-4\lambda)/3}$. When $p$ is good, we have 
\[
\TraceDist{ \Tr_{\reg{Y}} \Big(  P^b M_0 P^b \sigma \Big)}{p^{-1} \, \Tr_{\reg{Y}} \Big( P^b M_0 P^b \sigma \Big) } \leq 1 - p^{-1} \leq 3 \cdot 2^{-(4\lambda - m)/3}
\]
where we used that $\TraceDist{\varphi}{p^{-1} \varphi} \leq 1 - p^{-1} \leq p - 1$ for all subnormalized density matrices $\varphi$.

On the other hand, similarly we have
\begin{align*}
  N_{\mathfrak E}
    &= \sqrt{\Lambda_{1 - b}} P^b M_0 P^b \sqrt{\Lambda_{1 - b}} \\
    &= \eta^2 p^{-1} \Pi_{1 - b} P^b \ketbra\psi\psi P^b \Pi_{1 - b} \\
    &= \eta^2 p^{-1} P^{1 - b} \Pi_0 P\ketbra\psi\psi P\Pi_0 P^{1 - b},
\end{align*}
where we use the same facts as before and in addition $\Pi_i = P^i \Pi_0 P^i$ for $i = 0, 1$.
Since $\Tr(\sigma) = 1$, $p \ge 1$ and $\eta^2 \le 1$,
\begin{align*}
  \mu
    &= \Tr(N_{\mathfrak E} \sigma) \\
    &\le \operatornorm{N_{\mathfrak E}} \, \Tr(\sigma) \\
    &= \eta^2 p^{-1} \operatornorm{\Pi_0 P\ketbra\psi\psi P\Pi_0} \\
    &\le \braket{\psi|P\Pi_0 P|\psi}.
\end{align*}
Thus $\E_{P \leftarrow \cal{P}_m} \mu \le 2^{\lambda - m}$ by the dimensionality of $\Pi_0$.

Therefore by \eqref{eq:commit-almost}, LHS of \eqref{eq:commit-sufficient} is at most
\[
    3 \cdot 2^{-(m-4\lambda)/3} + 2^{-(m-4\lambda)/3} + 2^{-(m - \lambda)} \le 4 \cdot 2^{-(m-4\lambda)/3} + 2^{-(m - \lambda)}.
\]
Averaging over $k,b$, we get that $\TraceDist{\rho_{{\sf real}}}{\rho_{{\sf ideal}}} \leq 8 \cdot 2^{-(m-4\lambda)/3} + 2 \cdot 2^{-(m - \lambda)}$, which is less than $\frac{10}{2^\secparam}$ when $m \geq \comlen$ as desired.
\end{proof}

\fi

\subsection{Application: Secure Computation}

In this section, we show how to base secure computation solely on the existence of a PRS.
While there are two works \cite{BartusekCKM21a,GLSV21} showing that post-quantum one-way functions and quantum communication suffice to obtain protocols for secure computation, the construction of Bartusek, Coladangelo, Khurana, and Ma~\cite{BartusekCKM21a} has the advantage that it uses the starting commitment scheme as a black box. %
We recall their main theorem.

\begin{theorem}[Implicit from \cite{BartusekCKM21a}]
\label{thm:mpc:statcomm}
 Assuming the existence of quantum statistically binding bit commitments, maliciously secure computation protocols (in the dishonest majority setting) for $P/poly$ exist.
\end{theorem}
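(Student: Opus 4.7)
The plan is to invoke the black-box MPC construction of~\cite{BartusekCKM21a} and verify that our extractor-based definition of statistical binding (\Cref{def:stat:binding}) satisfies the interface that their framework requires. This is not a result we reprove from scratch; rather, the work consists of a careful compatibility check between our definition and the way commitments are used inside their security reductions.

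First, I would recall the high-level structure of the~\cite{BartusekCKM21a} framework. They build a maliciously secure oblivious transfer protocol from statistically binding, computationally hiding quantum commitments, and then bootstrap to general maliciously secure MPC for $P/poly$ in the dishonest majority setting via the standard GMW-style compiler (instantiated in the quantum setting). The commitment scheme enters their security proof in two ways: in the simulator against a malicious sender, the simulator runs the honest receiver strategy and then extracts the committed bit so that it can emulate the rest of the protocol consistently; in the binding-based arguments, any successful decommitment by the adversary is pinned down to a unique bit determined already at the end of the commit phase.

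Second, I would verify point-by-point that our \Cref{def:stat:binding} supplies exactly these hooks. The extractor $\mathcal E$ in our definition acts only on the receiver's register (by the $I \otimes \mathcal{E}$ structure), so it can be used straight-line by a simulator playing the receiver role inside any larger protocol without disturbing the adversary's view. The ideal experiment outputs the error symbol $\mathfrak{E}$ whenever the committer successfully decommits to a bit different from the extracted one; statistical closeness of $\realexpt$ and $\idealexpt$ then implies that this event occurs with negligible probability, which is precisely the ``unique-decommitment'' guarantee invoked by~\cite{BartusekCKM21a}. Finally, because both the committer's residual state $\tau_{C^*}$ and the receiver's decision $\mu$ appear in the joint distribution being compared, the extraction is invisible to the committer up to negligible trace distance, so it can be composed inside any polynomially many sessions via a standard hybrid argument.

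The main obstacle is the type-checking: the security proof of~\cite{BartusekCKM21a} is written for a commitment scheme built from post-quantum one-way functions, and one has to confirm that wherever they open up properties of their specific commitment, they in fact only use the abstract binding interface captured by our definition. Since~\cite{BartusekCKM21a} was explicitly designed to treat the commitment as a black box, and our definition is a strict strengthening of the ad hoc binding notions (e.g., sum-binding, collapse-binding) considered in prior work on quantum commitments, this compatibility holds by inspection. Once this is established, plugging our commitment into their compiler yields the theorem, and \Cref{thm:intro:mpc} follows by combining with \Cref{thm:comm:prfs} and \Cref{thm:prfs-from-prs}.
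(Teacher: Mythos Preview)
Your high-level plan---invoke the \cite{BartusekCKM21a} compiler and check that \Cref{def:stat:binding} supplies the interface their reductions need---is the right shape, and it matches what the paper does. But your execution has a real gap.

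The central error is the claim that ``our definition is a strict strengthening of the ad hoc binding notions \ldots\ considered in prior work,'' and that compatibility therefore ``holds by inspection.'' In fact the relationship runs the other way: the paper shows (\Cref{lem:binding-generalized}) that the \cite{BartusekCKM21a} notion of statistical binding (\Cref{def:bckm-binding}) \emph{implies} \Cref{def:stat:binding}, not conversely. Their definition is phrased over the receiver's classical measurement randomness and is tailored to commitments with classical messages; our protocol sends a quantum state that the honest receiver does not measure, so \Cref{def:bckm-binding} simply does not apply to it. You cannot plug a scheme satisfying only the weaker, more general \Cref{def:stat:binding} into a proof written against \Cref{def:bckm-binding} and declare victory by inspection.

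The paper closes this gap with a concrete device you are missing: \Cref{lem:binding-partial-converse} shows that any $(C,R)$ satisfying \Cref{def:stat:binding} admits a statistically-indistinguishable variant $(C,\tilde R)$ with an \emph{inefficient} receiver (the honest receiver augmented by the extractor) that \emph{does} satisfy \Cref{def:bckm-binding}. The paper then identifies the two specific places in \cite{BartusekCKM21a} where binding is invoked---the committer-black-box compiler of their Theorem~1 and the receiver-black-box compiler of their Theorem~2---and for each proves, via a hybrid that swaps $R$ for $\tilde R$ session by session, that the compiler's conclusion still holds when the input scheme only satisfies \Cref{def:stat:binding}. This is not deep, but it is not ``by inspection'' either; without \Cref{lem:binding-partial-converse} and the per-compiler hybrid, your argument does not go through.
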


\paragraph{Comparison of the definitions of statistical binding.}
The application of \Cref{thm:mpc:statcomm} would be straightforward except for one subtlety, which is that we are using a more general definition of the statistical binding property than required by their work.
Their notion of statistical binding is tailored to commitment schemes with classical messages as it suffices for their purposes.
We first recall their definition of statistical binding in the full version of their work~\cite{bartusek2021oneway}, and show that it seems strictly stronger than our definition.
\begin{definition}[{\cite[Definition 3.2]{bartusek2021oneway}}]
  \label{def:bckm-binding}
  A bit commitment scheme is statistically binding if for every unbounded-size committer $\mathcal C^*$,
  there exists a negligible function $\nu(\cdot)$ such that with probability at least $1 - \nu(\lambda)$ over the measurement randomness in the commitment phase, there exists a bit $b \in \bit$ such that the probability that the receiver accepts $b$ in the reveal phase is at most $\nu(\lambda)$.
\end{definition}
\begin{lemma}
  \label{lem:binding-generalized}
  If a commitment scheme satisfies \Cref{def:bckm-binding}, then it also satisfies \Cref{def:stat:binding}.
\end{lemma}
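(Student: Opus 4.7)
The strategy is to construct the required extractor $\mathcal{E}$ as essentially a classical postprocessor of the honest receiver's measurement record, and then to observe that the ideal and real experiments agree except on a negligible event.

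\textbf{Building the extractor.} Fix an adversarial committer $C^*$ and let $\nu(\cdot)$ be the negligible function guaranteed for $C^*$ by \Cref{def:bckm-binding}. Without loss of generality we may assume that the honest receiver performs all of its commit-phase measurements in place and stores the classical outcomes $r$ in a dedicated subregister of its state. The extractor $\mathcal{E}$, acting only on the receiver's register after the commit phase, reads off $r$ and then, using unbounded computation, sets
\[
  b^*(r) := \begin{cases} 1-b & \text{if $b$ is the unique bit with $\Pr[R \text{ accepts } b \mid r] \le \nu(\lambda)$,} \\ \bot & \text{otherwise.} \end{cases}
\]
The extractor leaves the receiver's register otherwise unchanged and outputs $b^*(r)$ in a fresh ancilla.

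\textbf{Indistinguishability.} Because $\mathcal{E}$ only copies and reads classical information already present in the receiver's register, the joint committer--receiver state entering the reveal phase is identical in the real and ideal experiments. Hence the distribution of $\mu$ produced by the reveal phase is also identical conditioned on any value of $r$. The two experiments therefore agree completely whenever the ideal experiment does not output the error symbol $\mathfrak{E}$, i.e., whenever $\mu = \bot$ or $\mu = b^*(r)$. The error event occurs only in one of two cases: (i) the measurement randomness $r$ lies in the negligible-probability exceptional set of \Cref{def:bckm-binding} (total probability $\le \nu(\lambda)$), or (ii) $r$ is not exceptional but $C^*$ nonetheless reveals the bit $1 - b^*(r)$, which by the definition of $b^*(r)$ occurs with conditional probability $\le \nu(\lambda)$. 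A union bound then gives $\TraceDist{\realexpt^{C^*}_\lambda}{\idealexpt^{C^*,\mathcal{E}}_\lambda} \le 2\nu(\lambda)$.

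\textbf{Main obstacle.} The only subtle point is the ``without loss of generality'' at the start: some natural quantum commitment protocols allow the receiver to defer measurements to the reveal phase, in which case $r$ does not yet exist as classical data when $\mathcal{E}$ is invoked. In such a case, $\mathcal{E}$ first coherently simulates the deferred measurements on the receiver's register to obtain a classical $r$, and then proceeds as above. Since \Cref{def:bckm-binding} applies to any fixed honest receiver and any unbounded committer, it is equivalent to analyze the scheme against this modified receiver that measures immediately, so the same negligible bound on the failure event applies and the argument is unaffected.
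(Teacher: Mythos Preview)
Your proposal is correct and follows essentially the same approach as the paper: both construct the extractor as a classical function of the honest receiver's commit-phase measurement outcomes, observe that reading classical data leaves the joint state undisturbed, and conclude that the real and ideal experiments differ only on the negligible-probability extraction-error event. Your treatment is in fact a bit more explicit (you spell out $b^*(r)$ and give the $2\nu(\lambda)$ union bound), while the paper additionally remarks that one may purify the \emph{committer's} measurements so that all commit-phase randomness belongs to the receiver. Your ``main obstacle'' paragraph about the receiver deferring measurements is largely unnecessary here---\Cref{def:bckm-binding} is phrased in terms of commit-phase measurement randomness and, in the BCKM setting it comes from, the receiver's messages are classical---so the concern does not arise for schemes satisfying that definition.
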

\begin{proof}
  Since a malicious committer can always ``purify'' his measurements via the deferred measurement principle, without loss of generality we assume the only measurements in the commit phase are only done by the honest receiver.
  By \Cref{def:bckm-binding}, there exists a classical function $\mathcal E$ that maps the honest receiver's measurement outcomes $m$ to a bit so that the probability that the receiver accepts $1 - \mathcal E(m)$ is negligible (also known as the correctness of the extractor).
  As $\mathcal E$ only acts on the measurement outcome that is therefore guaranteed to be classical, $\mathcal E$ commutes with the committer's and receiver's operations.
  Furthermore, the output of $\mathcal E$ is also classical by definition.
  Therefore, the only difference between the real world and the ideal world is introduced by the extraction error in the ideal world, and thus the statistical indistinguishability follows immediately by the correctness of the extractor.
\end{proof}

Our protocol cannot satisfy this property since the honest receiver does not measure the committer's message in any way, and therefore in general it is possible for the committer to generate an equal superposition of commitment to 0 and commitment to 1, in which case this binding property is violated, as the receiver will open to 0 and 1 with equal probability.
Nonetheless, \Cref{def:stat:binding} is very similar to \Cref{def:bckm-binding}.
In particular, \Cref{def:stat:binding} says that there is an implicit measurement that could be done to extract the committed bit in a way unnoticeable to the malicious committer as well as the honest receiver.
Intuitively, whenever we would like to invoke \Cref{def:bckm-binding}, we can switch to the ideal world where the bit is extracted, and then this ``ideal scheme'' essentially satisfies \Cref{def:bckm-binding}.
We formalize this intuition with the following lemma.

\begin{definition}
  We call $(C, R)$ a quantum commitment scheme with an inefficient receiver if it satisfies the requirements of a commitment scheme except that $R$ need not be a QPT algorithm.
  
  Let $(C, R)$ and $(C, R')$ be two quantum commitment schemes with an inefficient receiver.
  We call them statistically indistinguishable against malicious committers, if the outcome of any (unbounded) nonuniform experiment described below can only distinguish $R$ from $R'$ with negligible advantage.
  \begin{itemize}
    \item The algorithm has an arbitrary non-uniform input state $\ket{\psi_\lambda}$, and interacts as a committer with either $R$ or $R'$ via the commitment scheme.
    \item The algorithm can choose to abort the interaction at any stage. Otherwise at the end of the interaction, $R$ or $R'$ outputs his decision as a classical symbol $\mu \in \{0, 1, \bot\}$ to the algorithm.
    \item The algorithm performs an arbitrary channel on his internal state as the output.
  \end{itemize}
\end{definition}
\begin{lemma}
  \label{lem:binding-partial-converse}
  If a commitment scheme $(C, R)$ satisfies \Cref{def:stat:binding}, then there is a commitment scheme $(C, \tilde R)$ with an inefficient receiver that satisfies \Cref{def:bckm-binding}.
  Furthermore, these two commitment schemes are statistically indistinguishable against malicious committers; and $\tilde R$ is the same as $R$, except that at the end of the commit phase, the extractor $\mathcal E$ of $(C, R)$ is applied on the receiver's state, and its output is saved in a separate register.
\end{lemma}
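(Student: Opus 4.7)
The plan is to take the construction of $\tilde R$ dictated by the statement and verify the two promised properties as direct corollaries of the trace-distance bound in Definition~\ref{def:stat:binding}. Concretely, given a malicious committer $C^*$, I would let $\mathcal E$ be the extractor promised by the statistical binding of $(C,R)$ for $C^*$ and define $\tilde R$ to run $R$ throughout the commit phase, then apply the channel $\mathcal E$ to its register while storing the classical outcome $b' \in \{0,1,\bot\}$ in a fresh register that is never touched again. The reveal phase of $\tilde R$ will be identical to that of $R$, ignoring the $b'$-register when deciding $\mu$. Correctness and well-formedness of $\tilde R$ are immediate; the work is to verify (i) statistical indistinguishability and (ii) Definition~\ref{def:bckm-binding}.

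For (i), I will write $\rho_{\mathsf{real}}$, $\rho_{\mathsf{ideal}}$, and $\rho_{\tilde R}$ for the output states of $\realexpt_\lambda^{C^*}$, $\idealexpt_\lambda^{C^*,\mathcal E}$, and of the analogous experiment against $\tilde R$. Inspecting the definitions, $\rho_{\mathsf{ideal}}$ coincides with $\rho_{\tilde R}$ except that on the event $\{\mu \notin \{\bot, b'\}\}$ the label $(\tau_{C^*},\mu)$ is replaced by the special symbol $\mathfrak{E}$. Since $\mathfrak{E}$ never occurs in $\realexpt$, the hypothesis $\TD(\rho_{\mathsf{real}},\rho_{\mathsf{ideal}}) \le \nu(\lambda)$ forces $p_{\mathfrak{E}} := \Pr[\mathfrak{E}] \le \nu(\lambda)$. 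The only disagreement between $\rho_{\mathsf{ideal}}$ and $\rho_{\tilde R}$ is a relabeling of mass $p_{\mathfrak{E}}$, so $\TD(\rho_{\mathsf{ideal}},\rho_{\tilde R}) \le p_{\mathfrak{E}} \le \nu(\lambda)$, and the triangle inequality will give $\TD(\rho_{\mathsf{real}},\rho_{\tilde R}) \le 2\nu(\lambda)$. The extra structure in the indistinguishability experiment (nonuniform advice, early aborts, a final channel on the committer's register) only post-processes this joint state, so the bound carries over.

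For (ii), fix any commit-phase measurement transcript of $\tilde R$, which by construction includes $b'$, and let $q_b$ denote the probability that the reveal phase outputs $\mu = b$ conditioned on that transcript. I would designate the candidate bad bit $\beta(b') = 1 - b'$ if $b' \in \{0,1\}$ and $\beta(\bot) = 0$. Whenever $\mu = \beta(b')$ one has $\mu \ne \bot$ and $\mu \ne b'$, so the ideal experiment relabels such outcomes to $\mathfrak{E}$; hence
\[
  \E[q_{\beta(b')}] \;=\; \Pr[\mu = \beta(b')] \;\le\; \Pr[\mathfrak{E}] \;\le\; \nu(\lambda).
\]
Markov's inequality will then imply that, with probability at least $1 - \sqrt{\nu(\lambda)}$ over the commit-phase randomness, $q_{\beta(b')} \le \sqrt{\nu(\lambda)}$, which is exactly Definition~\ref{def:bckm-binding} with the bad bit $b := \beta(b')$ and negligible function $\sqrt{\nu(\lambda)}$.

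The main subtlety (and the only real obstacle) is that $\mathcal E$ is allowed to depend on $C^*$ in Definition~\ref{def:stat:binding}, so strictly speaking $\tilde R$ becomes a distinct inefficient algorithm for each unbounded committer. This is still enough for how the lemma is used downstream: in a security reduction against the specific $C^*$ at hand, one replaces $(C,R)$ by $(C,\tilde R^{C^*})$, pays the $O(\nu)$ indistinguishability loss, and then invokes Definition~\ref{def:bckm-binding} for $\tilde R^{C^*}$. For the concrete scheme of Section~\ref{sec:comm:cons} the extractor is the fixed POVM $\{\sqrt{\Lambda_0}, \sqrt{\Lambda_1}, \sqrt{\Lambda_\bot}\}$, so one even obtains a genuinely universal $\tilde R$, though this observation is not needed for the lemma statement itself.
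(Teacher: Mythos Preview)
Your proposal is correct and follows essentially the same line as the paper's proof: both observe that running against $\tilde R$ is exactly the ideal experiment without the relabeling to $\mathfrak{E}$, so indistinguishability follows from the negligible weight on $\mathfrak{E}$, and both designate the ``bad'' bit to be $1-b'$ (with the $\bot \mapsto 0$ convention) and derive \Cref{def:bckm-binding} from the same extraction-error bound. Your write-up is in fact more careful than the paper's: you make explicit the Markov step needed to pass from the expectation bound $\E[q_{\beta(b')}] \le \nu(\lambda)$ to the ``with high probability over commit-phase randomness'' formulation of \Cref{def:bckm-binding}, and you flag the subtlety that $\mathcal E$ (hence $\tilde R$) may a priori depend on $C^*$, which the paper leaves implicit.
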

\begin{proof}
  Note that $(C, \tilde R)$ is the same receiver as the ideal experiment of \Cref{def:stat:binding}, except that at the end we always run the honest receiver as usual instead of checking whether the extraction is correct, and therefore this change is statistically indistinguishable to the committer by \Cref{def:stat:binding}.
  
  To show that it satisfies \Cref{def:bckm-binding}, we notice that assume the extractor's measurement outcome is $b$ (if it is $\bot$ then set $b$ to 0), the probability that the committer can open to $1 - b$ is negligible, as otherwise the ideal world will have a non-negligible weight on extraction error $\ketbra{\mathfrak E}{\mathfrak E}$, which contradicts \Cref{def:stat:binding}.
\end{proof}

It is not hard to see that by leveraging \Cref{lem:binding-generalized,lem:binding-partial-converse}, we can recover \Cref{thm:mpc:statcomm} even with our definition of statistical binding (\Cref{def:stat:binding}).
The proof of this is not very enlightening and we defer the details to \Cref{sec:ot-details}.
By instantiating the statistically binding bit commitments in~\Cref{thm:mpc:statcomm} with PRS (\Cref{thm:comm:prfs} and~\Cref{thm:prfs-from-prs}), we obtain the following corollary.

\begin{corollary}
  Assuming the existence of $(2\log\secparam + \omega(\log\log\secparam))$-PRS, there exists maliciously secure computation protocol for P/poly in the dishonest majority setting. 
\end{corollary}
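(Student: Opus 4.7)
The proof is a direct chaining of the three main results established earlier in the paper, so the plan is mostly parameter-pushing. First, I would instantiate the parameters of \Cref{thm:prfs-from-prs} as follows. Starting from a $(2\log\secparam + \omega(\log\log\secparam))$-PRS generator, set $d(\secparam) = \log\secparam$ and $n(\secparam) = \log\secparam + \omega(\log\log\secparam)$, so that $d + n = 2\log\secparam + \omega(\log\log\secparam)$ matches the PRS output length and the constraints $d(\secparam) = O(\log\secparam)$ and $n(\secparam) = d(\secparam) + \omega(\log\log\secparam)$ are satisfied. Invoking \Cref{thm:prfs-from-prs} then yields a $(d(\secparam), n(\secparam))$-PRFS generator with the recognizable abort property.

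Next, I would verify the parameter condition required by \Cref{thm:comm:prfs}, namely $2^{d} \cdot n \ge \comlen$. With the choices above, $2^{d(\secparam)} \cdot n(\secparam) = \secparam \cdot (\log\secparam + \omega(\log\log\secparam))$, which for all sufficiently large $\secparam$ is much larger than $7\secparam$. Applying \Cref{thm:comm:prfs} gives a quantum bit commitment scheme satisfying statistical completeness, computational hiding, and statistical binding in the sense of \Cref{def:stat:binding}.

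The one nontrivial step is bridging between our binding notion (\Cref{def:stat:binding}) and the stronger definition (\Cref{def:bckm-binding}) actually required by \Cref{thm:mpc:statcomm}. I would invoke \Cref{lem:binding-partial-converse} to replace the honest receiver $R$ with a (possibly inefficient) augmented receiver $\tilde R$ that runs the extractor $\mathcal E$ at the end of the commit phase and stores its output; the resulting commitment scheme $(C,\tilde R)$ then satisfies \Cref{def:bckm-binding}, while being statistically indistinguishable from $(C,R)$ against malicious committers. Since the MPC compiler of \cite{BartusekCKM21a} is black-box in the commitment scheme and only uses binding in a reduction against malicious committers, one can run the compiler with the ``ideal'' scheme $(C,\tilde R)$, obtain a secure protocol, and then argue that replacing $\tilde R$ by the efficient $R$ preserves security up to negligible error. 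This is exactly the argument outlined in the excerpt and deferred to \Cref{sec:ot-details}.

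The main obstacle, to the extent there is one, is that last step: one has to make sure that every invocation of binding in the \cite{BartusekCKM21a} compiler is an experiment of the type covered by \Cref{lem:binding-partial-converse} (i.e., a distinguishing experiment in which a malicious committer interacts with the receiver and then applies a channel to its internal state), so that the statistical indistinguishability of $R$ and $\tilde R$ can be invoked to carry security from the inefficient-receiver world back to the efficient-receiver world. Apart from this bookkeeping, the corollary follows by composing \Cref{thm:prfs-from-prs}, \Cref{thm:comm:prfs}, and \Cref{thm:mpc:statcomm}.
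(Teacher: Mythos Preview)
Your proposal is correct and follows essentially the same approach as the paper: the corollary is obtained by chaining \Cref{thm:prfs-from-prs}, \Cref{thm:comm:prfs}, and \Cref{thm:mpc:statcomm}, with the binding-definition gap handled via \Cref{lem:binding-partial-converse} exactly as the paper does in \Cref{sec:ot-details}. Your parameter instantiation $d=\log\secparam$, $n=\log\secparam+\omega(\log\log\secparam)$ is fine and matches the constraints; the paper simply omits this explicit check.
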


\section{Other Applications of PRFS}
In this section, we present some more applications of PRFS.
Note that the parameters required for these applications are beyond what we can construct from PRS in \Cref{sec:prfs-from-prs}.
\subsection{CPA-Secure Quantum Encryption Schemes}
\label{sec:cpa}

In \Cref{sec:qotp} we presented the \emph{Quantum Pseudo One-Time Pad}, which is an encryption scheme that has one-time security (meaning that the key can only be used to encrypt one message before losing security). In this section we present an encryption scheme that has \emph{many-time} security. 

More formally, we construct a symmetric-key quantum encryption scheme that is secure against \emph{chosen plaintext attacks}, or in other words satisfies \emph{CPA security}. For simplicity we focus on schemes that encrypt a single bit; in the many-time security setting, this is equivalent to being able to encrypt many bits. We also restrict our attention to a version of CPA security where the adversary makes \emph{classical}, nonadaptive queries to the encryption oracle; we believe that it should be straightforward to adapt the proof to handle adaptive and even quantum queries. \hnote{do we actually think it's straightforward?}

\begin{definition}[CPA-Secure Quantum Encryption Scheme]
\label{def:cpa}
We say that a pair of QPT algorithms $(\Enc,\Dec)$ is a \emph{CPA-secure symmetric-key quantum encryption scheme for bits} if the following properties are satisfied:
\begin{itemize}
    \item \textbf{Correctness}: There exists a negligible function $\eps(\cdot)$ such that for every $\lambda$ and every $b \in \{0,1\}$,
    $$\Pr_{\substack{k \leftarrow \{0,1\}^\lambda,\\ \sigma \leftarrow \Enc_\lambda(k,b)}} \left[ \Dec_\lambda(k,\sigma) = b \right] \geq 1 - \eps(\lambda).$$
    \item
    \textbf{Security}: For every polynomial $t(\cdot)$, for every nonuniform QPT adversary $A$, there exists a negligible function $\eps(\cdot)$ such that for all $\secparam$ the adversary has at most $\eps(\secparam)$ advantage in the following security game:
    \begin{enumerate}
        \item Challenger samples key $k \leftarrow \{0,1\}^\secparam$ and $z \leftarrow \{0,1\}$.
        \item The adversary sends $(b_1^{(0)},b_1^{(1)}),\ldots,(b_t^{(0)},b_t^{(1)})) \in \{0,1\}^{2t}$ and receives $$\left( \Enc_\secparam \left(k,b_1^{(z)} \right),\ldots,\Enc_\secparam \left(k,b_t^{(z)} \right) \right).$$
        \item The adversary outputs $z' \in \{0,1\}$ and wins if $z' = z$.
    \end{enumerate}
    Here we have abbreviated $t = t(\secparam)$.
\end{itemize}
\end{definition}

We now present our CPA-secure quantum encryption scheme $(\Enc,\Dec)$ for bits. Let $G$ be a $(d(\lambda),n(\lambda))$-PRFS generator where $d(\secparam),n(\lambda) = \omega(\log \lambda)$. Let $\Test$ denote the test algorithm from \Cref{lem:test-honest}. Fix $\lambda$ and let $d = d(\lambda)$ and $n = n(\lambda)$.

\begin{enumerate}
    \item $\Enc_\lambda(k,b)$: on input $k \in \{0,1\}^\lambda$ and a message $b \in \{0,1\}$, do the following: 
    \begin{itemize}
        \item Sample $r \leftarrow \{0,1\}^{d-1}$.
        \item Set $\sigma = G(k,(r,b))$.
        \item Output the ciphertext state $(r,\sigma)$.
    \end{itemize}

    \item $\Dec_\lambda(k,r,\sigma)$: on input $k \in \{0,1\}^\secparam$, $r \in \{0,1\}^{d-1}$, and $n$-qubit state $\sigma$, perform the following operations: 
    \begin{itemize}
        \item Execute $\Test(k,(r,0),\sigma)$. If it accepts, set $b = 0$, otherwise, set $b = 1$.
        \item Output $b$.
    \end{itemize}
\end{enumerate}

\begin{lemma}
$(\Enc,\Dec)$ satisfies the correctness property of a CPA-secure quantum encryption scheme according to \Cref{def:cpa}.
\end{lemma}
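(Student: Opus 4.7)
The plan is to reduce correctness directly to the self-testing property of PRFS generators established in Lemma~\ref{lem:test-honest}. Fix $\lambda$, a bit $b \in \{0,1\}$, and let $n = n(\lambda)$, $d = d(\lambda)$. Observe that the decryption procedure recovers the correct bit $b$ precisely when the call $\Test(k,(r,0),G(k,(r,b)))$ outputs $1$ if $b=0$ and outputs $0$ if $b=1$. So it suffices to bound these two error probabilities, averaged over the random choice of $k \leftarrow \{0,1\}^\lambda$ and $r \leftarrow \{0,1\}^{d-1}$.

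For the case $b=0$, I would apply Lemma~\ref{lem:test-honest} with $x = (r,0)$: for every fixed $r$, the probability over $k$ that $\Test(k,(r,0),G(k,(r,0))) = 1$ is at least $1 - \nu(\lambda)$ for some negligible $\nu$. Averaging over $r$ preserves this bound, so decryption succeeds with all but negligible probability in this case.

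For the case $b=1$, I would apply the second half of Lemma~\ref{lem:test-honest} with $x=(r,0)$ and $y = (r,1)$; since $x \neq y$, the probability that $\Test(k,(r,0),G(k,(r,1))) = 1$ is at most $2^{-n(\lambda)} + \nu(\lambda)$ on average over $k$. Because the output length satisfies $n(\lambda) = \omega(\log \lambda)$, the quantity $2^{-n(\lambda)}$ is negligible in $\lambda$, so decryption again succeeds except with negligible probability. Averaging over $r$ and combining the two cases by a union bound (with the random choice of $b$) yields the claimed correctness bound. No step is a genuine obstacle here: the only subtlety is noting that the ``key'' argument of $\Test$ and $G$ is the same $k$ in both the honest and the wrong-bit case, so Lemma~\ref{lem:test-honest} applies verbatim, and that the choice of $r$ is independent of $k$ so it may be fixed before averaging.
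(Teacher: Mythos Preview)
Your proposal is correct and follows exactly the approach the paper takes: the paper simply remarks that the argument is ``virtually identical'' to the QOTP correctness proof (\Cref{lem:qopt-correctness}), which in turn invokes \Cref{lem:test-honest} for the two cases and uses $n(\lambda)=\omega(\log\lambda)$ to make $2^{-n(\lambda)}$ negligible. One minor wording nit: in the correctness definition $b$ is fixed (``for every $b$''), not randomly chosen, so there is no union bound over $b$---but your separate treatment of the two cases already handles this.
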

\begin{proof}
The proof of correctness is virtually identical to that of \Cref{lem:qopt-correctness}.
\end{proof}

\begin{lemma}
$(\Enc,\Dec)$ satisfies the security property of a CPA-secure quantum encryption scheme according to \Cref{def:cpa}.
\end{lemma}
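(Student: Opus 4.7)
The plan is to argue CPA security via a two-step hybrid. Let $\hybrid_0$ denote the real CPA game described in Definition~\ref{def:cpa}, and let $\hybrid_1$ be the modified game in which the challenger, instead of computing $\sigma_i \leftarrow G_\secparam(k, (r_i, b_i^{(z)}))$, samples $\ket{\vartheta_i} \leftarrow \Haar_{n(\secparam)}$ independently for each $i\in[t]$ and hands $(r_i, \ket{\vartheta_i})$ to the adversary. In $\hybrid_1$ the ciphertexts are independent of $z$, so any adversary wins with probability exactly $1/2$; the whole argument therefore reduces to bounding $|\Pr[\hybrid_0 = 1] - \Pr[\hybrid_1 = 1]|$.

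To bound this, I first handle the distinctness of nonces. Let $E$ be the event that $r_1,\ldots,r_t$ are pairwise distinct; by a birthday bound $\Pr[\lnot E] \leq \binom{t}{2} \, 2^{-(d-1)}$, which is negligible since $t$ is polynomial and $d(\secparam) = \omega(\log\secparam)$. Conditioned on $E$ the PRFS queries $(r_i, b_i^{(z)})$ are pairwise distinct regardless of the bits, since they already differ in their first $d-1$ coordinates. This is what lets us invoke the pseudorandomness of $G$ in a clean way.

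The core reduction is the natural one. A distinguisher $B$ against the PRFS security uses $A$ as a non-uniform subroutine and proceeds as follows: (i) run $A$'s nonadaptive query phase to obtain $((b_1^{(0)},b_1^{(1)}),\ldots,(b_t^{(0)},b_t^{(1)}))$ along with $A$'s residual quantum state; (ii) sample $z \leftarrow \bit$ and $r_1,\ldots,r_t \leftarrow \bit^{d-1}$, aborting (and outputting a fixed bit) if $\lnot E$; (iii) submit the distinct indices $x_i = (r_i, b_i^{(z)})$ to the PRFS challenger; (iv) pass the received states $\sigma_1,\ldots,\sigma_t$ together with $(r_1,\ldots,r_t)$ back to $A$'s second phase and output $1$ iff $A$'s guess equals $z$. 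If the challenger's states are PRFS outputs, $B$'s output distribution conditioned on $E$ equals $\hybrid_0$'s; if they are Haar, it equals $\hybrid_1$'s.

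The one subtlety, which I expect to be the main obstacle, is that Definition~\ref{def:prfs} gives \emph{selective} security: the index family must be fixed per $\secparam$, whereas in the reduction above the indices depend on random coins (the $r_i$'s, the bit $z$, and the outcomes of measurements $A$ makes on its quantum advice). I would resolve this by lifting selective security to arbitrary efficiently sampled index distributions via a standard nonuniform averaging argument: if the PRFS/Haar advantage averaged over a sampler were non-negligible, then for each $\secparam$ there would exist, by the probabilistic method, a specific tuple of distinct indices achieving at least the same advantage; hardcoding this tuple into the non-uniform advice of a new distinguisher would then contradict selective security on that fixed family. Applying this lifted statement to the sampler implicitly defined by steps (i)--(iii) above yields $|\Pr[\hybrid_0 = 1] - \Pr[\hybrid_1=1]| \leq \Pr[\lnot E] + \mathrm{negl}(\secparam)$, which together with the information-theoretic analysis of $\hybrid_1$ gives the claimed CPA security.
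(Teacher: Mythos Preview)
Your proposal is correct and follows essentially the same route as the paper: a birthday bound on the nonces $r_1,\ldots,r_t$ followed by an appeal to the pseudorandomness of $G$ to swap the PRFS outputs for independent Haar states, after which the view is independent of $z$. The only cosmetic difference is that the paper inserts an explicit intermediate hybrid in which the $r_i$'s are sampled conditioned on distinctness (rather than aborting inside the reduction), and in fact your treatment of the selective-security subtlety via nonuniform averaging is more careful than the paper's, which leaves that step implicit.
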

\begin{proof}
Consider the following hybrids.

\paragraph{Hybrid 0.} This is the real security game, where the adversary receives
$$\left( \Enc_\secparam \left(k,b_1^{(z)} \right),\ldots,\Enc_\secparam \left(k,b_t^{(z)} \right) \right),$$ which by definition is 
\[
    \left ( (r_1,G(k,(r_1,b_1^{(z)}))), \ldots, (r_t,G(k,(r_t,b_t^{(z)}))) \right)
\]
where $r_1,\ldots,r_t \in \{0,1\}^{d-1}$ are chosen uniformly at random. 

\paragraph{Hybrid 1.} Instead of sampling $r_1,\ldots,r_t$ independently, they are sampled uniformly at random \emph{conditioned} on them being all distinct. Letting $\mathsf{D}$ denote the event that the $r_1,\ldots,r_t$ are distinct, the total variation distance between the two distributions on $(r_1,\ldots,r_t)$ can be bounded by
\begin{align*}
    &\sum_{(r_1,\ldots,r_t) \in (\{0,1\}^{d-1})^t} \Big | \Pr[r_1,\ldots,r_t] - \Pr[r_1,\ldots,r_t \mid \mathsf{D}] \Big| \\
    & \qquad = \Pr[\neg \mathsf{D}] + \sum_{(r_1,\ldots,r_t) \text{ distinct}} \Big | \Pr[r_1,\ldots,r_t] - \Pr[\mathsf{D}]^{-1} \cdot \Pr[r_1,\ldots,r_t] \Big| \\
    & \qquad = \Pr[\neg \mathsf{D}] + \Pr[ \mathsf{D}]^{-1} \cdot \Pr[\neg \mathsf{D}] \cdot \sum_{(r_1,\ldots,r_t) \text{ distinct}} \Pr[r_1,\ldots,r_t] \\
    & \qquad \leq \Pr[\neg \mathsf{D}] \cdot \Big( 1 + \Pr[\mathsf{D}]^{-1} \Big)~.
\end{align*}
We compute 
\[
    \Pr[\neg \mathsf{D}] \leq \binom{t}{2} 2^{-(d-1)} \leq \frac{t^2}{2^{d-1}}
\]
which is negligible since $d = \omega(\log \secparam)$ and $t = \mathrm{poly}(\secparam)$. Thus the total variation distance between the distribution of inputs received by the adversary in Hybrid 0 versus that of Hybrid 1 is negligible.

This implies the adversary's advantage in Hybrid 1 differs from its advantage in Hybrid 0 by at most a negligible amount.

\paragraph{Hybrid 2.} The adversary instead receives
\[
    \Big ( (r_1,\ket{\vartheta_1}), \ldots, (r_t,\ket{\vartheta_t}) \Big)
\]
where $\ket{\vartheta_1},\ldots,\ket{\vartheta_t}$ are independently sampled Haar-random states. The advantage of the adversary in Hybrid 1 is negligibly different from its advantage in Hybrid 0 because of the pseudorandomness property of the PRFS $G$: since all of the $r_i$'s are distinct, we can replace each of $G(k,(r_i,b_i^{(z)}))$ with an independently sampled Haar-random state $\ket{\vartheta_i}$ (and the advantage only changes negligibly with the replacement).

However, this input is independent of the choice of bits $(b_i^{(0)},b_i^{(1)})_i$ and the hidden bit $z$. Thus the advantage of the adversary in guessing $z$ is $0$. This implies that the adversary's advantage in the real security game (i.e. Hybrid 0) is negligible. This concludes the security proof.
\end{proof}

As mentioned in the introduction, for this construction we require PRFS generators with $\omega(\log \secparam)$ input length, which we do not know how to construct from PRS generators alone. We leave this as an open question.
\subsection{Message Authentication Codes}
\label{sec:macs}

In this section we present a construction of reusable message authentication codes (MACs) from PRFS. Just like in \Cref{sec:cpa}, for the sake of simplicity we focus on a version of MAC security where the adversary makes classical, nonadaptive queries to the MAC signing oracle. 

\newcommand{\Sign}{\mathsf{Sign}}
\newcommand{\Ver}{\mathsf{Ver}}

\begin{definition}
\label{def:macs}
We say that a pair of QPT algorithms $(\Sign,\Ver)$ is a \emph{reusable quantum message authentication code (MAC) scheme} for messages of length $\ell(\secparam)$ if the following properties are satisfied:
\begin{itemize}
    \item \textbf{Correctness}: There exists a negligible function $\eps(\cdot)$ such that for every $\secparam$ and every $m \in \{0,1\}^\ell$, 
    $$\Pr_{\substack{k \leftarrow \{0,1\}^\lambda,\\ \sigma \leftarrow \Sign_\lambda(k,m)}} \left[ \Ver_\lambda(k,m,\sigma) = 1 \right] \geq 1 - \eps(\lambda).$$    
    \item \textbf{Security}: For every polynomial $t(\cdot)$ and every nonuniform QPT adversary $A$, there exists a negligible function $\eps(\cdot)$ such that for every $\secparam$ the adversary wins with probability at most $\eps(\secparam)$ in the following security game:
    \begin{enumerate}
        \item Challenger samples key $k \leftarrow \{0,1\}^\secparam$.
        \item The adversary sends $m_1,\ldots,m_t \in \{0,1\}^\ell$ and receives $\Sign(k,m_i)$ for $i = 1,\ldots,t$. 
        \item The adversary outputs a pair $(m^*,\sigma)$ where $m^* \notin \{m_1,\ldots,m_t\}$ and $\sigma$ is a quantum state.
        \item The adversary wins if $\Ver(k,m^*,\sigma) = 1$.
    \end{enumerate}
    Here we have abbreviated $t = t(\secparam)$ and $\ell = \ell(\secparam)$.
\end{itemize}
\end{definition}

We now present our MAC scheme $(\Sign,\Ver)$ for messages of length $\ell(\cdot)$. Let $G$ be a $(d(\lambda),n(\lambda))$-PRFS generator where $d(\secparam) \geq \ell(\secparam)$ and $n(\lambda) = \omega(\log \lambda)$. For simplicity, assume that $G$ has perfect state generation. Let $\Test$ denote the test algorithm from \Cref{lem:test-honest}. Fix $\lambda$ and let $d = d(\lambda)$, $\ell = \ell(\secparam)$, and $n = n(\lambda)$.

\begin{enumerate}
    \item $\Sign_\lambda(k,m)$: on input $k \in \{0,1\}^\lambda$ and a message $m \in \{0,1\}^\ell$, do the following: 
    \begin{itemize}
        \item Output $\sigma = G(k,m)$ (where we pad $m$ with trailing zeroes if $\ell < d$).
    \end{itemize}

    \item $\Ver_\lambda(k,m,\sigma)$: on input $k \in \{0,1\}^\secparam$, $m \in \{0,1\}^{\ell}$, and $n$-qubit state $\sigma$, perform the following operations: 
    \begin{itemize}
        \item Execute $\Test(k,m,\sigma)$ (where we pad $m$ with zeroes if $\ell < d$), and output $1$ if it accepts; otherwise, output $0$.
    \end{itemize}
\end{enumerate}

\begin{lemma}
$(\Sign,\Ver)$ satisfies the correctness property of a reusable quantum message authentication code (MAC) scheme according to \Cref{def:macs}.
\end{lemma}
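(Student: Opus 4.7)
The plan is to deduce correctness almost immediately from \Cref{lem:test-honest} (Self-testing PRFS). By construction, for any message $m \in \{0,1\}^\ell$, signing produces $\sigma = G_\secparam(k, m')$ where $m' \in \{0,1\}^d$ is $m$ padded with trailing zeroes, and verification runs $\Test(k, m', \sigma)$. So the correctness probability in the definition is exactly
\[
  \Pr_{k \leftarrow \{0,1\}^\secparam}\left[\Test(k, m', G_\secparam(k, m')) = 1\right].
\]

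The first half of \Cref{lem:test-honest} gives a negligible function $\nu(\cdot)$ such that this quantity is at least $1 - \nu(\secparam)$ for every $m' \in \{0,1\}^d$, which is precisely what is needed. I would state this in one or two sentences: fix $\secparam$ and $m \in \{0,1\}^\ell$; define $m'$ as the padded input; invoke \Cref{lem:test-honest} with input $x = m'$; conclude.

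There is essentially no obstacle here — the main subtlety is just noting that the hypothesis of \Cref{lem:test-honest} (which requires $\Test$ to be the tester guaranteed by \Cref{prop:test-channel} for the generator $G$) is satisfied by our choice of $\Test$, and that padding $m$ to length $d$ is a valid input to the PRFS generator. The assumption $n(\secparam) = \omega(\log \secparam)$ is not even needed for the first bound of \Cref{lem:test-honest} (it is only used for the second bound), so correctness holds under the stated parameter regime. No computation beyond invoking the lemma is required.
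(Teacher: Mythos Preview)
Your proposal is correct and matches the paper's approach: the paper's proof simply says it is ``virtually identical to that of \Cref{lem:qopt-correctness},'' which in turn invokes \Cref{lem:test-honest} exactly as you do (and here it is even simpler, since there is only a single invocation of $\Test$ and no union bound is needed). Your observation that the $n(\secparam) = \omega(\log\secparam)$ assumption is not needed for correctness is also accurate.
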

\begin{proof}
The proof of correctness is virtually identical to that of \Cref{lem:qopt-correctness}.
\end{proof}

\begin{lemma}
$(\Sign,\Ver)$ satisfies the security property of a reusable quantum message authentication code (MAC) scheme according to \Cref{def:macs}.
\end{lemma}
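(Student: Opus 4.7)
My plan is a two-hybrid argument in the spirit of the CPA proof in \Cref{sec:cpa}. Write $\ket{\psi_{k,x}}$ for the pure state $G(k,x)$ (perfect state generation). \textbf{Hybrid 0} is the real MAC game: $A$ receives $\ketbra{\psi_{k,m_i}}{\psi_{k,m_i}}$ for $i\in[t]$ and outputs $(m^*,\sigma)$; by \Cref{prop:test-channel}, the verification $\Test(k,m^*,\sigma)$ accepts with probability exactly $\bra{\psi_{k,m^*}}\sigma\ket{\psi_{k,m^*}}$. In \textbf{Hybrid 1}, I replace each signature by an independent Haar-random state $\ket{\vartheta_i}\leftarrow\Haar_n$, but the challenger still samples $k$ and runs the verification at the end.

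In Hybrid 1 the adversary's view, and hence the random variable $(m^*,\sigma)$, is information-theoretically independent of $k$. Thus $\Pr[\text{accept}]=\E_{(m^*,\sigma)}\Tr\!\bigl(\E_k[\ketbra{\psi_{k,m^*}}{\psi_{k,m^*}}]\,\sigma\bigr)$. Single-copy PRFS pseudorandomness at the fixed index $m^*$ lets me replace $\E_k[\ketbra{\psi_{k,m^*}}{\psi_{k,m^*}}]$ by the Haar average $I/2^n$ up to a negligible error; this can be made rigorous by instantiating the measurement ``$\sigma$'' as a SWAP test against a freshly sampled copy of $\sigma$, which is efficient since $A$ is QPT. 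Hence Hybrid 1's acceptance probability is at most $2^{-n}$ plus a negligible function of $\lambda$, which is itself negligible because $n=\omega(\log\lambda)$.

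The substantive step is the indistinguishability between Hybrids 0 and 1. The natural PRFS distinguisher $D$ would answer $A$'s signing queries with oracle responses at $m_1,\dots,m_t$ and then issue an additional query at $m^*$ in order to emulate $\Test(k,m^*,\sigma)$ by SWAP tests on a few extra copies. The obstacle is that $m^*$ is chosen \emph{adaptively} by $A$, whereas \Cref{def:prfs} is selective. I see two routes: (a) when $\ell(\lambda)=O(\log\lambda)$, guess $m^*$ by enumeration, building for each $\tilde m\in\{0,1\}^\ell\setminus\{m_i\}$ a selective distinguisher $D_{\tilde m}$ with fixed index family $(m_1,\dots,m_t,\tilde m)$ that runs the SWAP-based verification only when $A$'s output happens to be $\tilde m$; summing the individually negligible PRFS advantages over the polynomially many candidates caps $\Pr[A\text{ wins}]$; (b) for polynomial $\ell$, invoke the adaptive-query strengthening of PRFS security that the authors flag as forthcoming work. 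Under either route, since the SWAP test is a linear transformation of fidelity, a non-negligible $\Pr[A\text{ wins}]$ transfers to a PRFS distinguishing advantage of roughly $\Pr[A\text{ wins}]/2-2^{-(n+1)}$, contradicting pseudorandomness and yielding the desired bound.
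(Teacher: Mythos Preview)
Your two-hybrid decomposition differs structurally from the paper's proof, which is a single direct reduction. The paper fixes $m_1,\ldots,m_t$ \emph{and} $m^*$ up front (asserting, from the non-negligible winning assumption and nonuniformity, that such a choice exists), and then builds one PRFS distinguisher $B$ at the $t+1$ indices $m_1,\ldots,m_t,m^*$: $B$ feeds the first $t$ states to $A$, obtains $\sigma^*$, and performs a single SWAP test between $\sigma^*$ and the $(t{+}1)$st challenge state. In the PRFS case $B$ accepts with probability $\tfrac12+\tfrac12\bra{\psi_{k,m^*}}\sigma^*\ket{\psi_{k,m^*}}\ge\tfrac12+\tfrac12 p(\lambda)$; in the Haar case it accepts with probability $\tfrac12+\tfrac12\cdot 2^{-n}$. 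So the verification step is folded into the same PRFS call rather than handled in a separate hybrid; your Hybrid~1 analysis (invoking single-copy PRFS at $m^*$ to replace $\E_k\ketbra{\psi_{k,m^*}}{\psi_{k,m^*}}$ by $I/2^n$) is an extra detour that the paper avoids by simply requesting one more challenge state at $m^*$.

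You are, however, more careful than the paper on the adaptivity of $m^*$. The paper's proof writes $m^*$ as fixed alongside $m_1,\ldots,m_t$ and passes it to $A$ as an input, without justifying how an adversary who chooses $m^*$ after seeing the quantum signatures can be converted to one with a hardwired target. Your route~(a) (enumeration/guessing when $\ell=O(\log\lambda)$) is exactly the averaging argument one needs to make that step rigorous, and your route~(b) is the natural escape for polynomial $\ell$. In other words, the obstacle you flag is real and is simply glossed over in the paper; modulo that point, the paper's single-reduction argument is the cleaner version of what you are doing, and you could collapse your two hybrids into one by also querying the PRFS at $m^*$ and SWAP-testing there.
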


\begin{proof}
 Suppose there was an adversary $A$ and polynomial $t(\cdot)$ that won the corresponding MAC security game with non-negligible probability $p(\cdot)$. Then for infinitely many $\secparam$ there exists a sequence of messages $m_1,\ldots,m_t \in \{0,1\}^\ell$ and $m^* \notin \{m_1,\ldots,m_t\}$ where $t = t(\secparam)$ and $\ell = \ell(\secparam)$ such that 
\[
    \Pr_{\substack{k \leftarrow \{0,1\}^\secparam \\ \sigma_i = \Enc_\secparam(k,m_i) \\ \sigma^* \leftarrow A(m^*,(m_i,\sigma_i)_i)}} \left [ \Ver_\secparam(k,m^*,\sigma^*) = 1 \right] \geq p(\secparam)~.
\]
By definition of the verification procedure, we have that for all $k,m^*,\sigma^*$, 
\[
    \Pr \left [ \Ver_\secparam(k,m^*,\sigma^*) = 1 \right] = \bra{\psi_{k,m^*}} \sigma^* \ket{\psi_{k,m^*}}
\]
where, since we assume that the PRFS generator $G$ has perfect state generation, the output of $G(k,m^*)$ is some pure state $\ket{\psi_{k,m^*}}$. 

Consider the following QPT adversary $B$: it gets inputs 
\[
(m_1,\ldots,m_t,m^*,\ket{\psi_{k,m_1}},\ldots,\ket{\psi_{k,m_t}},\ket{\psi_{k,m^*}}),
\]
first runs the adversary $A$ on input $(m^*,(m_i,\ket{\psi_{k,m_i}}))$ to obtain a state $\sigma^*$, and then performs the SWAP test between its copy of $\ket{\psi_{k,m^*}}$ and $\sigma^*$. The acceptance probability is $\frac{1}{2} + \frac{1}{2} \bra{\psi_{k,m^*}} \sigma^* \ket{\psi_{k,m^*}} \geq \frac{1}{2} + \frac{1}{2} p(\secparam)$.

On the other hand, the pseudorandomness property of the PRFS generator implies that the acceptance probability of $B$ is negligibly close to the case when it receives inputs
\[
(m_1,\ldots,m_t,m^*,\ket{\vartheta_1},\ldots,\ket{\vartheta_t},\ket{\vartheta_{t+1}}),
\]
where $\ket{\vartheta_1},\ldots,\ket{\vartheta_{t+1}}$ are independent $n$-qubit Haar-random states (here we assume without loss of generality that $m_1,\ldots,m_t,m^*$ are all distinct). However the acceptance probability in this case is
\[
\frac{1}{2} + \E_{\ket{\vartheta_{t+1}} \leftarrow \Haar(n)} \frac{1}{2} \Tr(\sigma^* \, \ketbra{\vartheta_{t+1}}{\vartheta_{t+1}}) = \frac{1}{2} + \frac{1}{2} \cdot 2^{-n}
\]
which is not negligibly close to $\frac{1}{2} + \frac{1}{2} p(\secparam)$, a contradiction. 
\end{proof}

This MAC scheme requires PRFS generators with input length that is greater than the length of the message being authenticated. Thus we do not know how to construct MAC schemes for long messages using only PRS generators. We leave this is an open question.
\subsection{Quantum Garbling Schemes for {\sf P/poly} from Quantum Pseudo OTP}

\noindent We construct quantum garbling schemes for {\sf P/poly} from quantum pseudo OTP. We first define the notion of quantum garbling schemes. 

\begin{definition}[Quantum garbling]
A quantum garbling scheme for a class of circuits ${\cal C}$ consists of the following QPT algorithms: 
\begin{itemize}
    \item ${\sf Garble}(1^{\secparam},C)$: it takes as input a security parameter $\secparam$, circuit $C \in {\cal C}$ and outputs a garbled circuit ${\sf GC}$, modeled as a quantum state, along with a secret key $sk$. 
    \item ${\sf InpEncode}(sk,x)$: it takes as input the secret key $sk$, classical input $x$ and output an input encoding $\sigma_x$. 
    \item ${\sf Decode}({\sf GC},\sigma_x)$: it takes as input a garbled circuit ${\sf GC}$, input encoding $\sigma_x$ and outputs a value $\chi$. 
\end{itemize}
\noindent Moreover, $\left({\sf Garble},{\sf InpEncode},{\sf Decode} \right)$ satisfies the following additional properties: 
\begin{itemize}
    \item {\bf Correctness}: for every $C \in {\cal C}$, input $x$, $C(x) \leftarrow {\sf Decode}({\sf GC},\sigma_x)$, where $({\sf GC},sk) \leftarrow {\sf Garble}(1^{\secparam},C)$ and $\sigma_x \leftarrow {\sf InpEncode}(sk,x)$. 
    \item {\bf Security}: There exists a QPT simulator ${\sf Sim}$ such that for every $C \in {\cal C}$, $x \in \{0,1\}^{\ell}$, where $\ell$ is the input of $C$, QPT distinguisher $D$, the following holds: 
    $$ {\rm Pr}\left[ 1 \leftarrow D({\sf GC},\sigma_x)\ :\ ({\sf GC},sk) \leftarrow {\sf Garble}(1^{\secparam},C),\ \sigma_x \leftarrow {\sf InpEncode}(sk,x) \right]$$
    $$ - {\rm Pr} \left[ 1 \leftarrow D({\sf GC},\sigma_x)\ :\ ({\sf GC},\sigma_x) \leftarrow {\sf Sim}(1^{\secparam},C,C(x)) \right] | \leq \nu(\secparam),$$
    for some negligible function $\nu(\cdot)$. 
\end{itemize}
\end{definition}

\noindent Unlike the work of Brakerski and Yuen~\cite{brakerski2020quantum}, who also define quantum garbling schemes, we only consider quantum garbling schemes for classical circuits. 

\noindent \paragraph{Construction.} We start with a quantum pseudo one-time pad scheme, denoted by $({\sf Enc}_{\secparam},{\sf Dec}_{\secparam})$. Specifically, we consider a quantum pseudo one-time pad scheme where the message length is twice the key length. For convenience sake, we omit the subscript $\secparam$ from the algorithms. We construct a quantum garbling scheme for ${\cal C} \in {\sf P/poly}$. We assume, without loss of generality, that every gate has both fan-in and fan-out to be 2.  
\par Our construction is identical to the point-and-permute garbling scheme~\cite{BMR90}, where the length doubling pseudorandom generator is replaced by quantum pseudo one-time pad. We present the construction below. \\

\begin{enumerate}
\item ${\sf Garble}(1^{\secparam},C)$: on input the security parameter $\secparam$, circuit $C$, do the following: 
\begin{itemize}
    \item For every wire $w$ in $C$, we associate two wire labels $k_{w,0} \xleftarrow{\$} \{0,1\}^{2\secparam}$ and $k_{w,1}  \xleftarrow{\$} \{0,1\}^{2\secparam}$. For every $b \in \{0,1\}$, we interpret $k_{w,b}$ to be the concatenation of two $\secparam$-bit strings $k_{w,b}^0$ and $k_{w,b}^1$. 
    \item For every wire $w$ in $C$, we associate a random bit $r_w$. 
    \item For every gate $G$ in $C$, compute a garbled gate consisting of four entries, indexed by $\{0,1\}^2$. Let $w_1,w_2$ be the input wires of $G$ and $w_3,w_4$ be its output wires. The $(b_1,b_2)^{th}$ entry, for $b_1,b_2 \in \{0,1\}$, is of the following form: 
    $$\rho_{G}^{b_1,b_2} = {\sf Enc}\left(k_{w_1}^{b_1} \oplus k_{w_2}^{b_2}, \theta_{G,b_1,b_2} \right) $$
    where $\theta_{G,b_1,b_2}$ is defined as follows: 
    $$\bigg(k_{w_3,G(b_1 \oplus r_{w_1},b_2 \oplus r_{w_2})\oplus r_{w_3}}\ ||\ G(b_1 \oplus r_{w_1},b_2 \oplus r_{w_2})\ ||\ $$
    $$\ ||\ k_{w_4,G(b_1 \oplus r_{w_1},b_2 \oplus r_{w_2}) \oplus r_{w_4}}\ ||\ G(b_1 \oplus r_{w_1},b_2 \oplus r_{w_2}) \oplus r_{w_4} \bigg)$$
    \noindent Denote the concatenation of all ciphertexts to be the garbled table, ${\cal T}_{G}$. 
    \item Let ${\cal W}_{{\sf out}}$ consist of the output wires of $C$. Compute the translation table $\{{\cal O}_{w}\}_{w \in {\cal W}_{\sf out}}$, where ${\cal O}_w$, for every $w \in {\cal W}_{{\sf out}}$, is a mapping of the form: $k_{w,b} \rightarrow b$, for $b \in \{0,1\}$. 
    \item Output the garbled circuit ${\sf GC}=\left(\left\{ {\cal T}_G \right\}_{G \in {\cal C}}, \left\{ {\cal O}_{w} \right\}_{w \in {\cal W}_{{\sf out}}} \right)$ along with the secret key $sk=\{\left( k_{w,b},r_w \right)\}_{b \in \{0,1\},w \in {\cal W}_{{\sf in}}}$, where ${\cal W}_{{\sf in}}$ is the set of input wires of $C$. 
\end{itemize}

\item ${\sf InpEncode}\left(sk,x \right)$: on input secret key $sk=\{\left( k_{w,b},r_w \right)\}_{w \in {\cal W}_{{\sf in}}}$, input $x$, output the following:
$$\sigma_x = \{\left( k_{w,x_{\pi(w)}},r_w \oplus x_{\pi(w)} \right)\}_{w \in {\cal W}_{{\sf in}}},$$
where $\pi:{\cal W}_{{\sf in}} \rightarrow [|x|]$ is a mapping from the input wires to the bits of the input.  \\

\item ${\sf Decode}\left( {\sf GC},\sigma_x \right)$: on input garbled circuit ${\sf GC}$, input encoding $\sigma_x$, we employ the follow iterative process starting from the gates in the input layer and all the way to the output layer. 
\begin{itemize} 
\item For every $G \in {\cal C}$, with input wires $w_1,w_2$, output wires $w_3,w_4$, do the following: let $(k'_{w_1},r'_{w_1})$ and $(k'_{w_2},r'_{w_2})$ be the wire labels recovered for the gate $G$. Compute $${\sf Dec}\left((k'_{w_1})^{r'_{w_1}} \oplus (k'_{w_2})^{r'_{w_2}},\ \rho_{G}^{r'_{w_1},r'_{w_2}} \right)$$ to recover $(k'_{w_3}||r'_{w_3}||k'_{w_4}||r'_{w_4})$, where $(k'_{w_3}||r'_{w_3})$ represent the wire labels associated with the wire $w_3$ and $(k'_{w_4}||r'_{w_4})$ represent the wire labels associated with the wire $w_4$.
\item Output $\{{\cal O}_w \left( k'_{w}\right)\}_{w \in {\cal W}_{{\sf out}}}$.
\end{itemize}
\end{enumerate}

\noindent We omit the correctness and security proofs since the above scheme is identical to the point-and-permute garbling scheme of~\cite{BMR90}.

\ifllncs
  \bibliographystyle{splncs04}
  \bibliography{tcs}
\else
  \printbibliography
\fi

\ifcrypto
  \newpage

  \section*{{\bf Supplementary Material}}
\fi
\appendix

\ifcrypto

  \section{More on pseudorandom states}

  \subsection{Proof of circuit output tester lemma}
  \label{sec:test-channel-proof}
  
  In this section, we prove the circuit output tester lemma.
  \begin{proof}[Proof of \Cref{prop:test-channel}]
    
  \end{proof}
  
  \section{Security of PRFS construction}
  \label{sec:prfs-security}
  
  In this section, we complete the proof from \Cref{sec:prfs-from-prs}, showing that the construction there is secure.

  \section{Security of the commitment scheme}
  \label{sec:commitment-security}
  
  In this section, we prove that the construction from \Cref{sec:comm:cons} is secure.

\fi

\section{Further details on instantiating MPC}
\label{sec:ot-details}
In this section, we give more details about instantiating \Cref{thm:mpc:statcomm} with \Cref{def:stat:binding}.
In order to give a self-contained presentation of the proof of \Cref{thm:mpc:statcomm} with \Cref{def:stat:binding} in a succinct way, it will be convenient for us to go back and forth between the two definitions using \Cref{lem:binding-generalized,lem:binding-partial-converse}, although in principle we can prove it without talking about \Cref{def:bckm-binding} at all.
In particular, we are going to only focus on the parts of the proof of \Cref{thm:mpc:statcomm} where the statistical binding property is invoked, and inform the reader how the proof changes if \Cref{def:stat:binding} is instead used.
We refer the readers to the original work \cite{bartusek2021oneway} for the full proof as well as the definitions for the terminologies below that are irrelevant to our discussions.

The proof of \Cref{thm:mpc:statcomm} uses the statistical binding property in only two places.
\begin{enumerate}
    \item A special case of {\cite[Theorem 1]{BartusekCKM21a}}: They show how to compile a statistically binding commitment in a way that preserves the statistical binding property, but in addition satisfies an additional property that is irrelevant to our discussion. %
    \item {\cite[Theorem 2]{BartusekCKM21a}}: They show how to go from a statistical-binding commitment scheme (that is the output of step 1) to a statistical-hiding commitment scheme with another additional property that is also irrelevant to our discussion.%
\end{enumerate}

\begin{definition}
  Let $(C, R)$ be a commitment scheme.
  A commitment scheme $(C', R')$ is called committer-black-box compiled from $(C, R)$, if its commit phase satisfies the following template:
  \begin{itemize}
    \item $C'$ picks some uniformly random bits $r$ and commits to them using $(C, R)$.
    \item $R'$ randomly picks some of the commitments from the previous step to open, and aborts if $R'$ thinks $C'$ is malicious.
    \item $C'$ computes a (randomized) function that only depends on $r$ and the committed bit $b$, and sends the output to $R'$.
  \end{itemize}
  The function mapping $(C, R)$ to $(C', R')$ is called the committer-black-box compiler.
\end{definition}

We can verify that the compiler~\cite[Fig. 1]{BartusekCKM21a} in the first step indeed satisfies this definition by staring at the construction.
Since the statistical binding property is only invoked to prove the compiled scheme also satisfies statistical binding, it suffices to prove the following lemma.
\begin{lemma}
  If committer-black-box compiler preserves \Cref{def:bckm-binding} for any $(C, R)$, then it also preserves \Cref{def:stat:binding} for any $(C, R)$.
\end{lemma}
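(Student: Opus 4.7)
The plan is to chain together \Cref{lem:binding-generalized} and \Cref{lem:binding-partial-converse} with the hypothesis on the compiler. Given a commitment scheme $(C, R)$ that satisfies \Cref{def:stat:binding}, by \Cref{lem:binding-partial-converse} I obtain an (inefficient) receiver $\tilde R$ such that $(C, \tilde R)$ satisfies \Cref{def:bckm-binding} and is statistically indistinguishable from $(C, R)$ against malicious committers, where $\tilde R$ differs from $R$ only by applying the \Cref{def:stat:binding}-extractor $\mathcal E$ on the receiver's internal register at the end of the commit phase. Applying the committer-black-box compiler to $(C, \tilde R)$ produces a scheme $(C', R'')$. Since the compiler is committer-black-box, $C'$ is the same as in the compilation of $(C, R)$ and $R''$ is literally $R'$ with each internal invocation of $R$ replaced by $\tilde R$. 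By the hypothesis of the lemma $(C', R'')$ satisfies \Cref{def:bckm-binding}, hence by \Cref{lem:binding-generalized} it satisfies \Cref{def:stat:binding} with some extractor $\mathcal E''$.

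To transfer this to the actual compiled scheme $(C', R')$, I would define its extractor $\mathcal E'$ by composition: given the state of $R'$ after the commit phase, first apply $\mathcal E$ on each unopened commitment block to simulate the additional operations that $R''$ would have performed, and then apply $\mathcal E''$ to the resulting state to extract the bit. Since $\mathcal E$ and $\mathcal E''$ can be inefficient, so can $\mathcal E'$.

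It remains to verify that with this $\mathcal E'$ the real and ideal experiments for $(C', R')$ are statistically close for every (possibly unbounded) malicious committer $C'^*$. I would use a three-hybrid argument. Hybrid 0 is the real experiment for $(C', R')$. Hybrid 1 replaces $R'$ by $R''$ throughout and is obtained from Hybrid 0 by a standard hybrid argument over the polynomially many parallel uses of $R$ inside $R'$, each step losing at most the indistinguishability advantage of $R$ versus $\tilde R$ against an appropriately defined unbounded malicious committer. Hybrid 2 is the ideal experiment for $(C', R'')$ with extractor $\mathcal E''$ and is statistically close to Hybrid 1 by \Cref{def:stat:binding} applied to $(C', R'')$. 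Finally, Hybrid 2 coincides exactly with the ideal experiment for $(C', R')$ with extractor $\mathcal E'$, because pushing the $\mathcal E$-extractions from ``during the commit phase of $R''$'' to ``at the beginning of $\mathcal E'$'' only reorders operations that act solely on the receiver's internal register and therefore does not change any observable of the joint experiment.

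The main obstacle I expect is carefully justifying the step from Hybrid 0 to Hybrid 1, namely the parallel-composition argument that reduces indistinguishability of $(C', R')$ and $(C', R'')$ to the single-instance indistinguishability of $R$ versus $\tilde R$. One has to construct, for each sub-hybrid, a single unbounded malicious committer that simulates the rest of the compiled protocol (the other parallel commitments, the cut-and-choose openings, and the final function message) so that the experiment fits the syntactic template of the indistinguishability definition; the outcome channel of that malicious committer then produces the joint distribution on $(\tau_{C^*}, \mu)$ used in the comparison. The fact that the compiler is committer-black-box and the indistinguishability definition allows arbitrary non-uniform advice and an arbitrary output channel makes this reduction possible, but it is the step that requires the most careful bookkeeping.
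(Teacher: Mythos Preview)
Your overall strategy is the same as the paper's: pass to $(C,\tilde R)$ via \Cref{lem:binding-partial-converse}, apply the compiler, invoke the hypothesis together with \Cref{lem:binding-generalized} on the compiled scheme, and then transfer the conclusion back to $(C',R')$. Your chain Hybrid~0 $\to$ Hybrid~1 $\to$ Hybrid~2 matches the paper, and your identification of the Hybrid~0 $\to$ Hybrid~1 parallel-composition reduction as the step needing the most bookkeeping is correct.

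The gap is in your final step, the claim that Hybrid~2 ``coincides exactly'' with the ideal experiment for $(C',R')$ under your extractor $\mathcal E'$. In the committer-black-box template some sub-commitments are \emph{opened} during the compiled commit phase. In $R''$ (your Hybrid~2), $\tilde R$ applies $\mathcal E$ to every sub-commitment at the end of its sub-commit phase, hence \emph{before} any opening; your $\mathcal E'$, by contrast, applies $\mathcal E$ only to the \emph{unopened} blocks. For the opened blocks the receiver's verification acts on the very register $\mathcal E$ would act on, so this is not ``only a reordering of operations on the receiver's internal register'' --- the opening does not commute with $\mathcal E$. Moreover, $\tilde R$ stores its extracted bit in a separate register, so the state fed to $\mathcal E''$ in Hybrid~2 contains extracted bits even for the opened sub-commitments, and the $\mathcal E''$ coming from \Cref{lem:binding-generalized} may depend on them; your $\mathcal E'$ does not supply those bits. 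The paper's fix is to have the extractor for $(C',R')$ apply $\mathcal E$ across \emph{all} sub-commitments (opened or not), and then to spend one more hybrid to argue that for the opened ones, applying $\mathcal E$ before versus after the opening is statistically indistinguishable --- an almost-commutation statement derived again from \Cref{def:stat:binding}. So the two ideal worlds are only statistically close, not literally equal, and you are missing exactly that last hybrid.
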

\begin{proof}
  Let $(C, R)$ be a commitment scheme that satisfies \Cref{def:stat:binding}.
  Let $(C, \tilde R)$ be the commitment scheme with an inefficient receiver corresponding to \Cref{lem:binding-partial-converse}.
  We establish that $(C', R')$ satisfies \Cref{def:stat:binding} for any malicious $C'$ via the following sequence of hybrids.
  We one by one replace the invocations of $(C, R)$ with $(C, \tilde R)$.
  Each change is statistically indistinguishable to $C'$ by statistical indistinguishability of $R$ and $\tilde R$.
  Denote the scheme after the change $(C', \bar R)$ (since we only make changes to the honest receiver).
  
  Since $(C, \tilde R)$ satisfies \Cref{def:bckm-binding}, by the premise of the problem, the scheme after the change $(C', \bar R)$, also satisfies \Cref{def:bckm-binding}.
  Invoking \Cref{lem:binding-generalized}, $(C', \bar R)$ also satisfies \Cref{def:stat:binding}.
  This concludes the proof that the real world of $(C', R')$ is statistically indistinguishable from the ideal world of $(C', \bar R)$.
  
  We now construct the extractor for $(C', R')$ and thus define the ideal world of $(C', R')$.
  Without loss of generality, we assume the (non-interactive) opening phase of $(C, R)$, the receiver simply performs a three-outcome general measurement via the canonical square root .
  The extractor for $(C', R')$ is simply to apply extractor on the receiver's state across all the commitments (opened or unopened), and then compute the extracted bit from the outputs of the extractors by \Cref{lem:binding-generalized}.
  We can see that the only difference between the ideal worlds of $(C', R')$ and $(C', \bar R)$ is that for opened commitments, the extractor is applied before the opening phase in $R'$, whereas it is applied after the opening phase in $\bar R$.
  By \Cref{def:stat:binding}, we establish that these two ideal worlds are statistically indistinguishable\footnote{To see this, let the POVM of extractor be $\{\mathcal E_0, \mathcal E_1, \mathcal E_\bot\}$ and let the receiver's POVM be $\{R_0, R_1, R_\bot\}$, then \Cref{def:stat:binding} implies that $R_{1 - b} \preccurlyeq I \otimes (\mathcal E_b + \varepsilon I)$ for $b = 0, 1$ and some negligible quantity $\varepsilon$. Therefore, these two POVMs almost commute when they are implemented by the canonical square root.}, and therefore proving the lemma.
\end{proof}

\begin{definition}
  Let $(C, R)$ be a commitment scheme.
  A commitment scheme $(C', R')$ is called receiver-black-box compiled from $(C, R)$, if its commit phase satisfies the following template:
  \begin{itemize}
    \item $C'$ samples some random bits $r$ and sends a quantum message to $R'$ depending on $r$.
    \item $R'$ commits to a number of bits that is computed from the first message using $(C, R)$.
    \item $C'$ chooses to randomly open some of the commitments, and aborts if $C'$ thinks $R'$ is malicious.
    \item $C'$ computes a (randomized) function that only depends on $r$ and the committed bit $b$, and sends the output to $R'$.
  \end{itemize}
  The function mapping $(C, R)$ to $(C', R')$ is called the receiver-black-box compiler.
\end{definition}
We can also verify that the compiler~\cite[Fig. 3]{BartusekCKM21a} in the second step indeed satisfies this definition by staring at the construction.
Since the statistical binding property is only invoked here to prove the compiled scheme also satisfies statistical hiding, it suffices to prove the following lemma.

\begin{lemma}
  If receiver-black-box compiler satisfies that the compiled scheme satisfies statistical hiding for any $(C, R)$ that satisfies \Cref{def:bckm-binding}, then it also satisfies that the compiled scheme satisfies statistical hiding for any $(C, R)$ that satisfies \Cref{def:stat:binding}.
\end{lemma}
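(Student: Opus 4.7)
The strategy mirrors the proof of the previous lemma. Given $(C, R)$ satisfying \Cref{def:stat:binding}, I would invoke \Cref{lem:binding-partial-converse} to obtain a commitment scheme $(C, \tilde R)$ with an (inefficient) receiver that satisfies \Cref{def:bckm-binding}, and such that $(C, R)$ and $(C, \tilde R)$ are statistically indistinguishable against malicious committers. Applying the premise to $(C, \tilde R)$ will yield statistical hiding of the compilation built from $(C, \tilde R)$, and I will transfer this to the compilation built from $(C, R)$ by a hybrid argument that swaps inner-commitment instances one at a time.

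Let $(C', R')$ be the receiver-black-box compilation of $(C, R)$ and let $(\tilde C', R')$ be that of $(C, \tilde R)$. Observe that in the template, the outer honest receiver $R'$ runs the inner committer algorithm $C$, while the outer honest committer $C'$ runs the inner receiver algorithm; hence swapping $R$ for $\tilde R$ modifies only the outer committer (yielding $\tilde C'$), while $R'$ remains syntactically identical. By the premise, $(\tilde C', R')$ is statistically hiding, i.e., every unbounded malicious outer receiver $R'^*$ has negligible advantage in distinguishing $\tilde C'(0)$ from $\tilde C'(1)$.

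To conclude statistical hiding of $(C', R')$, by the triangle inequality it suffices to show that for each $b \in \{0,1\}$, the view of any unbounded $R'^*$ when interacting with $C'(b)$ is statistically close to its view when interacting with $\tilde C'(b)$. The only difference between these two experiments is which inner receiver algorithm is used inside each of the (polynomially many) inner commitment instances. I set up a standard hybrid chain $H_0, H_1, \ldots, H_N$ where $H_i$ uses $\tilde R$ for the first $i$ inner instances and $R$ for the rest. The step from $H_{i-1}$ to $H_i$ is bounded by constructing a malicious inner committer $\mathcal A$ that absorbs $R'^*$ together with the simulation of the outer protocol around the $i$-th inner commitment, and forwards messages between $R'^*$ and the external inner receiver. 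The statistical-indistinguishability clause of \Cref{lem:binding-partial-converse} applies to $\mathcal A$ (the definition explicitly permits non-uniform, unbounded adversaries with arbitrary output channels), giving a negligible gap between $H_{i-1}$ and $H_i$; a union bound over polynomially many hybrids yields the desired closeness.

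The main subtlety is verifying that $\mathcal A$'s construction genuinely fits the template of the malicious-committer indistinguishability experiment in \Cref{lem:binding-partial-converse}: the parts of the outer protocol preceding the $i$-th inner commitment (including $R'^*$'s first-message preparation and the simulated execution of the first $i-1$ inner instances) must be encoded into $\mathcal A$'s non-uniform advice state, and the parts afterward (the remaining inner instances, $C'$'s opening/abort decisions and final message, and the ultimate distinguisher's processing) must be encoded as $\mathcal A$'s output channel. Because all of these simulated components need only be well-defined and not computationally efficient, this poses no obstruction, and the reduction goes through to give the claimed statistical hiding of $(C', R')$.
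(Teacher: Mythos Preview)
Your proposal is correct and follows essentially the same approach as the paper: invoke \Cref{lem:binding-partial-converse} to pass to $(C,\tilde R)$, observe that in the receiver-black-box template this swap affects only the outer committer, use a one-by-one hybrid over the inner commitment instances (reducing each step to the malicious-committer indistinguishability of $R$ versus $\tilde R$), apply the premise to the compilation of $(C,\tilde R)$, and finish by the triangle inequality. Your write-up is in fact more explicit than the paper's about constructing the reduction adversary $\mathcal A$ and checking that it fits the experiment template of \Cref{lem:binding-partial-converse}.
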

\begin{proof}
  Let $(C, R)$ be a commitment scheme that satisfies \Cref{def:stat:binding}, and let $(C', R')$ be the compiled scheme.
  We need to prove that for any malicious receiver $R'$, his view when the committed bit is 0 is statistically close to that when the committed bit is 1.
  We start with the view where the committer commits to 0.
  Let $(C, \tilde R)$ be the commitment scheme with an inefficient receiver corresponding to \Cref{lem:binding-partial-converse}.
  Similar to the proof before, we first change each invocation of $(C, R)$ to $(C, \tilde R)$.
  Each change is statistically indistinguishable to the malicious receiver $R'$ by statistical indistinguishability of $R$ and $\tilde R$.
  Denote the scheme after the change $(\bar C, R')$ (since we only make chagnes to the honest committer).
  
  Since $(C, \tilde R)$ satisfies \Cref{def:bckm-binding}, by the premise of the problem, the scheme after the change $(C', \bar R)$ satisfies statistical hiding.
  Therefore, we can change the commitment bit from 0 to 1, and the change would be statistically indistinguishable by the statistical hiding property.
  We conclude the proof by undoing the changes one by one in the previous paragraph, and we arrive at the view where the committer commits to 1.
\end{proof}

Since the rest of the proof of \Cref{thm:mpc:statcomm} does not depend on the definition of statistical binding and thus we recover \Cref{thm:mpc:statcomm} with our statistical binding property.

\end{document}